\theoremstyle{plain}
\newtheorem{theorem}{Theorem}[section]
\newtheorem{proposition}[theorem]{Proposition}
\newtheorem{lemma}[theorem]{Lemma}
\theoremstyle{definition}
\newtheorem{definition}[theorem]{Definition}
\newtheorem{example}[theorem]{Example}
\newcommand\restr[2]{{
  \left.\kern-\nulldelimiterspace 
  #1 
  \right|_{#2} 
}}
\newcommand{\R}{\mathbb{R}}
\renewcommand{\d}{\mathrm{d}}
\renewcommand{\dd}{\mathrm{d}}
\newcommand{\ee}{\text{e}}
\def \ii{{\rm i}}
\newcommand{\Cinfty}{\mathscr{C}^\infty}
\newcommand{\T}{\mathrm{T}}
\newcommand{\cT}{\mathrm{T}^\ast}
\newcommand{\Id}{\mathrm{Id}}
\DeclareMathOperator{\Ad}{Ad}
\newcommand{\Czero}{\mathscr{C}^0}
\newcommand{\Cone}{\mathscr{C}^1}
\newcommand*{\inn}[1]{\iota_{#1}}
\newcommand{\Lie}{\mathscr{L}}
\DeclareMathOperator{\Ima}{Im}
\DeclareMathAlphabet{\mathpzc}{OT1}{pzc}{m}{it}
\def\d{\mathrm{d}}
\DeclareMathOperator{\Ham}{Ham}
\DeclareMathOperator{\Hamloc}{Ham_{loc}}
\newcommand\xqed[1]{%
    \leavevmode\unskip\penalty9999 \hbox{}\nobreak\hfill
	\quad\hbox{#1}}
\newcommand\demo{\xqed{$\triangle$}}
\begin{document}

\parskip=3pt


\vspace{5em}

{\huge\sffamily\raggedright
\begin{spacing}{1.1}
    A symplectic approach to Schr\"odinger equations in the infinite-dimensional unbounded setting
\end{spacing}
}
\vspace{2em}

{\large\raggedright
    \today
}

\vspace{3em}

{\Large\raggedright\sffamily
    Javier de Lucas
}\vspace{1mm}\newline
{\raggedright
    Centre de Recherches Math\'ematiques, Universit\'e de Montr\'eal,\\ Pavillon André-Aisenstadt, 2920, chemin de la Tour,  Montréal (Québec) Canada  H3T 1J4.\\
    \medskip
    
    Department of Mathematical Methods in Physics, University of Warsaw, \\ ul. Pasteura 5, 02-093, Warszawa, Poland.\\
    e-mail: \href{mailto:javier.de.lucas@fuw.edu.pl}{javier.de.lucas@fuw.edu.pl} --- orcid: \href{https://orcid.org/0000-0001-8643-144X}{0000-0001-8643-144X}
}

\bigskip

{\Large\raggedright\sffamily
    Julia Lange
}\vspace{1mm}\newline
{\raggedright
    Department of Mathematical Methods in Physics, University of Warsaw, \\ul. Pasteura 5, 02-093, Warszawa, Poland.\\
    e-mail: \href{mailto:j.lange2@uw.edu.pl}{j.lange2@uw.edu.pl} --- orcid: \href{https://orcid.org/000-0001-6516-0839}{000-0001-6516-0839}
}

\bigskip

{\Large\raggedright\sffamily
    Xavier Rivas
}\vspace{1mm}\newline
{\raggedright
    Escuela Superior de Ingeniería y Tecnología, Universidad Internacional de La Rioja,\\Av. de la Paz, 137, 26006 Logroño, La Rioja, Spain.\\
    e-mail: \href{mailto:xavier.rivas@unir.net}{xavier.rivas@unir.net} --- orcid: \href{https://orcid.org/0000-0002-4175-5157}{0000-0002-4175-5157}
}

\vspace{3em}

{\large\bf\raggedright
    Abstract
}\vspace{1mm}\newline
{\raggedright
    By using the theory of analytic vectors and manifolds modelled on normed spaces, we provide a rigorous symplectic differential geometric approach to $t$-dependent Schr\"odinger equations on separable (possibly infinite-dimensional) Hilbert spaces determined by unbounded $t$-dependent self-adjoint Hamiltonians satisfying a technical condition. As an application, the Marsden--Weinstein reduction procedure is employed to map above-mentioned $t$-dependent Schr\"odinger equations onto their projective spaces. Other applications of physical and mathematical relevance are also analysed.
}
\bigskip

{\large\bf\raggedright
    Keywords:}
analytic vector, infinite-dimensional symplectic manifold, Marsden--Weinstein reduction, normed space, projective Schr\"odinger equation, unbounded operator.
\medskip

{\large\bf\raggedright
    MSC2020 codes:
}
34A26, 34A34 (primary) 17B66, 53Z05 (secondary).

\bigskip


\newpage

{\setcounter{tocdepth}{2}
\def\baselinestretch{1}
\small
\def\addvspace#1{\vskip 1pt}
\parskip 0pt plus 0.1mm
\tableofcontents
}

\pagestyle{myheadings}

\markright{{\rm
    J. de Lucas, J. Lange, and X. Rivas
}
- 
{\sl
    A symplectic approach to unbounded Schr\"odinger equations
}}

\section{Introduction}\label{ch:3}

The study of finite-dimensional symplectic manifolds has proven to be very fruitful in the description of mathematical and physical theories \cite{AM_78,Arn_89,Can_01,GS_96}. Among its applications, one finds the analysis of problems appearing in classical and quantum mechanics (see \cite{Arn_89,CCM_07,GS_96} and references therein). 

The analysis of symplectic manifolds modelled on (possibly infinite-dimensional) Banach spaces, the hereafter called {\it Banach symplectic manifolds}, is an interesting topic of research \cite{AMR_88,CL_09,CMP_90,CMP_90a,GKMS_18, Mar_68, Mar_72, OR_03,Tum_20}. Banach symplectic manifolds appear in relevant physical topics such as Schr\"odinger equations \cite{GKMS_18,Mar_68}, non-linear Schr\"odinger equations  \cite{CMP_90,CMP_90a,Fab_15}, and fluid mechanics \cite{Mar_68}.

The theory of Banach symplectic manifolds emerges as a generalisation of finite-dimensional symplectic geometry obtained by addressing appropriate topological and analytical issues due to the infinite-dimensional nature of general Banach spaces. Banach symplectic manifolds differ significantly from their finite-dimensional counterparts, e.g. there exists no immediate analogue of the Darboux theorem \cite{Mar_72}.  
Relevantly, quantum and classical problems related to infinite-dimensional Banach symplectic manifolds frequently require the study of non-smooth functions \cite{Kuk_95,Mar_68}.

Schr\"{o}dinger quantum mechanics represents a relevant field of application of symplectic geometry \cite{CCJL_19,CCM_07,CL_09, CL_11, CLR_09,CR_03,Mo_01}. Despite that, just a bunch of works study rigorously the differential geometric properties of $t$-dependent Schr\"{o}dinger equations on infinite-dimensional Hilbert manifolds (see \cite{AT_99,GKMS_18,Mar_68} and references therein). Most geometric works in the literature only deal with such Schr\"odinger equations on finite-dimensional Hilbert spaces \cite{CCJL_19,CCM_07,CL_09,CL_11} or on infinite-dimensional ones provided they are determined by a $t$-dependent bounded Hamiltonian operator. Eventually,  other additional simplifications are used \cite{AT_99,CDIM_17,CMP_90,CMP_90a,GKMS_18}. This significantly restricts the field of application of differential geometry in quantum mechanics. Instead, the development of an infinite-dimensional differential geometric approach allowing for the extension of previous ideas to $t$-dependent unbounded operators is extremely appealing. 

The lack of the searched extension is due, in part, to the lack of smoothness of the structures used to describe unbounded operators and other related structures, the problems to define differential geometric objects, like the commutator of operators, etcetera  \cite{Mar_68}. Known methods to deal with these problems rely on using pure functional analysis techniques with no differential geometric methods or the justification of technical assumptions and results on physical grounds. Works using differential geometry techniques in quantum mechanical models in infinite-dimensional manifolds do not generally provide a detailed explanation of the differential geometric constructions they use. Indeed, as far as we know, there is no rigorous differential geometric work dealing effectively with unbounded operators (cf. \cite{Gos_06}). In fact, some previous works treating unbounded operators may use function spaces, like the Schwarz space, that are not naturally normed \cite{Gos_11}, and are extremely difficult to describe with differential geometric techniques. It is worth noting that several research groups  may have their own conventions and assumptions. For instance, some researchers admit symplectic forms on infinite-dimensional manifolds to have a non-trivial kernel (see \cite[Remark 2.1]{Mo_01}), which goes against the classical definition, which we here follow.

This work applies infinite-dimensional geometric and functional analysis techniques to the rigorous geometrical study of $t$-dependent Schr\"odinger equations determined by (possibly unbounded) self-adjoint Hamiltonian operators on separable Hilbert spaces. Separable Hilbert spaces occur in most physical problems, and separability ensures the existence of a countable set of coordinates \cite{Hal_13}. As a first new contribution, we have applied the theory of normed spaces, which admits a relatively simple differential geometric analysis and extensions to the Banach case, and analytic vectors \cite{Car_60,FSSS_72,Goo_69,Nel_59} to improve the standard geometric approach given in \cite{Mar_68} to deal with unbounded Hamiltonian operators. In particular, we mostly restrict ourselves to $t$-dependent self-adjoint Hamiltonian operators taking values in a Lie algebra of operators with an appropriate common domain of analytic vectors. Although problems we are dealing with are not generally smooth, we will restrict ourselves to appropriate subspaces where problems are smooth enough to apply differential geometry. It is also worth noting that many of our practical applications restrict to the so-called strong Banach symplectic manifolds, where a Darboux theorem exists \cite{Wei_71} and other differential geometric techniques are applicable.

Our approach seems to be general enough to provide a rigorous theory while studying interesting physical systems, but it requires using some functional analysis techniques \cite{Mar_68}. We have relied as much as possible in differential geometric techniques and simplifying technical assumptions. In particular, self-adjoint operators and their mean values are described as objects on manifolds modelled on normed spaces steaming from the manifold structure of Hilbert spaces.  
Due to the fact that functional analysis is not well-known for most researchers working on differential geometric methods in physics, this work provides a brief introduction to the results on functional analysis and infinite-dimensional manifolds to be used. 

Our work solves some technical difficulties to ensure the standard differential geometric approach of the examples under study and other similar ones. For instance, we provide several technical details in the theory of analytic vectors related to the metaplectic representation, angular momentum operators, and other physical examples, which seem to be not developed elsewhere. Once these details are given, the further application of differential geometric techniques is simpler. In particular, our approach shows that the case of physical unbounded observables and other operators admitting a basis of eigenfunctions (without being necessarily symmetric), may be studied throughout our differential geometric techniques.

As an application of our techniques, the above-mentioned $t$-dependent Schr\"odinger equations are projected onto (possibly infinite-dimensional) projective spaces using a Marsden--Weinstein reduction. Our methods represent an appropriate framework, for instance, for the generalisation of the standard differential geometric works by G. Marmo and collaborators on quantum mechanics (see e.g. \cite{CCM_07,CL_11,CIMM_15,CMMGM_16,GKMS_18}) to the analysis of quantum mechanical problems on infinite-dimensional Hilbert spaces not determined by bounded operators. 

Let us detail the structure of this work. Section \ref{sec:2} is devoted to the basic functional analysis techniques and notations to be used hereafter. Differential geometry modelled on normed spaces \cite{Die_60,FB_66,Mar_68} and Banach spaces \cite{AMR_88,Lee_12} are discussed in Section \ref{sec:3}. In Section \ref{sec:4}, the theory of Banach and normed symplectic manifolds is briefly addressed. After studying weak and strong Banach symplectic manifolds, we survey their related Darboux theorems, the canonical one- and two-forms on cotangent bundles to Banach manifolds, and we finally introduce the Marsden--Weinstein reduction theorem. Section \ref{sec:5} provides a symplectic approach to infinite-dimensional separable Hilbert spaces. Section \ref{sec:6} focuses on vector fields induced by unbounded self-adjoint, observable, and diagonalisable operators, while Section \ref{sec:7} analyses Hamiltonian functions for quantum models. Section \ref{sec:8} studies a class of $t$-dependent Hamiltonian systems related to the $t$-dependent Schr\"odinger equations appearing in this work. 
In Section \ref{sec:9}, a Marsden--Weinstein reduction procedure is used to project the above-mentioned $t$-dependent Schr\"odinger equations onto an infinite-dimensional projective space. We also consider $t$-dependent Schr\"odinger equations related to unbounded $t$-dependent Hamiltonian operators as a special type of Hamiltonian systems on Hilbert manifolds.

\section{Fundamentals on operators}\label{sec:2}

Let us discuss several basic facts on functional analysis to be applied hereafter, e.g. in the mathematical formulation of Schr\"odinger quantum mechanics with unbounded operators (see \cite{AMR_88,Con_90,Die_60,FHHMZ_11,Hal_13} for details). 
 
We call  {\it operator} a linear map $A$ defined on a dense subspace $D(A)$ of a normed space $\mathcal{X}_1$ taking values on a normed space $\mathcal{X}_2$ of the form $A:D(A)\subset \mathcal{X}_1\rightarrow\mathcal{X}_2$. Operators are sometimes called {\it unbounded operators} to highlight that they do not need to be continuous.  It is worth stressing that defining an operator amounts to giving both its domain and its value on it. 
If two operators $A_i:D(A_i)\subset \mathcal{X}_1\rightarrow \mathcal{X}_2$, with $i=1,2$, satisfy that $D(A_1)\subset D(A_2)$ while $A_1$ and $A_2$ take the same values in $D(A_1)$, then we say that  $A_2$ is an {\it extension of $A_1$}. Two operators are equal, and we write $A_1=A_2$, when $A_1$ and $A_2$ have the same domain and they take the same values on it. 

To define the sum and composition of operators with possibly different domains, we proceed as follows. Let $A_i:D(A_i)\subset \mathcal{X}_1\rightarrow \mathcal{X}_2$, with $i=1,2$, be two operators. If  $D(A_1)\cap D(A_2)$ is dense in $\mathcal{X}_1$, then  $A_1+A_2$ is the operator on $D(A_1)\cap D(A_2)$ whose value at $\psi\in D(A_1)\cap D(A_2)$ is $A_1\psi+A_2\psi$. Similarly, given an operator $B:D(B)\subset \mathcal{X}_2\rightarrow \mathcal{X}_3$ such that $D_{A_1}=A_1^{-1}({\rm Im}(A_1)\cap D(B))$ is dense in $\mathcal{X}_1$, then we can define the composition $B\circ A_1:D_{A_1}\subset \mathcal{X}_1\rightarrow \mathcal{X}_3$. 

Let $H:D(H)\subset \mathcal{H}\rightarrow \mathcal{H}$ be an operator on a Hilbert space $\mathcal{H}$. Consider the associated operator $H_{\psi}: \phi\in D(H) \subset \mathcal{H} \mapsto \langle \psi \vert H \phi \rangle\in \mathbb{C}$. Whenever $H_{\psi}$ is  continuous on $D(H)$, the Riesz representation theorem \cite{Con_90} ensures that there exists $\psi_0 \in \mathcal{H}$ such that $H_{\psi}(\phi)= \langle \psi_0 \vert \phi \rangle$ for every $\phi\in D(H)$. If $H_{\psi}$ is continuous for every $\psi$ in a dense subset $D(H^\dagger)$ of $ \mathcal{H}$, then one can define the \textit{adjoint} of $H$, denoted by $H^{\dagger}$, as the operator  $H^{\dagger}:\psi\in D(H^\dagger)\subset \mathcal{H}\mapsto \psi_0\in \mathcal{H}$. In other words, $\langle \psi | H\phi\rangle = \langle H^\dagger\psi |\phi\rangle$ for every $\psi\in D(H^\dagger)$ and $\phi\in D(H)$.

Let us now detail certain relevant types of unbounded operators.
\begin{definition}
	An unbounded operator $H: D(H) \subset \mathcal{H}\rightarrow \mathcal{H}$ is called:
	\begin{enumerate}[a)]
		\item \textit{symmetric}: if $\langle \psi \vert H \phi \rangle = \langle H \psi \vert \phi \rangle$ for all $\phi, \psi \in D(H)$,
				\item \textit{skew-symmetric}: if $\langle \psi \vert H \phi \rangle = -\langle H \psi \vert \phi \rangle$ for all $\phi, \psi \in D(H)$,
		\item \textit{self-adjoint}: if $H^{\dagger}=H$,
				\item \textit{skew-self-adjoint}: if  $H^{\dagger}=-H$,
		\item \textit{essentially self-adjoint}: if $H$ admits a unique self-adjoint extension,
  \item \textit{observable}: if $H$ is symmetric and admits a basis of eigenvalues.
	\end{enumerate}
\end{definition}

Let us briefly comment several facts. If $H$ is symmetric, then the equality $\langle  \psi|H\phi\rangle=\langle H\psi|\phi\rangle$ for $\psi,\phi\in D(H)$ ensures that $H_\psi$ is continuous for every $\psi\in D(H)$ and  $ D(H^\dagger)\supset D(H)$. Hence, $D(H^\dagger)$ is dense in $\mathcal{H}$ and, consequently,  $H^\dagger$ is a well-defined operator. Moreover, 
$H^{\dagger}\big\vert_{D(H)}= H$, where $H^{\dagger}\big\vert_{D(H)}$ is the restriction of $H^\dagger$ to $D(H)$,  and $H^{\dagger}$ is therefore an  extension of $H$. On the other hand, if $H$ is symmetric and $D(H^{\dagger})=D(H)$, then $H^{\dagger}=H$ and vice versa. 
An analogue discussion can be applied to the case of $H$ being a skew-symmetric operator.
\begin{definition}\label{Def::CloOpe}
	Let $A: D(A) \subset \mathcal {X} \rightarrow \mathcal{Y}$ be an operator between normed spaces. We say that $A$ is open if the image of every open set by $A$ is open. The operator $A$ is said to be \textit{closed} if its graph is closed in $\mathcal{X}\times \mathcal{Y}$ relative to the product topology. 
\end{definition}

Sometimes in the literature, a map whose graph is closed is said to be \textit{graph-closed}, and the term \textit{closed} is saved for those operators such that $A(C)$ is closed whenever $C$ is closed. However, we will stick to the terminology established in Definition \ref{Def::CloOpe}.

\begin{theorem} {\bf (Closed graph theorem \cite{Hal_13})} Let  $A: \mathcal{X}\rightarrow \mathcal{Y}$ be an operator between Banach spaces. The operator $A$ is continuous if and only if its graph is closed in $\mathcal{X}\times \mathcal{Y}$.
\end{theorem}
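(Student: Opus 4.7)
The plan is to handle the two implications separately, with the forward direction being immediate and the backward direction resting on the open mapping theorem.

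For the forward implication, I would argue that if $A$ is continuous then for any sequence $(x_n, Ax_n)$ in the graph converging to some $(x,y)\in\mathcal{X}\times\mathcal{Y}$, we have $x_n\to x$ and $Ax_n\to y$; continuity yields $Ax_n\to Ax$, so $y=Ax$ by uniqueness of limits in a Hausdorff space, and therefore $(x,y)$ lies in the graph. This direction in fact needs no completeness assumption.

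For the converse, endow $\mathcal{X}\times\mathcal{Y}$ with the norm $\|(x,y)\|:=\|x\|_{\mathcal{X}}+\|y\|_{\mathcal{Y}}$, which makes it a Banach space. Write $G=\{(x,Ax):x\in\mathcal{X}\}$ for the graph of $A$; this is a linear subspace of $\mathcal{X}\times\mathcal{Y}$, and by hypothesis it is closed, hence itself a Banach space in the restricted norm. Next I would consider the two natural projections $\pi_1\colon G\to\mathcal{X}$ and $\pi_2\colon G\to\mathcal{Y}$, both of which are linear and bounded (each with operator norm at most $1$). Crucially, because $A$ is defined on all of $\mathcal{X}$, $\pi_1$ is a continuous linear bijection from the Banach space $G$ onto the Banach space $\mathcal{X}$. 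The open mapping theorem (in its Banach inverse mapping form) then gives that $\pi_1^{-1}\colon \mathcal{X}\to G$ is continuous. Finally, writing $A=\pi_2\circ\pi_1^{-1}$ exhibits $A$ as a composition of continuous maps, so $A$ is continuous.

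The main obstacle is, as usual, the converse direction, which genuinely requires both $\mathcal{X}$ and $\mathcal{Y}$ to be Banach: completeness is used to promote the closed subspace $G$ to a Banach space and to invoke the open mapping theorem, itself a consequence of the Baire category theorem. I would simply cite this result rather than reproving it, since it is a standard ingredient already available in any of the functional-analytic references listed at the start of this section.
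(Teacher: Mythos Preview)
Your argument is correct and is the standard proof of the closed graph theorem via the open mapping theorem. However, note that the paper does not actually supply a proof of this statement: it is quoted as a named theorem with a citation to \cite{Hal_13} and no proof is given, since it is used purely as background functional-analytic input. So there is nothing to compare against beyond the reference, and your write-up would serve perfectly well as a self-contained justification if one were desired.
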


If the $A: \mathcal{X}\rightarrow \mathcal{Y}$ above is continuous, then the closed graph theorem yields that $A$ is closed. Nevertheless, an operator defined on a non-closed domain may be non-continuous and closed simultaneously. For instance, the derivative $\partial:\Cone[0,1]\subset \Czero[0,1]\rightarrow \Czero[0,1]$, where $\Cone[0,1]$ and $\Czero[0,1]$ stand for the spaces of differentiable with continuous derivative and continuous functions  on $[0,1]$, respectively, is closed but not continuous (see \cite{Yos_95}). More generally, every self-adjoint operator is closed \cite{Yos_95}. Then, if $H$ is a symmetric operator such that $D(H)=\mathcal{H}$, then $H$ is bounded. 

\begin{theorem} {\bf (Open mapping theorem \cite[Theorem 4.10]{FHHMZ_11})} If $A:\mathcal{X}\rightarrow \mathcal{Y}$ is a surjective continuous operator between Banach spaces, then $A$ is an open map.
\end{theorem}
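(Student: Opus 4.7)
The plan is to prove the open mapping theorem via the classical Baire category argument, in three stages. First, I will reduce openness to the statement that $A$ sends some open ball around $0$ in $\mathcal{X}$ to a set containing an open ball around $0$ in $\mathcal{Y}$. Indeed, for any open $U\subset\mathcal{X}$ and $x\in U$, choose $r>0$ with $x+rB_{\mathcal{X}}\subset U$, where $B_{\mathcal{X}}$ denotes the open unit ball; by linearity $A(U)\supset Ax+rA(B_{\mathcal{X}})$, so it suffices to show that $A(B_{\mathcal{X}})$ contains a neighborhood of $0\in\mathcal{Y}$.

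Second, I would exploit Baire to extract a first, weaker statement: there exists $\delta>0$ with $\overline{A(B_{\mathcal{X}})}\supset \delta B_{\mathcal{Y}}$. Since $A$ is surjective,
\begin{equation}
\mathcal{Y}=\bigcup_{n\in\mathbb{N}}A(nB_{\mathcal{X}})=\bigcup_{n\in\mathbb{N}}\overline{A(nB_{\mathcal{X}})},
\end{equation}
so by the Baire category theorem applied to the complete metric space $\mathcal{Y}$, some $\overline{A(nB_{\mathcal{X}})}=n\overline{A(B_{\mathcal{X}})}$ has non-empty interior, hence $\overline{A(B_{\mathcal{X}})}$ contains a ball $y_0+\eta B_{\mathcal{Y}}$. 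Using that $B_{\mathcal{X}}$ is symmetric and convex and that $A$ is linear, $\overline{A(B_{\mathcal{X}})}$ is symmetric and convex, so
\begin{equation}
\eta B_{\mathcal{Y}}\subset \tfrac{1}{2}\bigl(\overline{A(B_{\mathcal{X}})}-\overline{A(B_{\mathcal{X}})}\bigr)\subset \overline{A(B_{\mathcal{X}})},
\end{equation}
which yields the desired $\delta:=\eta$ after re-scaling $B_{\mathcal{X}}$ if needed.

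Third, and this is the main obstacle, I must remove the closure and show $A(B_{\mathcal{X}})\supset \tfrac{\delta}{2}B_{\mathcal{Y}}$. The idea is iterative, using completeness of $\mathcal{X}$: given $y\in \tfrac{\delta}{2}B_{\mathcal{Y}}$, by the previous step applied to the rescaled balls $2^{-k}B_{\mathcal{X}}$, one can inductively choose $x_k\in 2^{-k}B_{\mathcal{X}}$ with $\|y-A(x_1+\cdots+x_k)\|<\delta\cdot 2^{-k-1}$. The series $\sum_k x_k$ converges absolutely in the Banach space $\mathcal{X}$ to some $x$ with $\|x\|<\sum_k 2^{-k}=1$, and continuity of $A$ forces $Ax=y$, so $y\in A(B_{\mathcal{X}})$. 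The delicate point here is making the bookkeeping on the norms match so that the series telescopes correctly into a point inside $B_{\mathcal{X}}$; completeness of $\mathcal{X}$ and continuity of $A$ are both indispensable at this step, which is where the proof would fail for a merely normed domain. Combining the three stages gives the theorem.
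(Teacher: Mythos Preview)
Your argument is correct and is precisely the classical Baire-category proof of the open mapping theorem. However, the paper does not give its own proof of this statement: it is stated as a background result with a citation to \cite[Theorem 4.10]{FHHMZ_11} and no argument is supplied. So there is nothing in the paper to compare your proof against; you have simply filled in what the paper leaves to the reference.
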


\section{Geometry on normed and Banach manifolds}\label{sec:3}

Our geometric study of quantum models will entail the analysis of classes of normed spaces \cite{FB_66,Mar_68}. Since the Banach fixed point theorem does not hold on normed spaces, first-order differential equations in normal form do not need to admit unique integral curves (cf. \cite{Ham_82,KM_97}). There exist no straightforward analogues of the Implicit function and Inverse function theorems neither (see \cite{FB_66,Ham_82} for details). This section recalls the main results on differential calculus on normed spaces to be used hereafter. For details, we refer to \cite{Col_12,FB_66,Ham_82,Mar_68,Nee_05}.

We hereafter assume that every normed vector space $E$ over a field $\Bbbk$ admits an {\it unconditional Schauder basis}, namely a topological basis $\{e_i\}_{i\in \mathbb{N}}$ so that
every $v\in E$ admits a unique decomposition $v=\sum_{i=1}^\infty \lambda^i e_i$ for certain unique constants $\{\lambda^i\}_{i\in \mathbb{N}}\subset\Bbbk$ such that the sum does not depend on the order of summation of its terms. Hereafter, all basis are considered to be unconditional Schauder bases. The field $\Bbbk$ is hereafter assumed to be the field of reals or complex numbers.

Let $E$ and $F$ be normed spaces and let $B(E,F)$ denote the space of continuous linear maps from $E$ to $F$. The space $B(E,F)$ admits an induced topology (for details see \cite{FB_66}). Similarly, $B^r(E,F)$ stands for the space of continuous $r$-linear maps on $E$ taking values in $F$. 

Partial derivatives on normed vector spaces can be defined using the {\it G\^ateaux differential}. In particular, a function $f:U\subset E\rightarrow \mathbb{R}$ on an open $U$ of the normed space $E$ admits a {\it G\^ateaux differential} at $u\in U$ in the direction $v\in E$ if the limit
\begin{equation*}
(\mathfrak{D}_vf)(u):=\lim_{t\rightarrow 0}\frac{f(u+tv)-f(u)}{t}
\end{equation*}
exists. We call the function $u\mapsto (\mathfrak{D}_vf)(u)$ the {\it partial derivative} of $f$ in the direction $v\in E$. If $\mathfrak{D}_vf(u)$ exists for every $v\in E$ at $u\in U$, we say that $f$ is {\it G\^{a}teaux differentiable} at $u$ and we define $(\dd f)_u:v\in E\mapsto \mathfrak{D}_vf(u)\in \mathbb{R}$. This function satisfies that $(\dd f)_u(\lambda v)=\lambda(\dd f)_u(v)$ for every $v\in E$ and $\lambda \in \Bbbk$. Nevertheless, $(\dd f)_u$ does not need to be linear and, even if it is linear, it does not need to be continuous. If $(\dd f)_u$ is linear and continuous, then we say that $f$ is {\it Fr\'{e}chet differentiable} at $u$.  

The function $f:U\subset E \rightarrow \mathbb{R}$ is said to be differentiable if it is Frobenious differentiable at every point of $U$ and the mapping $u\mapsto \dd f_u$ is continuous \cite[pg. 38]{Col_12}. 
The function $f:U\subset E \rightarrow \mathbb{R}$ is said to be differentiable of order $r+1$ at $a\in U$ if it is differentiable of order $r$ and it differential $(d^rf)'$ exists at $a$ 
\cite[Section 4.4]{Col_12}. We say that $f$ is {\it smooth} or of class $\Cinfty$ if it is differentiable for every $r$.

The composite mapping theorem and the Leibniz rule hold on sufficiently differentiable functions on normed spaces \cite{Mar_68}. Although there is no \textit{mean value theorem} in normed spaces, if $(\dd f)_u = 0$ for every $u\in U$ and $U$ is connected, then $f$ is constant on $U$ (see \cite{Col_12,Mar_68}). This result will be employed for studying the existence of Hamiltonian functions related to self-adjoint operators in quantum mechanics (see Section \ref{sec:6}).

Once a notion of differentiability has been stated on normed vector spaces, the definition of a manifold modelled on a normed vector space is similar to the one on a standard manifold modelled on a finite-dimensional vector space (see \cite{Col_12,Con_90,FB_66,Lan_02,Mar_68}).

If not otherwise stated, manifolds are assumed to be smooth and modelled on a separable Banach space $\mathcal{X}$, i.e. the manifold $P$ admits a smooth differentiable structure whose charts take values in $\mathcal{X}$. Exceptions to this general rule appear in our description of quantum mechanical problems related to unbounded operators, which require the use of smooth manifolds modelled on normed spaces that does not need to be complete.

It can be proved that exists a natural one-to-one correspondence between the space $\T_uP$ of {\it tangent vectors} at $u\in P$, which are understood as equivalence classes of curves $\gamma:t\in \mathbb{R}\mapsto \gamma(t)\in P$ passing through $u$ at $t=0$ and having the same, well-defined, G\^ateaux differential at $t=0$ in the direction $1\in \mathbb{R}$ \cite{Lan_02}, and the space $\mathcal{D}_uP$ of derivations $\mathfrak{D}_u:\Cinfty_u(P)\rightarrow \mathbb{R}$, where $\Cinfty_u(P)$ is the space of {\it germs} at $u\in P$ of functions on $P$, of the form $\mathfrak{D}_u f = \frac{\dd}{\dd t}\big\vert_{t=0} (f \circ \gamma)$ for a curve $\gamma$ with $\gamma(0)=u$. Let us prove the analogue of this result on normed spaces under our given general assumptions. The case for general manifolds modelled on normed spaces follows then by using local charts and their compatibility relations. 

Let $(E, \Vert \cdot \Vert)$ be a real normed space. By our general assumption on normed spaces, $E$ admits an unconditional Schauder basis $\{\psi_n\}_{n\in \mathbb{N}}$. Let $\{f^n\}_{n\in \mathbb{N}}$ be the corresponding coordinates in $E^*$. Thus, if $\psi\in E$, then there exist unique real numbers $c^n$, with $n\in\mathbb{N}$, given by $c^n = f^n(\psi)$, such that
\begin{equation*}
\psi = \sum\limits_{n\in \mathbb{N}} c^n \psi_n\,.
\end{equation*}
The elements of $\{f^n\}_{n\in \mathbb{N}}$ do not need to be a basis of the dual space $E^*$, in the sense that certain elements of $E^*$ may not be the limit of a sequence of finite linear combinations of $\{f^n\}_{n\in \mathbb{N}}$ in the strong topology of $E^*$ (consider for instance $\ell^1$ with the basis $e_n=(0,
\ldots,1(n-{\rm th\,\, position}),0,\ldots)$ with $n\in \mathbb{N}$, and its dual $(\ell^1)^*=\ell^\infty$ with the elements $\sum_{n\in \mathbb{N}}f^n\in \ell^\infty$). Anyhow, the value of any element of $E^*$ at an element $v\in E$ can be obtained as a linear combination  $\sum_{n\in \mathbb{N}}\lambda_nf^n(v)$ where the sum is understood as a limit in the field $\mathbb{K}$. In general, the $\{\lambda_n\}_{n\in \mathbb{N}}$ satisfy relations ensuring that the sum exists and it is continuous. These relations can be different from the ones satisfied by the $\{c^n\}_{n\in \mathbb{N}}$, e.g. the space $\ell^p$ of sequences ${\bf c}=(c^1,c^2,\ldots)$ of numbers with $\sum_{n\in \mathbb{N}}|c^n|^p<\infty$ with respect to the norm $\|{\bf c}\|=\left(\sum_{n\in \mathbb{N}}|c^n|^p\right)^{1/n}$ satisfies that $\ell^{p*}\simeq \ell^{q}$ for $1/p+1/q=1$. 

Every equivalence class of curves passing through $\psi_0\in E$ is determined by the G\^ateaux differential of the curves at $t=0$ in the direction $1\in \mathbb{R}$ and vice versa. Then, there is a one-to-one correspondence between $\T_{\psi_0}E$ and $E$, and we can write $\T_{\psi_0}E\simeq E$.

In turn, $\psi\in E$ determines a derivation 
\begin{equation}\label{eq:psidot}
\dot{\psi}_{\psi_0}f =\frac{\dd}{\dd t}\bigg\vert_{t=0}\big( f(\psi_0+t\psi)\big)\,,\qquad \forall f\in \Cinfty(E)\,.
\end{equation}
Obviously $\dot\psi_{\psi_0}\in \mathcal{D}_{\psi_0}E$. The subindex $\psi_0$ will be omitted if it is clear from context, or irrelevant, so as to simplify the notation.

If $\dot{\psi}_{\psi_0} = \dot{\phi}_{\psi_0}$, then it follows from \eqref{eq:psidot} that $f^n(\psi) =f^n(\phi)$ for all $n\in \mathbb{N}$ and $\psi=\phi$. Then, every element of $\mathcal{D}_{\psi_0}E$ determines a unique element of $E$, which defines a unique tangent vector at $\psi_0$. Then, we can identify $\T_{\psi_0}E$ and $\mathcal{D}_{\psi_0}E$ and write $\T_{\psi_0}E := \{ \dot{\psi}_{\psi_0}  \mid \psi \in E \}.$ Moreover, each $\psi\in E$ is associated with a derivation $\dot{\psi}\in \T_{\psi_0}E$ via the isomorphism  $\lambda_{\psi_0} :\psi\in E \mapsto \dot{\psi}_{\psi_0}\in \T_{\psi_0}E$.

The previous construction allows for building the tangent and cotangent bundles of a normed manifold $P$ along with other geometric structures, e.g. generalised tangent bundles \cite{Lee_12}, following the same ideas as in the finite-dimensional case, provided the infinite-dimensional nature of these structures is taken into account. In particular, a \textit{vector field} is a section of the tangent bundle $\T P$, more precisely, it is a map $X: P \rightarrow \T P$ such that $\tau \circ X = \Id_P$, where $\tau: \T P \rightarrow P$ is the projection onto the base space $P$ and $\Id_P$ is the identity map on $P$ (see \cite{Mar_68}) Similarly, a {\it  $k$-form} is a section of the bundle $\bigwedge^k \cT P\rightarrow P$, where $(\bigwedge^k \cT P)_u$ is the space of continuous $\mathbb{R}$-valued $k$-linear skew-symmetric  mappings on $\T_uP$. We hereafter write $\Omega^k(P)$ for the space of differential $k$-forms on $P$ and we denote  by $\Cinfty(P)$ the space of smooth functions on $P$. Similarly to the finite-dimensional manifold case, one can define the \textit{exterior derivative} $\dd: \Omega^k(P)\rightarrow \Omega^{k+1}(P)$, which exists on an appropriate subset of $\Omega^k(P)$.

If $P$ is not modelled on a Banach space, then a vector field on $P$ may not have certain integral curves \cite{KM_97}. Additional conditions must be required to ensure the existence of integral curves. Moreover, our study of quantum mechanical problems will demand the definition of vector fields with a domain (see \cite{Mar_68}) for details).  
\begin{definition}
A \textit{vector field on a domain} $D\subset P$ is a map $X: D\subset P \rightarrow \T P$ such that $D$ is a manifold modelled on a normed space that, as a subset of $P$, is dense in $P$,  the inclusion $\iota:D\hookrightarrow P$ is smooth, and $\tau\circ X = \Id_D$. Moreover, $X$ is called {\it smooth} if $X:D\rightarrow \T P$ is smooth.
\end{definition}

\begin{example}
\label{ex:partialxfield}
Let us provide an example of a vector field on a domain that will be generalised and studied in detail in Section \ref{sec:6}. Consider the complex Hilbert space $L^2(\mathbb{R})$ of equivalence classes of square integrable complex functions on $\mathbb{R}$ that coincide almost everywhere. This space is naturally a real manifold modelled on a real Hilbert space given by the differential structure induced by the global map  $\jmath:L^2(\mathbb{R})\rightarrow L_\mathbb{R}^2(\mathbb{R})$, where $L^2_{\mathbb{R}}(\mathbb{R})$ is $L^2(\mathbb{R})$ considered, in the natural way, as a real Hilbert space. The isomorphisms $\lambda_{\psi}: L^2(\mathbb{R})\rightarrow \T_{\psi}L^2(\mathbb{R})$, for every $\psi \in L^2(\mathbb{R})$, allow us to write $\T L^2(\mathbb{R})\simeq L^2(\mathbb{R})\oplus L^2(\mathbb{R})$ and to consider every operator 
$$A:D(A)\subset L^2(\mathbb{R})\rightarrow L^2(\mathbb{R}),$$ as a map 
\begin{equation}\label{eq:VecOpe}X_A:\psi\in D(A)\subset L^2(\mathbb{R})\mapsto (\psi, A\psi)\in \T_\psi L^2(\mathbb{R}).
\end{equation}

In particular, consider $\partial_x:D(\partial_x)\subset L^2(\mathbb{R})\rightarrow L^2(\mathbb{R})$ whose $D(\partial_x)$ is considered to be spanned by linear combinations of a finite family of elements of the basis of functions $H_n(x)e^{-x^2/2}/(\pi^{1/4}2^{n/2}\sqrt{n!})$, with $n\in\bar{\mathbb{N}} = \mathbb{N}\cup \{0\}$ and $H_n(x)$ being the Hermite polynomial of order $n$. Then, $D(\partial_x)$ is dense in $L^2(\mathbb{R})$ (cf. \cite{Sze_39}). Moreover, $D(\partial_x)$ is a normed vector space relative to the restriction of the topology of $L^2(\mathbb{R})$ to $D(\partial_x)$, but it is not complete. The restriction of the global atlas of $L^2(\mathbb{R})$ to $D(\partial_x)$ induces a second one on $D(\partial_x)$ that is modelled only on a normed, not complete, space because $L^2(\mathbb{R})=\overline{D(\partial_x)}\neq D(\partial_x)$. Note that $D(\partial_x)\simeq \bigoplus_{n
\in 
\mathbb{N}}\mathbb{R}$, where one assigns every element of $D(\partial_x)$ its coordinates in the given basis. Then, the inclusion of $D(\partial_x)$ into $L^2(\mathbb{R})$ reads in these coordinates as
$$
D(\partial_x)\ni(x_1,x_2,x_3,\ldots)\longmapsto (x_1,x_2,\ldots)\in L^2(\mathbb{R})
$$
and becomes smooth. In fact, the coordinates of an element in $D(\partial_x)$ may be any finite sequence of non-zero constants, while any element of $L^2(\mathbb{R})$ is a square summable sequence. Evidently, every smooth function on $L^2(\mathbb{R})$ is smooth on $D(\partial_x)$.  
 Since $\partial_x (D(\partial_x))\subset D(\partial_x)$, we can write that $X_{\partial_x}: D(\partial_x) \rightarrow \T D(\partial_x)$. In coordinates adapted to the basis $H_n^p(x)e^{-x^2/2}$, where $H_n^p(x)$ is the Hermite probabilistic polynomial of order $n$, and using that $\partial_xH_n^p(x)-xH_n^p(x)=-H_{n+1}^p(x)$ for any non-negative integer $n$, one has
$$
X_{\partial_x}(x_1,x_2,x_3,\ldots)=(x_1,x_2,x_3,\ldots;0,-x_1,-x_2,-x_3,\ldots)\in \T D(\partial_x).
$$
Hence, $X_{\partial_x}$ becomes a vector field on a domain.
\demo 
\end{example}

Example \ref{ex:partialxfield} can be generalised to vector fields on domains induced by observables, which will be the standard case in our study of quantum mechanics. This case will also turn out to be very frequent in the literature \cite{Mar_68}.

Similarly to vector fields on a domain, one can define differential $k$-forms on a domain. We write $\Omega^k(D)$ for the space of differential $k$-forms on a domain $D$. Moreover, it is straightforward to define $t$-dependent vector fields on a domain \cite{AM_78}.

\begin{definition}
A \textit{(local) flow} for $X: D \subset P\rightarrow \T P$ is a  map of the form $F: (t,u)\in (-\varepsilon, \varepsilon ) \times U \mapsto F_t(u) := F(t,u)\in P$, where $\varepsilon>0$ and $U\subset P$ is open and contains $D$, such that
\begin{equation*}
X(F_t(u))=\frac{\dd}{\dd s}\bigg|_{s=t} F_s (u)\,,
\end{equation*}
for all $u\in D$ and $t\in (-\epsilon,\epsilon)$. 

\end{definition}

At first glance, it may seem strange that the flow of a $t$-dependent vector field $X$ on a domain $D$ may be defined on an open set $U$ strictly containing $D$, but this will be useful and natural in this and others works (see \cite{Mar_68}). More precisely, skew-self-adjoint operators on a Hilbert space will give rise to vector fields on domains whose flows are defined on the full $\mathbb{R}\times \mathcal{H}$ but they are differentiable only on a subspace of it. 

\begin{example}
    Recall the vector field $X_{\partial_x}$ from Example \ref{ex:partialxfield}. For an element $\psi\in D(\partial_x)$,  since $\partial_x \psi\in D(\partial_x)$ and $\psi$ is an analytic function, one has that $[\exp(t\partial_x)\psi](y)=\sum_{n=0}^\infty \frac{t^n\partial^n_x\psi}{n!}(y)=\psi(y+t)$ and $\exp(t\partial_x)\psi$ may not to belong to $D(\partial_x)$ (cf. \cite{Hal_13}). Since $\ii \partial_x$ admits a self-adjoint extension, the Stone--von Neumann theorem ensures that $\exp(t\partial_x)$ can be defined as a unitary map on $L^2(\mathbb{R})$ and $X_{\partial_x}$ admits a flow given by $F:(t;\psi)\in \mathbb{R}\times L^2(\mathbb{R})\mapsto \exp(t\partial_x) \psi \in L^2(\mathbb{R})$. For elements $\psi\in D(\partial_x)$, one observes that $\frac{\d}{\d t} F(t, \psi) = \partial_x \psi$, but $F$ may not be differentiable for $\psi \notin D(\partial_x)$. 
    
    Note that $D(\partial_x)$ can be chosen so as to give rise to a flow $F$ of $X_{\partial_x}$ on a domain that is invariant under the maps $F_t$ for $t\in \mathbb{R}$. For instance, consider $\partial_x$ to be defined on the Schwartz space $\mathcal{S}(\mathbb{R})$. It can be proved that $(\exp(t\partial_x)\psi](x)=\psi(x+t)$ for every $\psi\in \mathcal{S}(\mathbb{R})$ and $x\in \mathbb{R}$. Hence, the new $D(\partial_x)$ is invariant relative to $\exp(t\partial_x)$. Moreover, $\frac{\d}{\d t} F(t, \psi) = \partial_x \psi$ on elements of $\mathcal{S}(\mathbb{R})$.\demo
\end{example}

Since the sum and composition of vector fields related to operators with different domains do not need to be defined at a new common 
 domain, the Lie bracket of such vector fields may not be defined. In this work, conditions will be given to ensure that the Lie bracket of studied vector fields is defined on a dense subspace of $P$. This happens, for instance, for observables having a common basis of eigenvalues on the space they are defined on. Then, one can consider the direct sum of the subspaces generated by the elements of such a basis. This is a normed, but generally not complete, subspace. 


It is interesting to our purposes to consider the case when a subset $D$ of a manifold $P$ modelled on a Banach space is a manifold modelled on a normed space. In this case, the  exterior differential operator $\dd$ can be similarly defined on differential forms on $D$. Moreover, $\T_p D\subset \T_pP$ and one can consider whether a function $f:D\rightarrow \mathbb{R}$ is such that its differential, namely ${\dd f}_u:\T_uD\rightarrow \mathbb{R}$, can be extended at points $u\in D$ to elements of $(\dd f)_u\in \cT_pP$. This will be important in our quantum mechanical applications and it becomes the main reason to provide the following definition.

\begin{definition} Let $D$ be a subset of a manifold $P$ that is, again, a manifold modelled on a normed space. We say that $f: D \rightarrow \mathbb{R}$ is \textit{admissible} when $(\dd f)_{u}: \T_{u} D \rightarrow \mathbb{R}$ can be extended continuously to $\T_{u}P$.
\end{definition}

For a vector field $X$ on a domain $D_X \subset P$, which is assumed to be a manifold modelled on a normed space such that  $X(D_X)\subset \T D_X$, the \textit{inner contraction},  $\inn{X}$, on differential forms with $D_X\supset D_\alpha$ is defined by restricting the contraction with $X$ of differential forms at points of $P$ and its evaluation on vector fields on $D_X$.

Similarly, the \textit{Lie derivative} $\Lie_X:\alpha\in \Omega^k(P) \mapsto \Lie_X\alpha\in \Omega^k(D_X)$ is defined using Cartan's formula:
\begin{equation}\label{eq:Liederivative}
    (\Lie_X \alpha)(u) = (\inn{X} \dd \alpha +\dd \inn{X} \alpha) (u)\,,\qquad \forall u\in D_X\,.
\end{equation}
\begin{theorem}
    Let $X$ be a vector field on a domain $D_X\subset P$ and let  $F: (-\epsilon,\epsilon)\times D_X \rightarrow D_X$ be the flow of $X$. Then,
    \begin{equation*}
        [F_t^*(\Lie_X \alpha)]_u = \frac{\dd}{ \dd s}\bigg|_{s=t} (F_s^*\alpha)_u\,,
    \end{equation*}
    for any $\alpha \in \Omega^k(P)$, values $t,s\in (-\epsilon,\epsilon)$, and arbitrary $u\in D_X$. 
\end{theorem}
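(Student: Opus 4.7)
The plan is a two-stage argument: first reduce the general $s=t$ statement to the base case $s=0$ via the flow property, and then establish that base case by matching both sides on a generating subalgebra of $\Omega^k(P)$.

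\emph{Reduction step.} For $u\in D_X$, the hypothesis $F((-\epsilon,\epsilon)\times D_X)\subset D_X$ together with the uniqueness of integral curves of $X$ on $D_X$ yields the local semigroup law $F_{t+h}(u)=F_h(F_t(u))$ for sufficiently small $h$. Pullback being contravariant, $F_{t+h}^*\alpha = F_t^*F_h^*\alpha$. Since $F_t^*$ acts continuously and linearly on fibres of $\bigwedge^k\cT P$, one may interchange it with the derivative at $h=0$ to get
\begin{equation*}
\frac{\dd}{\dd s}\bigg|_{s=t}(F_s^*\alpha)_u \;=\; F_t^*\left(\frac{\dd}{\dd h}\bigg|_{h=0}F_h^*\alpha\right)_u.
\end{equation*}
Hence it suffices to prove the base identity $\frac{\dd}{\dd h}\big|_{h=0}F_h^*\alpha = \Lie_X\alpha$ on $D_X$.

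\emph{Base case.} Set $\mathcal{D}_X\alpha := \frac{\dd}{\dd h}\big|_{h=0}F_h^*\alpha$. The plan is to verify that both $\mathcal{D}_X$ and the Cartan-formula-defined operator $\Lie_X$ from \eqref{eq:Liederivative} are $\mathbb{R}$-linear degree-zero derivations of the exterior algebra that commute with $\dd$ and that coincide on $\Cinfty(P)$. For $f\in\Cinfty(P)$, a direct computation gives
\begin{equation*}
(\mathcal{D}_X f)(u) \;=\; \frac{\dd}{\dd h}\bigg|_{h=0}f(F_h(u)) \;=\; (\dd f)_u(X(u)) \;=\; (\inn{X}\dd f)(u) \;=\; (\Lie_X f)(u),
\end{equation*}
the last equality being \eqref{eq:Liederivative} applied to a $0$-form. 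Commutation with $\dd$ holds for $\mathcal{D}_X$ because $\dd F_h^*=F_h^*\dd$, and for $\Lie_X$ because Cartan's formula and $\dd^2=0$ give $\dd\Lie_X = \dd\,\inn{X}\dd = \Lie_X\dd$. The graded Leibniz rule for $\mathcal{D}_X$ is inherited from $F_h^*(\alpha\wedge\beta) = F_h^*\alpha\wedge F_h^*\beta$ and the product rule, while for $\Lie_X$ it follows from the derivation property of $\inn{X}$ and $\dd$. These four properties force agreement of the two operators on the subalgebra locally generated by smooth functions and their differentials, and a continuity argument on each fibre $\bigwedge^k\cT_uP$ then extends the equality to all of $\Omega^k(P)$.

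\emph{Main obstacle.} The algebraic scaffolding is classical; the real difficulty is to justify each analytic step in the present setting, where $D_X$ is only a manifold modelled on a possibly incomplete normed space and where $F$ need not be differentiable off $D_X$. Concretely, one has to certify Fréchet differentiability of $h\mapsto(F_h^*\alpha)_u$ at $h=0$ with the anticipated derivative, and the interchange $F_t^*\circ\partial_h|_{h=0} = \partial_h|_{h=0}\circ F_t^*$ used in the reduction step. Both reduce to the smoothness of $F$ on $(-\epsilon,\epsilon)\times D_X$ built into the notion of a flow of a vector field on a domain, together with the continuity of $F_t^*$ on the relevant fibres of $\bigwedge^k\cT P$; this is where the restriction to manifolds modelled on normed spaces, and not on more exotic locally convex ones, plays its role.
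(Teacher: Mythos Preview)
The paper states this theorem without proof, so there is nothing to compare your argument against at that level. Your reduction step via the semigroup law and linearity of $F_t^*$ is fine. The gap is in the base case.

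The sentence ``These four properties force agreement of the two operators on the subalgebra locally generated by smooth functions and their differentials, and a continuity argument on each fibre $\bigwedge^k\cT_uP$ then extends the equality to all of $\Omega^k(P)$'' is where the argument breaks in infinite dimensions. On a Banach (or normed) manifold, a $k$-form at $u$ is an arbitrary continuous alternating $k$-linear map on $\T_uP$; such maps are \emph{not} in general finite sums of decomposables $\dd f_1\wedge\cdots\wedge\dd f_k$, and density of decomposables in the operator topology on $B^k(\T_uP,\mathbb{R})$ is neither automatic nor established here. So the derivation-plus-naturality argument, which is complete in finite dimensions, does not close in this setting, and your ``continuity argument'' is asserted exactly where it would need to do all the work. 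The obstacles you flag (Fr\'echet differentiability of $h\mapsto F_h^*\alpha$, interchange of $F_t^*$ with $\partial_h$) are real but secondary; the structural issue is the generation claim.

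The standard infinite-dimensional proof, as in the references the paper cites (e.g.\ \cite{AMR_88,Lan_02}), sidesteps this entirely: one evaluates $\frac{\dd}{\dd h}\big|_{h=0}(F_h^*\alpha)_u(v_1,\ldots,v_k)$ directly via the chain rule and matches the resulting expression term by term against the intrinsic formula for $(\inn{X}\dd\alpha+\dd\inn{X}\alpha)_u(v_1,\ldots,v_k)$. That computation works for an arbitrary continuous alternating form $\alpha$ and requires no decomposition or density hypothesis. If you want to salvage your approach, you would need to supply the missing density statement under the paper's standing assumption of an unconditional Schauder basis; otherwise, switch to the direct computation.
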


For a diffeomorphism $\phi: P_1\rightarrow P_2$ and a vector field $X$ on a domain $D_X\subseteq P_1$, the \textit{push-forward} of $X$ by $\phi$, if it exists, has domain $\phi(D_X)\subseteq P_2$ and is defined by $(\phi_* X)_u = \T_u\phi [X_{\phi^{-1}(u)}]$. If the flow of $X$ is given by $F_t$, then the flow of $\phi_* X$ is $\hat F_t:=\phi\circ F_t\circ \phi^{-1}$ for every $t\in (-\epsilon,\epsilon)$. In particular, one has to assume that $\phi(D_X)$ is dense in $P_2$ for the vector field $\phi_*X$ to exist. 

\begin{definition}
A diffeomorphism $\phi: D\rightarrow D$ of a domain $D\subset P$ is said to be \textit{admissible} if and only if for every $u\in D$, the map $\T_u\phi:\T_uD\rightarrow \T_{\phi(u)}D$ extends to a continuous linear map $\T_uP\to\T_{\phi(u)}P$. 
\end{definition}

The following results are the generalisation to the infinite-dimensional context of the quotient manifold theorem \cite{AMR_88} (see also \cite{Bou_89,Bou_07} for details).

\begin{theorem}\label{Iloc}
    Let $\Phi:G\times P\rightarrow P$ be a free and proper action on a Banach manifold. Then, $P/G$ is a manifold and $\pi:P\rightarrow P/G$ is a submersion.
\end{theorem}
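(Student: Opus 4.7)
The plan is to mirror the classical slice construction underlying the finite-dimensional quotient manifold theorem, adapted to the Banach setting, so that each point of $P/G$ receives a chart built from a local transverse section to the $G$-orbit.

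First I would fix $p\in P$ and consider the orbit map $\Phi_p:g\in G\mapsto \Phi(g,p)\in P$. Freeness forces $\Phi_p$ to be injective, and freeness together with the Lie-group structure of $G$ also yields injectivity of $\T_e\Phi_p:\T_eG\to\T_pP$, whose image is the closed subspace $\T_p(G\cdot p)$. Properness makes the orbit $G\cdot p$ a closed embedded submanifold of $P$ and, via the standard argument, implies that $P/G$ is Hausdorff for the quotient topology. The decisive step is then to produce a closed topological complement $W_p\subset \T_pP$ with $\T_pP=\T_p(G\cdot p)\oplus W_p$ and to pick a submanifold $S_p\subset P$ through $p$ with $\T_pS_p=W_p$. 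With this in place, the map $\Psi:(g,s)\in G\times S_p\mapsto \Phi(g,s)\in P$ has $\T_{(e,p)}\Psi$ an isomorphism, so the Banach inverse function theorem implies $\Psi$ is a local diffeomorphism near $(e,p)$; shrinking $S_p$ and using properness one further arranges that $\Phi(g,S_p)\cap S_p=\emptyset$ for every $g\neq e$, so that $S_p$ is a genuine slice.

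Next I would transport charts from the slices to the quotient. Under the quotient topology, $\pi(S_p)$ is open in $P/G$ and $\pi\vert_{S_p}:S_p\to\pi(S_p)$ is a homeomorphism, hence $S_p$ models an open neighbourhood of $[p]\in P/G$. If $\pi(S_{p_1})$ and $\pi(S_{p_2})$ overlap, the change of charts sends $s_1\in S_{p_1}$ to the unique $s_2\in S_{p_2}$ in its orbit, i.e. $s_2=\Phi(g(s_1),s_1)$ where $g$ is smooth thanks to the local inverse of $\Psi$ at $p_2$; the transition is therefore smooth and $P/G$ inherits a Banach-manifold structure. Submersivity of $\pi$ becomes immediate: in the adapted coordinates provided by $\Psi$ and by the slice, $\pi$ reads as the canonical projection $(g,s)\in G\times S_p\mapsto s\in S_p$, whose kernel is the split subspace $\T_eG\subset \T_eG\oplus\T_pS_p$.

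The main obstacle is the construction of the slice, and specifically the existence of the closed topological complement $W_p$ of $\T_p(G\cdot p)$ in $\T_pP$. This splitting is automatic when $P$ is modelled on a Hilbert space (take the orthogonal complement) but not for arbitrary Banach spaces, so one must either assume complementability of orbit directions as part of the regularity hypotheses packaged into ``free and proper action on a Banach manifold'' (as in \cite{Bou_89,Bou_07}) or verify it by hand in each application. Once this splitting is granted, freeness guarantees the injectivity of $\Phi_p$ and of $\T_e\Phi_p$, the Banach inverse function theorem provides the local diffeomorphism structure of $\Psi$, and properness handles both closedness of orbits and the global separation required to pass from a local to a genuine slice; the rest of the argument then goes through verbatim from the finite-dimensional case.
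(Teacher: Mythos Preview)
The paper does not prove this theorem at all: it is stated as a citation from the literature, with the surrounding text reading ``The following results are the generalisation to the infinite-dimensional context of the quotient manifold theorem \cite{AMR_88} (see also \cite{Bou_89,Bou_07} for details).'' Your slice-construction sketch is exactly the standard argument found in those references, and your identification of the splitting of $\T_p(G\cdot p)$ as the point where the Banach case demands an extra hypothesis (automatic in the Hilbert setting, which is the only case actually used in the paper's application to $U(1)$ acting on $\mathcal{H}$) is accurate and well observed.
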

\begin{proposition}\label{prop:compactLiegroup}
    If $G$ is a compact Lie group, then every Lie group action $\Phi:G\times P\rightarrow P$ is proper.
\end{proposition}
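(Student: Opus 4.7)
The plan is to verify the definition of a proper action directly using the compactness of $G$. Recall that $\Phi\colon G\times P\to P$ is proper when the associated map
\begin{equation*}
\widetilde\Phi\colon G\times P\longrightarrow P\times P,\qquad (g,p)\longmapsto (\Phi(g,p),p),
\end{equation*}
is proper in the topological sense, i.e.\ the preimage of every compact subset of $P\times P$ is a compact subset of $G\times P$.

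Take any compact $K\subset P\times P$ and set $\widetilde K:=\widetilde\Phi^{-1}(K)$. First I would locate $\widetilde K$ inside a manifestly compact set. Let $\pi_2\colon P\times P\to P$ denote the second projection, which is continuous, so that $\pi_2(K)\subset P$ is compact. By construction, any $(g,p)\in \widetilde K$ satisfies $p\in \pi_2(K)$, hence
\begin{equation*}
\widetilde K\subset G\times \pi_2(K).
\end{equation*}
Since $G$ is compact and $\pi_2(K)$ is compact, the product $G\times \pi_2(K)$ is compact (Tychonoff, which here is just the product of two compact Hausdorff spaces).

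Next I would check closedness. The action $\Phi$ is smooth, hence continuous, so the map $\widetilde\Phi$ is continuous; because $K$ is closed in $P\times P$ (compact subsets of the Hausdorff space $P\times P$ are closed), its preimage $\widetilde K=\widetilde\Phi^{-1}(K)$ is closed in $G\times P$, and therefore also closed in the subspace $G\times \pi_2(K)$. A closed subset of a compact space is compact, so $\widetilde K$ is compact, which is exactly what was needed.

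The argument is purely topological and does not rely on finite dimensionality, so it transfers verbatim to the Banach manifold setting used in Theorem \ref{Iloc}. The only subtlety worth flagging is the Hausdorffness of $P\times P$, which holds because manifolds (modelled on normed spaces, in particular Banach spaces) are Hausdorff by the conventions fixed in Section \ref{sec:3}; no deeper functional-analytic input is required, and there is no genuine obstacle to carrying out the proof.
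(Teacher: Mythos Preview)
Your proof is correct and is the standard argument: embed $\widetilde\Phi^{-1}(K)$ in the compact set $G\times\pi_2(K)$ and observe it is closed there. The paper itself states Proposition~\ref{prop:compactLiegroup} without proof, treating it as a known fact (with references to \cite{AMR_88,Bou_89,Bou_07}), so there is nothing to compare against; your argument supplies exactly the kind of elementary verification one would expect here.
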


\section{Symplectic geometry in the infinite-dimensional context}\label{sec:4}

This section introduces and studies symplectic and related structures on manifolds modelled on normed and Banach spaces. As shown next, the fact that a symplectic form is defined on such, generally infinite-dimensional, manifolds yields several relevant differences with respect to the case of symplectic forms on finite-dimensional manifolds. 

\begin{definition}
    A \textit{symplectic form} is a closed differential two-form $\omega$ on a manifold $P$ modelled on a normed space that is {\it non-degenerate}, namely $\omega^\flat_u: v_u\in \T_uP  \mapsto \omega_u(v_u,\cdot)\in \cT_uP$ is injective  for every $u\in P$. If $\omega_u^{\flat}$ is an isomorphism for every $u\in P$, then we call $\omega$ a \textit{strong symplectic form}. A \textit{(strong) symplectic manifold} is a pair $(P, \omega)$, where $\omega$ is a (strong) symplectic form on $P$.
\end{definition}

Every strong symplectic form is always a symplectic form, whereas a symplectic form $\omega$ on $P$ is strong provided $\omega^\flat_u$ is also surjective for every $u\in P$. The latter can be warranted, for instance, when $P$ is finite-dimensional. If the symplectic form $\omega$ of a  symplectic manifold $(P,\omega)$ is known from context, then one will simply say that $P$ is a symplectic manifold.

The Darboux theorem is a well-known result from standard finite-dimensional symplectic geometry that states that every symplectic manifold admits a local coordinate system in which the associated symplectic form takes a canonical form \cite{AM_78,Lee_12}. This result cannot be extended to symplectic forms on manifolds modelled on normed spaces, as shown by Marsden for the case of Banach symplectic manifolds \cite{Mar_68}, which in turn  led to different generalisations of the Darboux Theorem for Banach symplectic manifolds \cite{Bam_99,Mar_81,Tro_76}. 

The extension of the standard Darboux theorem to Banach symplectic manifolds is accomplished through {\it constant differential forms}. Let us describe this notion. Every skew-symmetric continuous bilinear form on a normed space $E$, let us say $B:E\times E\rightarrow \mathbb{R}$, gives rise to a unique differential two-form $\omega_0$ on $E$ by considering the natural isomorphism $\lambda_u:E\simeq \T_uE$ and defining $\omega_0$ to be the only differential form satisfying that $\lambda_u^*(\omega_0)_u= B$ for every $u\in E$. 
Such a type of differential two-form on a normed space $E$ defined from of a skew-symmetric bilinear form on $E$ is called a {\it constant} differential two-form. 

\begin{theorem}{{\bf (Strong Darboux theorem \cite{AM_78})}}
    If $(P,\omega)$ is a strong Banach symplectic manifold, then there exists an open neighbourhood around every $u\in P$ where $\omega$ is the pull-back of a constant differential two-form.
\end{theorem}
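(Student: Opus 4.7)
My plan is to adapt the classical Moser deformation argument to the Banach setting, which works precisely because the strong hypothesis gives us invertibility of $\omega^\flat_u$ and stability of this invertibility under small perturbations.

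First, I would localize. Given $u_0 \in P$, pick a chart around $u_0$ identifying an open neighbourhood with an open set $U \subset \mathcal{X}$ in the model Banach space, with $u_0 \mapsto 0$. Using the canonical isomorphism $\lambda_v : \mathcal{X} \to \mathrm{T}_v \mathcal{X}$, let $B := \lambda_0^*(\omega_0)$, which is a continuous skew-symmetric bilinear form on $\mathcal{X}$, and let $\omega_0$ be the constant two-form on $U$ associated with $B$. By construction $\omega_0$ agrees with the pullback of $\omega$ at $0$. Because $\omega$ is strong symplectic, $\omega_0^\flat : \mathcal{X} \to \mathcal{X}^*$ is a continuous linear bijection, hence a topological isomorphism by the open mapping theorem.

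Next, consider the path of closed two-forms $\omega_t := \omega_0 + t(\omega - \omega_0)$ for $t \in [0,1]$. At the origin, $(\omega_t)_0^\flat = \omega_0^\flat$ for every $t$, which is an isomorphism. The set of topological isomorphisms is open in $B(\mathcal{X},\mathcal{X}^*)$ (the inverse depends continuously on the operator), so by shrinking $U$ to a star-shaped neighbourhood of $0$, I can ensure $(\omega_t)_v^\flat$ is a topological isomorphism for every $v \in U$ and every $t \in [0,1]$. This is the crucial place where strongness is indispensable: without it, nondegeneracy is not an open condition. Applying the Poincaré lemma on the star-shaped $U$ (valid in the Banach category with the usual homotopy formula) to the closed one-form difference, write $\omega - \omega_0 = -\dd \alpha$ with $\alpha_0 = 0$.

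Now I would run Moser's trick. Define the $t$-dependent vector field $X_t$ on $U$ by
\begin{equation*}
  \contr{X_t}\omega_t = \alpha,
\end{equation*}
which has a unique solution because $(\omega_t)_v^\flat$ is invertible at every $v \in U$ and $t \in [0,1]$, with bounded inverse depending smoothly on $(t,v)$. Since $\alpha_0 = 0$, we have $X_t(0) = 0$ for every $t$, so by the standard ODE theory on Banach spaces the flow $\phi_t$ of $X_t$ is defined on a smaller neighbourhood of $0$ for all $t \in [0,1]$. Differentiating,
\begin{equation*}
  \frac{\dd}{\dd t}(\phi_t^*\omega_t) = \phi_t^*\!\left(\Lie_{X_t}\omega_t + \frac{\dd \omega_t}{\dd t}\right) = \phi_t^*\!\left(\dd\contr{X_t}\omega_t + (\omega - \omega_0)\right) = \phi_t^*(\dd\alpha - \dd\alpha) = 0,
\end{equation*}
so $\phi_1^*\omega = \omega_0$, exhibiting $\omega$ as the pullback of a constant two-form near $u_0$.

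The main obstacle is the step where strongness is used: one must guarantee that $(\omega_t)_v^\flat$ stays an isomorphism and that its inverse is uniformly bounded, so that $X_t$ is a well-defined smooth time-dependent vector field on a fixed neighbourhood. A weakly symplectic form would give injectivity of $\omega_0^\flat$ but not surjectivity, so $\contr{X_t}\omega_t = \alpha$ need not be solvable; this is precisely the failure mode identified by Marsden and the reason the theorem is stated for the strong category. Everything else — the Poincaré lemma on a star-shaped open set in a Banach space, smooth dependence of the inverse of a Banach-space isomorphism, and existence of flows of smooth time-dependent vector fields on Banach manifolds — is standard in the Banach infinite-dimensional differential calculus recalled in Section \ref{sec:3}.
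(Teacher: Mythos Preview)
The paper does not prove this theorem; it is quoted as a known result with a citation to Abraham--Marsden \cite{AM_78}, so there is no in-paper argument to compare against. Your proposal is the standard Moser deformation proof, which is essentially the argument in the cited reference, and it is correct in outline and in its identification of why the strong hypothesis is indispensable (openness of the isomorphism condition in $B(\mathcal{X},\mathcal{X}^*)$ and solvability of $\contr{X_t}\omega_t=\alpha$).

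Two small slips worth fixing: (i) the sentence ``let $B := \lambda_0^*(\omega_0)$ \ldots\ and let $\omega_0$ be the constant two-form associated with $B$'' is circular as written; you mean $B := \lambda_0^*(\omega|_0)$, i.e.\ freeze the \emph{original} $\omega$ at the base point, and then let $\omega_0$ be the constant form built from that $B$; (ii) $\omega-\omega_0$ is a closed \emph{two}-form, not a one-form, so the Poincar\'e lemma produces a one-form $\alpha$ with $\dd\alpha = -(\omega-\omega_0)$ and, by the explicit homotopy operator on a star-shaped set, $\alpha(0)=0$. With those corrections the computation $\phi_1^*\omega=\omega_0$ goes through, and composing the original chart with $\phi_1^{-1}$ gives the chart in which $\omega$ is constant.
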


The cotangent bundle of any finite-dimensional manifold admits a so-called {\it tautological differential one-form} whose differential gives rise to a strong symplectic form \cite{AMR_88}. Similarly, the cotangent bundle of a Banach manifold $P$ is endowed with a canonical differential one-form and a related symplectic form, which is not strong in general \cite{Mar_68}. 

\begin{definition}
    Let $P$ be a manifold modelled on a Banach space $\mathcal{X}$ and let $\pi:\cT P\rightarrow P$ be the cotangent bundle projection. The \textit{tautological differential one-form} on $\cT P$ is the differential one-form $\theta$ on $\cT P$ of the form
    \begin{equation*}
        \theta_{\alpha_u}(w_{\alpha_u}) :=  \alpha_u(\T\pi(w_{\alpha_m}))\,,\qquad \forall\alpha_u\in \cT_uP\,, \quad \forall w_{\alpha_u}\in \T_{\alpha_u}(\cT P)\,, \quad \forall u\in P\,,
    \end{equation*}
    and the \textit{canonical two-form} $\omega$ on $P$ is defined as $\omega=-\dd \theta$. 
\end{definition}

Let us give a coordinate expression for $\theta$ and $\omega$ on a separable Banach manifold $P$ modelled on a Banach space $\mathcal{X}$ with an unconditional Schauder basis. It is worth recalling that the separability does not imply the existence of an unconditional Schauder basis (see \cite{Enf_73}). On a sufficiently small open set $U\subset P$, one can consider a coordinate system $\{x^j\}_{j\in \mathbb{N}}$. The coordinate system $\{x^j,p_j\}_{j\in\mathbb{N}}$ adapted to $\cT P$ allows us to establish a local diffeomorphism $\cT U\simeq U\times \mathcal{X}^*$. In turn, the coordinate system $\{x^j,p_j\}_{j\in\mathbb{N}}$ induces another adapted coordinate system to $\T\cT U$, namely $\{x^j,p_j,\dot x^j,\dot p_j\}_{j\in\mathbb{N}}$, allowing us to define a diffeomorphism $\T\cT U\simeq (U\times \mathcal{X}^*)\times (\mathcal{X}\times \mathcal{X}^*)$. In these coordinates,
\begin{equation}\label{eq:LocalThe}
    \sum\limits_{j\in\mathbb{N}}\theta_{(x,p)}(\dot x^j(x,p)\partial_{x^j}+\dot p_j(x,p)\partial_{p_j})=\sum\limits_{j\in\mathbb{N}}p_j\dot x^j \quad\Longrightarrow \quad\theta=\sum\limits_{j\in\mathbb{N}}p_j\dd x^j.
\end{equation}
Then, $\omega=-\dd \theta = \sum_{j\in \mathbb{N}} \dd x_j \wedge \dd p^j$.

More generally, one has the following result for Banach spaces (see \cite{Mar_68} for details).
\begin{theorem}\label{RefCan}
    The canonical two-form $\omega$ on $\cT P$ is a symplectic form. In addition, $\omega$ is a strong symplectic form if and only if $P$ is modelled on a reflexive Banach space.
\end{theorem}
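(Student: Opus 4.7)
The plan is to prove closedness trivially, then work in the local coordinates $\{x^j,p_j\}_{j\in\mathbb{N}}$ already exhibited in \eqref{eq:LocalThe} to read off $\omega^\flat$ explicitly, and finally use the Hahn--Banach theorem plus the definition of reflexivity to characterise non-degeneracy and surjectivity respectively.

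First I would observe that $\omega = -\dd\theta$ is automatically closed since $\dd^2 = 0$ on the space of differential forms where $\dd$ is defined, so only non-degeneracy and (conditionally) surjectivity of $\omega^\flat$ remain. Next, using the local identifications $\cT U \simeq U\times \mathcal{X}^*$ and $\T(\cT U)\simeq (U\times \mathcal{X}^*)\times(\mathcal{X}\times \mathcal{X}^*)$, the expression $\omega = \sum_{j\in\mathbb{N}} \dd x^j \wedge \dd p_j$ gives, at a point $(x,p)$ and on tangent vectors $(v_1,\eta_1), (v_2,\eta_2)\in \mathcal{X}\times\mathcal{X}^*$,
\begin{equation*}
    \omega_{(x,p)}\bigl((v_1,\eta_1),(v_2,\eta_2)\bigr) = \eta_2(v_1)-\eta_1(v_2).
\end{equation*}
Since $\T_{(x,p)}^*(\cT U)\simeq \mathcal{X}^*\times \mathcal{X}^{**}$, the associated flat map reads
\begin{equation*}
    \omega^\flat_{(x,p)}\colon (v,\eta)\in \mathcal{X}\times \mathcal{X}^* \longmapsto (-\eta,\, \widehat{v})\in \mathcal{X}^*\times \mathcal{X}^{**},
\end{equation*}
where $\widehat{v}\in \mathcal{X}^{**}$ denotes the evaluation functional $\eta'\mapsto \eta'(v)$, i.e.\ the image of $v$ under the canonical embedding $\iota_{\mathcal{X}}\colon \mathcal{X}\hookrightarrow \mathcal{X}^{**}$.

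From this explicit formula, injectivity of $\omega^\flat_{(x,p)}$ is immediate: if $\omega^\flat_{(x,p)}(v,\eta)=0$ then $\eta=0$ and $\widehat{v}=0$, and the Hahn--Banach theorem (through injectivity of $\iota_{\mathcal{X}}$) forces $v=0$; hence $\omega$ is a symplectic form on $\cT P$. For the strong statement, $\omega^\flat_{(x,p)}$ is surjective onto $\mathcal{X}^*\times \mathcal{X}^{**}$ if and only if every $\beta\in \mathcal{X}^{**}$ is of the form $\widehat{v}$ for some $v\in \mathcal{X}$, i.e.\ if and only if $\iota_{\mathcal{X}}$ is surjective, which is precisely the definition of $\mathcal{X}$ being reflexive. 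Since the condition on $\omega^\flat_u$ is independent of the base point and the coordinate chart used (it depends only on the model space $\mathcal{X}$), this equivalence propagates to all of $\cT P$, yielding the claim.

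The only delicate point is ensuring that the local identifications $\T(\cT U)\simeq (U\times \mathcal{X}^*)\times(\mathcal{X}\times \mathcal{X}^*)$ and $\T^*(\cT U)\simeq (U\times \mathcal{X}^*)\times(\mathcal{X}^*\times \mathcal{X}^{**})$ are the canonical ones, so that the computation of $\omega^\flat$ in coordinates really corresponds to the intrinsic map. This is routine in the Banach setting because the model space $\mathcal{X}$ is fixed and chart transition differentials preserve the $\mathcal{X}\oplus\mathcal{X}^*$ decomposition of $\T(\cT P)$; the main obstacle is merely bookkeeping, not conceptual. Once that is granted, no further difficulty arises.
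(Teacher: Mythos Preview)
The paper does not actually supply a proof of this theorem; it merely states it with a reference to Marsden (``see \cite{Mar_68} for details''). Your argument is precisely the standard one found there: compute $\omega^\flat$ in the local model $\mathcal{X}\times\mathcal{X}^*\to\mathcal{X}^*\times\mathcal{X}^{**}$, identify it with $(v,\eta)\mapsto(-\eta,\iota_{\mathcal{X}}(v))$, and reduce injectivity and surjectivity to the corresponding properties of the canonical embedding $\iota_{\mathcal{X}}$. So your approach coincides with what the paper defers to.

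Two small remarks. First, your parenthetical claim that ``chart transition differentials preserve the $\mathcal{X}\oplus\mathcal{X}^*$ decomposition of $\T(\cT P)$'' is not literally true: a change of chart on $P$ induces a transition on $\cT P$ whose tangent map is block \emph{triangular}, not block diagonal. This does not affect the argument, since whether $\omega^\flat_u$ is an isomorphism is an intrinsic (chart-independent) property; you can simply drop that sentence. Second, the paper's notion of ``strong'' requires $\omega^\flat_u$ to be a \emph{topological} isomorphism, and you only argue bijectivity. This is harmless here: once $\omega^\flat_u$ is a continuous bijection between the Banach spaces $\mathcal{X}\times\mathcal{X}^*$ and $\mathcal{X}^*\times\mathcal{X}^{**}$, the Open Mapping Theorem (stated in Section~\ref{sec:2}) gives continuity of the inverse.
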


\begin{definition}
    Let $P$ be a Banach manifold and let $\{\cdot, \cdot\}$ denote a bilinear operation on $\Cinfty(P)$. If $(\Cinfty(P), \{\cdot,\cdot\})$ is a Lie algebra and $\{\cdot, \cdot\}$ satisfies the {\it Leibniz rule}, namely $\{fg, h\} = \{f,h\}g + f\{g,h\}$ for all $f,g,h\in \Cinfty(P)$, then $\{ \cdot, \cdot\}$ is called a \textit{Poisson bracket} or a \textit{Poisson structure} on $P$, and in this case $(P, \{\cdot, \cdot \})$ is said to be a \textit{Poisson manifold}.
\end{definition}

\begin{definition}\label{def:ham_vecfield}
    Let $(P,\omega)$ be a symplectic manifold.  A vector field $X: P \rightarrow \T P$ is called \textit{Hamiltonian} if there exists  a function $h\in \Cinfty(P)$ such that $\inn{X}\omega=\dd h$. The function $h$ is called a \textit{Hamiltonian function} of $X$. Whenever $\inn{X}\omega$ is closed, and thus locally exact (see \cite{Mar_68,MW_74}), we say that $X$ is a \textit{locally Hamiltonian vector field}.
\end{definition}

The set of (locally) Hamiltonian vector fields on $(P,\omega)$ will be denoted by (resp. $\Hamloc(P,\omega)$) $\Ham(P,\omega)$ or (resp.  $\Hamloc(P)$) $\Ham(P)$ when $\omega$ is understood from context.

If $\omega$ is a strong symplectic form, then every $h\in \Cinfty(P)$ is the Hamiltonian function of a Hamiltonian vector field as in the case of symplectic forms on finite-dimensional manifolds. For general Banach symplectic manifolds, each $\omega_u^{\flat}$, with $u\in P$, is only in general injective and some $h\in \Cinfty(P)$ may not give rise to any Hamiltonian vector field. If a function $h\in\Cinfty(P)$ has a Hamiltonian vector field, it is unique. This motivates the following definition.

\begin{definition}
    Let $(P,\omega)$ be a Banach symplectic manifold. We call a function $h\in\Cinfty(P)$ {\it admissible} if it is the Hamiltonian function of a Hamiltonian vector field.
\end{definition}

We will denote by $X_h$ the unique Hamiltonian vector field associated with an admissible function $h$. The following result is immediate. 

\begin{proposition}\label{Prop:LieHam}
    Let $(P,\omega)$ be a symplectic manifold. The set $\Hamloc(P)$ is a Lie algebra, while $\Ham(P)$ is an ideal of $\Hamloc(P)$ and $[\Hamloc(P),\Hamloc(P)]\subset \Ham(P)$. In particular, if $X_1,X_2\in \Hamloc(P)$, then $[X_1,X_2]=-X_{\omega(X_1,X_2)}$.
\end{proposition}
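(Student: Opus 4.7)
The plan is to first verify that $\Hamloc(P)$ is closed under linear combinations, which is immediate from the linearity of the map $X\mapsto \inn{X}\omega$ together with the fact that sums of closed forms remain closed. The substantive content is closure under the Lie bracket, together with the sharper statement that the bracket of two locally Hamiltonian vector fields is actually globally Hamiltonian. Both will fall out of a single computation of $\inn{[X,Y]}\omega$.

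For the key step I would invoke two standard identities on smooth differential forms: Cartan's formula $\Lie_X=\inn{X}\dd+\dd\,\inn{X}$, recalled in \eqref{eq:Liederivative}, and the derivation identity $\inn{[X,Y]}=\Lie_X\inn{Y}-\inn{Y}\Lie_X$. For $X,Y\in\Hamloc(P)$, the closedness of $\omega$ and of $\inn{X}\omega$ yields $\Lie_X\omega=\dd\,\inn{X}\omega+\inn{X}\dd\omega=0$, and analogously for $Y$. Consequently,
\begin{equation*}
\inn{[X,Y]}\omega=\Lie_X\inn{Y}\omega-\inn{Y}\Lie_X\omega=\Lie_X\inn{Y}\omega=\dd\,\inn{X}\inn{Y}\omega+\inn{X}\dd\,\inn{Y}\omega,
\end{equation*}
and the last summand vanishes because $\inn{Y}\omega$ is closed. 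Since $\inn{X}\inn{Y}\omega=\omega(Y,X)=-\omega(X,Y)$, one obtains
\begin{equation*}
\inn{[X,Y]}\omega=-\dd\,\omega(X,Y),
\end{equation*}
which at once exhibits $[X,Y]$ as the Hamiltonian vector field of the admissible function $-\omega(X,Y)$.

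All three remaining claims then follow at once. First, $[X,Y]\in\Ham(P)\subset\Hamloc(P)$, so $\Hamloc(P)$ is closed under $[\cdot,\cdot]$; antisymmetry and the Jacobi identity are inherited from the ambient Lie algebra of vector fields on $P$, so $\Hamloc(P)$ is a Lie algebra. Second, the same inclusion is precisely $[\Hamloc(P),\Hamloc(P)]\subset\Ham(P)$. Third, if $Y\in\Ham(P)$ and $X\in\Hamloc(P)$, the same formula gives $[X,Y]\in\Ham(P)$, so $\Ham(P)$ is an ideal of $\Hamloc(P)$. The explicit identification $[X,Y]=-X_{\omega(X,Y)}$ follows from the non-degeneracy of $\omega$, which makes the Hamiltonian vector field of a given admissible function unique whenever it exists.

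The main obstacle, relative to the finite-dimensional setting, is that on a Banach symplectic manifold $\omega^\flat$ need not be surjective, so one cannot assume a priori that a given smooth function is admissible. Here this causes no trouble, because the displayed identity does not presuppose admissibility of $\omega(X,Y)$; it constructs the Hamiltonian vector field of $-\omega(X,Y)$ as $[X,Y]$, thereby proving admissibility as a by-product. The Cartan and derivation identities themselves hold for smooth differential forms on Banach manifolds by the same arguments as in finite dimensions, so no additional functional-analytic input is needed.
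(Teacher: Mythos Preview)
Your proof is correct and is the standard argument; the paper itself omits the proof, remarking only that the result is immediate, so there is nothing to compare against beyond noting that your computation supplies exactly the details one would expect.
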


Every symplectic form on a finite-dimensional manifold gives rise to a Poisson structure. This is no longer true  for a general symplectic form, as not every function is associated with a Hamiltonian vector field. A way to ensure this is given by the following proposition.
\begin{proposition}
    A strong symplectic manifold $(P,\omega)$ induces a Poisson structure on $P$ given by 
    \begin{equation}\label{PoIn}
        \{f,g\}_\omega := \omega( X_f, X_g)\,, \qquad \forall f,g \in \Cinfty(P)\,.
    \end{equation}
\end{proposition}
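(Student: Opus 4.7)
The plan is to verify the four axioms of a Poisson bracket directly: bilinearity, antisymmetry, the Leibniz rule, and the Jacobi identity. Since $\omega$ is strong, every $f\in\Cinfty(P)$ admits a unique Hamiltonian vector field $X_f$, so the bracket $\{f,g\}_\omega$ is defined on all of $\Cinfty(P)\times\Cinfty(P)$. Bilinearity is immediate from bilinearity of $\omega$ and the fact that $f\mapsto X_f$ is linear, which follows from linearity of $\dd$ combined with the uniqueness of the Hamiltonian vector field guaranteed by strongness. Antisymmetry follows at once from skew-symmetry of $\omega$.

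For the Leibniz rule, I would first show that $X_{fg}=fX_g+gX_f$ for all $f,g\in\Cinfty(P)$. Indeed, $\dd(fg)=f\,\dd g+g\,\dd f=f\,\inn{X_g}\omega+g\,\inn{X_f}\omega=\inn{fX_g+gX_f}\omega$, and uniqueness of Hamiltonian vector fields yields the claim. Then
\begin{equation*}
\{fg,h\}_\omega=\omega(fX_g+gX_f,X_h)=f\,\omega(X_g,X_h)+g\,\omega(X_f,X_h)=f\{g,h\}_\omega+g\{f,h\}_\omega\,.
\end{equation*}

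The core of the argument is the Jacobi identity. The key observation is that any Hamiltonian vector field preserves $\omega$: by Cartan's formula \eqref{eq:Liederivative} and closedness of $\omega$,
\begin{equation*}
\Lie_{X_f}\omega=\inn{X_f}\dd\omega+\dd\,\inn{X_f}\omega=0+\dd(\dd f)=0\,.
\end{equation*}
Evaluating $\Lie_{X_f}\omega$ on the pair $(X_g,X_h)$ via the standard formula for the Lie derivative of a two-form on vector fields gives
\begin{equation*}
0=X_f(\omega(X_g,X_h))-\omega([X_f,X_g],X_h)-\omega(X_g,[X_f,X_h])\,.
\end{equation*}
Using Proposition \ref{Prop:LieHam}, $[X_f,X_g]=-X_{\{f,g\}_\omega}$ and $[X_f,X_h]=-X_{\{f,h\}_\omega}$, together with the identity $X_f(k)=\dd k(X_f)=\omega(X_k,X_f)=\{k,f\}_\omega$ valid for every $k\in\Cinfty(P)$, the right-hand side becomes a cyclic algebraic expression in the brackets that, after applying antisymmetry, collapses to $\{f,\{g,h\}_\omega\}_\omega+\{g,\{h,f\}_\omega\}_\omega+\{h,\{f,g\}_\omega\}_\omega=0$.

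The main potential obstacle is ensuring that Cartan's formula and the expansion of $\Lie_{X_f}\omega$ on pairs of vector fields are legitimate in the Banach setting; both are covered by the infinite-dimensional calculus developed in Sections \ref{sec:3} and \ref{sec:4}. The strong hypothesis is used decisively twice: to guarantee that $X_f$ exists for every $f\in\Cinfty(P)$ (so that the bracket is globally defined), and to ensure that $\{f,g\}_\omega$ itself has a Hamiltonian vector field $X_{\{f,g\}_\omega}$, which is what allows us to invoke Proposition \ref{Prop:LieHam} in the Jacobi computation.
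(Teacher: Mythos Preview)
Your argument is correct and follows the standard route for establishing that a strong symplectic structure induces a Poisson bracket. The paper itself states this proposition without proof, treating it as a known fact, so there is no proof in the paper to compare against; your verification of bilinearity, antisymmetry, the Leibniz rule via $X_{fg}=fX_g+gX_f$, and the Jacobi identity via $\Lie_{X_f}\omega=0$ together with Proposition~\ref{Prop:LieHam} is exactly the classical argument one would expect and is sound in the Banach setting under the smoothness and Cartan-calculus conventions adopted in Sections~\ref{sec:3}--\ref{sec:4}.
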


\begin{definition}
    Let $(P_1,\omega_1)$ and $(P_2,\omega_2)$ be two symplectic manifolds. A diffeomorphism $\phi : P_1 \rightarrow P_2$ is called a \textit{symplectic map}, or a \textit{symplectomorphism}, if $\phi^*\omega_2=\omega_1$.
\end{definition}

We hereafter focus on physical systems determined by the so-called Hamiltonian systems.

\begin{definition}
    A \textit{Hamiltonian system} is a triple $(P,\omega, h)$, where $(P,\omega)$ is a symplectic manifold and $h:(t,x)\in \mathbb{R}\times P\mapsto h_t(x)=h(t,x)\in \mathbb{R}$ is a $t$-dependent function on $P$ such that for every $t\in\R$, the function $h_t\in\Cinfty(P)$ is an admissible function for $(P,\omega)$. We call $h$ the \textit{Hamiltonian function} of the system.
\end{definition}

Every Hamiltonian system $(P,\omega,h)$ gives rise to a unique $t$-dependent vector field $X$, called the \textit{Hamiltonian vector field} of the system, satisfying that $\inn{X_t}\omega = \dd h_t$. The system of differential equations describing the integral curves $c:t\in \mathbb{R}\mapsto c(t)\in P$ of $X$ reads 
\begin{equation*}
    \frac{\dd c(t)}{\dd t} = (X_t\circ c)(t)\,.
\end{equation*}
This system of differential equations is called the \textit{Hamilton's equations} for the system $(P,\omega, h)$. To emphasise the role of the Hamiltonian function $h$ to determine its associated vector field, we will sometimes write $X_h = X$. In the Marsden--Weinstein reduction procedure, one is interested in a special family of Hamiltonian systems admitting a particular type of symmetries described next.

\begin{definition}
    A \textit{$G$-invariant Hamiltonian system} relative to the Lie group action $\Phi:G\times P\rightarrow P$ is a Hamiltonian system $(P,\omega,h)$ such that
    \begin{equation*}
        \Phi_g^*h_t = h_t\,,\qquad \Phi^*_g\omega=\omega\,,\qquad \forall g\in G\,,\qquad \forall t\in \mathbb{R}\,.
    \end{equation*}
\end{definition}
If $\Phi:G\times P\rightarrow P$ is understood from context, we will simply talk about a $G$-invariant Hamiltonian systems, omitting the Lie group action.

\begin{definition}
    A Lie group action $\Phi: G\times P \rightarrow P$ on a symplectic manifold $(P, \omega)$ is called \textit{symplectic} if, for every $g\in G$, the map $\Phi_g:P\rightarrow P$ is a symplectomorphism. Meanwhile, $\Phi$ is called {\it Hamiltonian} if the fundamental vector fields of $\Phi$ are Hamiltonian relative to the symplectic form $\omega$.
\end{definition}

\begin{example}
    \label{Ex:expaction}
    Let $\mathcal{H}$ be a complex Hilbert space and consider the constant symplectic form $\omega_B$ on $\mathcal{H}$ induced by the continuous skew-symmetric bilinear form $B:(\psi_1,\psi_2)\in \mathcal{H}\times \mathcal{H}\mapsto \mathfrak{Im}\langle \psi_1|\psi_2\rangle \in\mathbb{R}$. Given the skew-adjoint operator $\ii\Id_\mathcal{H}$, the Stone--von Neumann  Theorem \cite{Hal_13,Sto_32} states that each $\exp(\ii t\Id_{\mathcal{H}})$ is a continuous mapping on $\mathcal{H}$ and, since it is linear, it is also smooth (cf. \cite{Nee_05}). Since $t\ii\Id_{\mathcal{H}}$ is continuous, $\exp(\ii t\Id_{\mathcal{H}})=\ee^{\ii t}\Id_{\mathcal{H}}$ and the operators $\exp(\ii t\Id_{\mathcal{H}})$, for $t\in \mathbb{R}$, span a group isomorphic to $U(1)$, namely the unitary group of transformations on a one-dimensional Hilbert space. Define the Lie group action
    \begin{equation*}
        \Phi:(\exp(t{\rm i}\Id_\mathcal{H}),\psi)\in U(1)\times \mathcal{H}\longmapsto e^{\ii t}\psi\in \mathcal{H}\,.
    \end{equation*}
    Then, $B(\Phi_g\psi_1,\Phi_g\psi_2)=B(\psi_1,\psi_2)$ for every $\psi_1,\psi_2\in \mathcal{H}$ and $g\in U(1)$. Consequently,  $\Phi^*_g\omega_B=\omega_B$ for every $g\in G$ and $\Phi$ is a symplectic action.\demo
\end{example}

\begin{definition}
Let $(P,\omega)$ be a symplectic manifold. A \textit{momentum map} for a Lie group action $\Phi:G\times P\rightarrow P$  is a mapping $J : P \rightarrow \mathfrak{g}^*$ such that
\begin{equation*}
\inn{\xi_P}\omega = \dd \langle J , \xi\rangle\,,\qquad \forall \xi\in \mathfrak{g}\,, \quad \forall u\in P\,,
\end{equation*}
where $\xi_P$ is the fundamental vector field of $\Phi$ related to $\xi\in\mathfrak{g}$, namely
\begin{equation*}
\xi_P(u):=\frac{\rm d}{{\rm d}t}\bigg|_{t=0}\Phi(\exp(t\xi),u)\,,\qquad \forall u\in P\,.
\end{equation*}
\end{definition}

Thus, $J_{\xi}:u\in P\mapsto \langle J(u), \xi \rangle\in \mathbb{R}$ is a Hamiltonian function of $\xi_P$ for each $\xi\in \mathfrak{g}$. 

\begin{theorem}\label{Th:HNT}{\bf(Hamiltonian Noether's theorem \cite{MW_74})} \label{NoetherTh}
    Consider a Hamiltonian system $(P,\omega,h)$. Let $\Phi: G \times P \rightarrow P$ be a Lie group action on a symplectic manifold $(P,\omega)$ with a momentum map $J:P \rightarrow \mathfrak{g}^*$. If the $t$-dependent function $h$ on $P$ is $G$-invariant, namely $h_t\circ \Phi_g = h_t$ for all $g\in G$ and $t\in \mathbb{R}$, then $J$ is conserved on the trajectories of $X_h$, that is $J\circ F_t = J$, for all $t\in \mathbb{R}$, where $F$ is the flow of $X_h$. 
\end{theorem}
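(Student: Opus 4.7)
The plan is to reduce the statement to verifying that each component function $J_\xi := \langle J, \xi\rangle$, with $\xi\in \mathfrak{g}$, is a constant of motion for $X_h$, and then to integrate this infinitesimal conservation law along the flow $F_t$. The core computation is a short manipulation combining the contraction identities defining a Hamiltonian vector field and a momentum map, so the main effort lies in arranging that everything is evaluated on a dense domain where the relevant differentials, contractions and flow are all defined.

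First, I would differentiate the $G$-invariance condition $h_t\circ \Phi_{\exp(s\xi)} = h_t$ at $s=0$ to obtain
\begin{equation*}
    \xi_P(h_t) = (\dd h_t)(\xi_P) = 0
\end{equation*}
at every point where both sides make sense. Second, combining the defining identities $\inn{X_{h_t}}\omega = \dd h_t$ and $\inn{\xi_P}\omega = \dd J_\xi$ yields
\begin{equation*}
    X_{h_t}(J_\xi) \;=\; (\dd J_\xi)(X_{h_t}) \;=\; \omega(\xi_P, X_{h_t}) \;=\; -\omega(X_{h_t},\xi_P) \;=\; -(\dd h_t)(\xi_P) \;=\; 0.
\end{equation*}
Third, combining this with the flow equation $\frac{\dd}{\dd t}F_t(u) = X_{h_t}(F_t(u))$ gives, by the chain rule,
\begin{equation*}
    \frac{\dd}{\dd t}(J_\xi\circ F_t)(u) \;=\; (\dd J_\xi)_{F_t(u)}\bigl(X_{h_t}(F_t(u))\bigr) \;=\; 0,
\end{equation*}
so $J_\xi\circ F_t = J_\xi\circ F_0 = J_\xi$. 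Since this holds for every $\xi \in \mathfrak{g}$ and the dual pairing separates points of $\mathfrak{g}^*$, we conclude $J\circ F_t = J$.

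The main obstacle, as opposed to the standard finite-dimensional argument, lies in the domain bookkeeping intrinsic to the setting of Sections \ref{sec:3}--\ref{sec:4}. The fundamental vector fields $\xi_P$ and the Hamiltonian vector field $X_{h_t}$ are a priori defined only on their respective (possibly distinct) dense subdomains of $P$, and the differentiation of $h_t \circ \Phi_{\exp(s\xi)}$ at $s=0$ requires the curve $s\mapsto \Phi_{\exp(s\xi)}(u)$ to land in a subset on which $h_t$ is Fr\'echet differentiable. To make the chain of identities above rigorous one must therefore exhibit a dense $G$-invariant subdomain of $P$, preserved by $F_t$, on which $\xi_P$, $X_{h_t}$ and $\dd J_\xi$ are simultaneously well-defined and smooth; once such a domain is fixed, the computation above is meaningful pointwise and the conservation $J\circ F_t = J$ follows.
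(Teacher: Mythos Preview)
The paper does not supply its own proof of this theorem; it is stated with a citation to \cite{MW_74} and then used as a black box. Your argument is the standard Noether computation and is correct, and your final paragraph appropriately flags the domain bookkeeping required to make each step meaningful in the normed/Banach setting of Sections~\ref{sec:3}--\ref{sec:4}.
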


To introduce the symplectic Marsden--Weinstein reduction, one may consider a quite general property of momentum maps given by the following definition (cf. \cite{AM_78}), which makes use of the coadjoint action $\Ad:G\times \mathfrak{g}^*\rightarrow \mathfrak{g}^*$ on the dual to the Lie algebra $\mathfrak{g}^*$ of $G$.
\vskip 0.3cm

\hspace*{-\parindent}%
\begin{minipage}{0.75\textwidth}
    \begin{definition}
    A momentum map $J:P\rightarrow \mathfrak{g}^*$ of a Lie group action $\Phi: G\times P\rightarrow P$ is said to be Ad$^*$-\textit{equivariant} when $J\circ \Phi_g = (\Ad^*)_g\circ J$ for all $g\in G$, where $\Ad^*$ is the coadjoint action of $G$, namely  $\Ad^*:(g,\theta)\in G\times \mathfrak{g}^*\mapsto (\Ad^*)_g:=\theta\circ \Ad_{g^{-1}}\in \mathfrak{g}^*$ and $(\Ad^*)_{g}$ is the transpose of the morphism $\Ad_{g^{-1}}:\mathfrak{g}\rightarrow \mathfrak{g}$. In other words, $J$ is $\Ad^*$-\textit{equivariant} if the  diagram aside commutes for every $g\in G$.
    \end{definition}  
\end{minipage}
\hfill
\begin{minipage}{0.2\textwidth}
\begin{tikzcd}
    P
    \arrow[r,"J"]
    \arrow[d, swap, "\Phi_g"]& \mathfrak{g}^*
    \arrow[d,"(\Ad^{*})_g"]\\
    P
    \arrow[r,"J"]&
    \mathfrak{g}^*
    \end{tikzcd}
\end{minipage}

\vskip 0.3cm

The $\Ad^*$-equivariance of momentum maps is not really necessary in what follows: it is only commonly used. Instead, one can use that  a momentum map related to a Lie group action $\Phi:G\times P\rightarrow P$ is always equivariant related to a new Lie group action $\Delta:G\times \mathfrak{g}^*\rightarrow \mathfrak{g}^*$ defined by $\Delta(g,\vartheta)=({\rm Ad}^*)_{g}+
\sigma(g)$ for every $g\in G$ and $\vartheta\in \mathfrak{g}^*$. Moreover, $\sigma:G\rightarrow \mathfrak{g}^*$ is defined to be the well defined constant element in $\mathfrak{g}^*$ of the form
$$
\sigma(g)=J\circ \Phi_g-({\rm Ad}^*)_g\circ J\,,\qquad \forall g\in G\,.
$$
The Lie group action $\Delta$ is frequently called the affine Lie group action related to $J$ and $\Phi$. We hereafter write $G_\mu$ for the isotropy subgroup of $G$ leaving invariant an element $\mu\in \mathfrak{g}^*$ relative to its associated affine Lie group action.

When $(P,\omega,h)$ is a $G$-invariant Hamiltonian system, one expects that the $G$-invariance of the system will allow for the simplification of the system, e.g. giving rise to a new symplectic manifold of `smaller' dimension. This is next accomplished through the Marsden--Weinstein reduction theorem allowing one to go from a symplectic manifold having a Hamiltonian Lie group action of symmetries of $h$ to another `reduced' symplectic manifold. Then, the initial Hamiltonian function $h$ will also give rise to a new $t$-dependent function on the `reduced' manifold, which finishes the reduction procedure (see \cite{MW_74} for details). To do so, we first need the give the following definition.

\begin{definition}
    Given a map $f:P\rightarrow N$ between two manifolds, we say that $n\in N$ is a {\it  weakly regular value} of $f$ if $f^{-1}(n)\subset P$ is a submanifold, $\T_uf^{-1}(n)=\ker \T_{u}f$ for every $u\in f^{-1}(n)$, and $\ker \T_{u}f$ admits a closed supplementary in $\T_uP$.
\end{definition}

Note that the condition on the closed supplementary in $\T_uP$ is immediately satisfied when $P$ is a finite-dimensional manifold. Nevertheless, this property is not always satisfied in the infinite-dimensional case. In the infinite-dimensional real, immersions, submersions and other types of mappings are required to satisfy similar conditions to extend to the infinite-dimensional case the properties they have on finite-dimensional manifolds. 

\begin{theorem}{{\bf (Marsden--Weinstein reduction theorem)}}\label{MWtheorem}
    Consider a symplectic manifold $(P,\omega)$. Let $G$ be a Lie group acting on $P$ and let $J: P \rightarrow \mathfrak{g}^*$ be an associated momentum map. Let $\mu$ be a weakly regular value of $J$ and suppose that $G_{\mu}$ acts freely and properly on the submanifold $J^{-1}(\mu)$. If $\jmath_{\mu}: J^{-1}(\mu) \hookrightarrow P$ is the natural embedding of $J^{-1}(\mu)$ within $P$ and $\pi_{\mu}: J^{-1}(\mu) \rightarrow P_{\mu}$ is the canonical projection, then there exists a unique symplectic form $\omega_{\mu}$ on $P_{\mu}:=J^{-1}(\mu)/G_\mu$ such that 
\begin{equation*}
\pi^*_{\mu} \omega_{\mu}=\jmath^*_{\mu}\omega\,.
\end{equation*}
\end{theorem}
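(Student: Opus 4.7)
The plan is to construct $\omega_\mu$ by descent through $\pi_\mu$ and then verify the four required properties: well-definedness, smoothness, closedness, and non-degeneracy. Uniqueness will be immediate, because Theorem \ref{Iloc} together with the freeness and properness of the $G_\mu$-action on the submanifold $J^{-1}(\mu)$ guarantees that $\pi_\mu$ is a submersion, so $\T_u\pi_\mu$ is surjective and the equation $\pi^*_\mu\omega_\mu=\jmath^*_\mu\omega$ determines $\omega_\mu$ at each point. Concretely, I would define
\begin{equation*}
(\omega_\mu)_{[u]}(\T_u\pi_\mu(v),\T_u\pi_\mu(w)):=(\jmath^*_\mu\omega)_u(v,w), \qquad u\in J^{-1}(\mu), \quad v,w\in\T_uJ^{-1}(\mu).
\end{equation*}

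To check well-definedness, I need two facts. First, $\jmath^*_\mu\omega$ is $G_\mu$-invariant: the symplecticity of the action together with the affine-action characterisation of $G_\mu$ stated just before Theorem \ref{MWtheorem} ensures that $G_\mu$ preserves both $\omega$ and $J^{-1}(\mu)$, so $\jmath^*_\mu\omega$ descends at the level of orbits. Second, at every $u\in J^{-1}(\mu)$, the tangent space $\T_u(G_\mu\cdot u)$ to the $G_\mu$-orbit must lie in the kernel of $\jmath^*_\mu\omega_u$. This is the central algebraic identity: for $\xi\in\mathfrak{g}_\mu$ and $v\in\T_uJ^{-1}(\mu)=\ker\T_uJ$, the momentum map condition gives $\omega_u(\xi_P(u),v)=\langle\T_uJ(v),\xi\rangle=0$. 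Combined with the surjectivity of $\T_u\pi_\mu$ and the fact that $\ker\T_u\pi_\mu=\T_u(G_\mu\cdot u)$, both lifting ambiguities are eliminated.

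Smoothness of $\omega_\mu$ follows from the fact that $\pi_\mu$ is a surjective submersion and that $\jmath^*_\mu\omega$ is smooth and $G_\mu$-invariant, so $\omega_\mu$ is locally expressed through any smooth local section of $\pi_\mu$. For closedness, one computes $\pi^*_\mu\dd\omega_\mu=\dd\pi^*_\mu\omega_\mu=\dd\jmath^*_\mu\omega=\jmath^*_\mu\dd\omega=0$, and since $\T_u\pi_\mu$ is surjective, pullback by $\pi_\mu$ is injective on differential forms, whence $\dd\omega_\mu=0$.

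The main obstacle, and the step where the weakly regular value hypothesis and the closed-supplementary condition really matter, is non-degeneracy. Suppose $(\omega_\mu)_{[u]}(\T_u\pi_\mu(v),\cdot)=0$; then $\jmath^*_\mu\omega_u(v,w)=0$ for every $w\in\T_uJ^{-1}(\mu)=\ker\T_uJ$, i.e.\ $\omega^\flat_u(v)$ annihilates $\ker\T_uJ$. The argument I would carry out is to show that
\begin{equation*}
(\ker\T_uJ)^{\omega}:=\bigl\{z\in\T_uP:\omega_u(z,\cdot)|_{\ker\T_uJ}=0\bigr\}=\T_u(G\cdot u),
\end{equation*}
which reduces via the momentum map condition $\inn{\xi_P}\omega=\dd\langle J,\xi\rangle$ to identifying the annihilator of $\ker\T_uJ$ in $\T^*_uP$ with the image of $\T_uJ^{\top}$; here the closed supplementary supplied by the weakly regular value hypothesis is what allows this identification in the Banach setting, because in general infinite dimensions one cannot conclude $(\ker\T_uJ)^\circ=\Ima(\T_uJ)^{\top}$ without such a splitting. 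Once $(\ker\T_uJ)^\omega=\T_u(G\cdot u)$ is established, intersecting with $\T_uJ^{-1}(\mu)$ and using that $\mathfrak{g}_\mu=\{\xi\in\mathfrak{g}:\T_uJ(\xi_P(u))=0\}$ gives $v\in\T_u(G_\mu\cdot u)=\ker\T_u\pi_\mu$, so $\T_u\pi_\mu(v)=0$. This shows $\omega_\mu^{\flat}$ is injective and completes the proof.
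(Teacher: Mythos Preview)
The paper does not supply its own proof of this theorem: it is stated as a known result and attributed to \cite{MW_74}, with the remark that the Banach symplectic manifold need not be strongly symplectic. So there is no in-paper argument to compare against.

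Your outline is the classical Marsden--Weinstein argument and is correct in substance. A couple of minor points you gloss over but which are easy to fill in: (i) you invoke ``symplecticity of the action'' for the $G_\mu$-invariance of $\jmath_\mu^*\omega$, but the theorem as stated only gives you a momentum map; what you actually use is that each $\xi_P$ is Hamiltonian, hence $\Lie_{\xi_P}\omega=0$, which for connected $G_\mu$ (or under the usual implicit assumption that the action is symplectic) yields $\Phi_g^*\omega=\omega$ for $g\in G_\mu$. (ii) The characterisation $\mathfrak{g}_\mu=\{\xi\in\mathfrak{g}:\T_uJ(\xi_P(u))=0\}$ that you quote in the non-degeneracy step relies on the equivariance of $J$ with respect to the affine action $\Delta$ introduced just before the theorem; this is available here precisely because the paper defines $G_\mu$ as the isotropy group for $\Delta$, not for $\Ad^*$. (iii) Your use of the closed-supplementary hypothesis to obtain $(\ker\T_uJ)^\circ=\Ima(\T_uJ)^\top$ is exactly the place where the weak-regularity assumption is consumed; combined with the injectivity of $\omega^\flat_u$ (which is all one has in the weak symplectic case), this closes the non-degeneracy argument as you indicate.
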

It is important to point out that Theorem \ref{MWtheorem} does not require the Banach symplectic manifold $P$ to be strongly symplectic \cite{MW_74}. The following theorem uses the Marsden--Weinstein reduction theorem to reduce the dynamics.

\begin{theorem}
    Let the conditions of Theorem \ref{MWtheorem} hold. Let $(P,\omega,h_t)$ be a $G$-invariant Hamiltonian system relative to the Lie group action $\Phi:G\times P\rightarrow P$. Then,
    \begin{itemize}
        \item the flow of $X_h$, namely $F$, leaves $J^{-1}(\mu)$ invariant,
        \item the diffeomorphisms $F_t$ commute with every $\Phi_{g}$, for $g\in G_\mu$, giving rise to a flow $K$ on $P_\mu$ such that $\pi_\mu\circ F_t=K_t\circ \pi_\mu$,
        \item the flow $K$ corresponds to a vector field with a Hamiltonian function $h_\mu\in \Cinfty(P_\mu)$ determined by $h_{\mu}\circ \pi_{\mu} = h\circ \jmath_{\mu}$.
    \end{itemize}
\end{theorem}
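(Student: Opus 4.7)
The plan is to prove the three bullets sequentially, relying on the Hamiltonian Noether theorem (Theorem \ref{Th:HNT}) for the first, uniqueness of flows of Hamiltonian vector fields for the second, and the defining relation $\pi_\mu^*\omega_\mu=\jmath_\mu^*\omega$ for the third. For the invariance claim, the $G$-invariance of $h_t$ and the existence of a momentum map $J$ let me invoke Theorem \ref{Th:HNT} at each fixed time $t$, yielding $J\circ F_t=J$. Therefore $F_t(J^{-1}(\mu))\subseteq J^{-1}(\mu)$ throughout the domain of definition of $F$.

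For the second bullet, I would first show $(\Phi_g)_*X_{h_t}=X_{h_t}$ for every $g\in G$: from $\Phi_g^*\omega=\omega$ and $\Phi_g^*h_t=h_t$ one gets $\inn{(\Phi_g)_*X_{h_t}}\omega=\dd h_t=\inn{X_{h_t}}\omega$, and injectivity of $\omega^\flat$ forces equality. Uniqueness of flows then gives $\Phi_g\circ F_t=F_t\circ \Phi_g$. Restricting $g$ to $G_\mu$, which preserves $J^{-1}(\mu)$ by definition of the isotropy subgroup relative to the affine action associated with $J$, the restriction $F_t|_{J^{-1}(\mu)}$ becomes $G_\mu$-equivariant. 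Since $G_\mu$ acts freely and properly on $J^{-1}(\mu)$, Theorem \ref{Iloc} makes $\pi_\mu$ a smooth submersion admitting local smooth sections; therefore $F_t|_{J^{-1}(\mu)}$ descends uniquely to a smooth map $K_t\colon P_\mu\to P_\mu$ with $\pi_\mu\circ F_t\circ\jmath_\mu=K_t\circ\pi_\mu$, and the flow property plus smoothness in $t$ are inherited from $F$.

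For the third bullet, the $G$-invariance of $h_t$ makes $h\circ\jmath_\mu$ constant on $G_\mu$-orbits in $J^{-1}(\mu)$, so it descends to a smooth $t$-dependent function $h_\mu$ on $P_\mu$ characterised by $h_\mu\circ \pi_\mu = h\circ \jmath_\mu$. Let $Y_t$ denote the infinitesimal generator of $K$. Differentiating $\pi_\mu\circ F_t\circ\jmath_\mu=K_t\circ \pi_\mu$ at $t$ gives $\T_u\pi_\mu(X_{h_t}(u))=Y_t(\pi_\mu(u))$ for every $u\in J^{-1}(\mu)$. For $v_u\in \T_u J^{-1}(\mu)$, a chain of identifications using $\pi_\mu^*\omega_\mu=\jmath_\mu^*\omega$ and $\inn{X_{h_t}}\omega=\dd h_t$ yields
\begin{equation*}
(\pi_\mu^*\inn{Y_t}\omega_\mu)_u(v_u) = (\pi_\mu^*\omega_\mu)_u(X_{h_t}(u),v_u) = (\jmath_\mu^*\omega)_u(X_{h_t}(u),v_u) = (\dd h_t)_u(v_u) = (\pi_\mu^*\dd h_{\mu,t})_u(v_u).
\end{equation*}
Surjectivity of $\T_u\pi_\mu$, guaranteed by the submersion property, then delivers $\inn{Y_t}\omega_\mu=\dd h_{\mu,t}$ on $P_\mu$, identifying $Y_t$ as $X_{h_{\mu,t}}$ and $K$ as its flow.

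The main technical obstacle is ensuring that $X_{h_t}$ is tangent to $J^{-1}(\mu)$, since the restriction of $F_t$ and the displayed computation both tacitly require it. In the Banach/normed context this is not automatic, but is supplied by the weak regularity hypothesis on $\mu$: the identity $\T_u J^{-1}(\mu)=\ker \T_uJ$ together with $\T_uJ(X_{h_t}(u))=0$, the latter being the infinitesimal form of the Noether conservation law $J\circ F_t=J$, yields the needed tangency. The closed supplementary clause built into weak regularity is what subsequently legitimises the quotient and pullback arguments above, so that all finite-dimensional-style manipulations transfer intact to the infinite-dimensional setting.
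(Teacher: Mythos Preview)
The paper does not supply a proof of this theorem; it is stated immediately after Theorem~\ref{MWtheorem} as a known consequence (with an implicit reference to \cite{MW_74}), followed only by the remark that Theorem~\ref{Iloc} is what guarantees $J^{-1}(\mu)/G_\mu$ is a manifold in the infinite-dimensional setting. There is therefore nothing in the paper to compare your argument against.

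Your proof is correct and is precisely the standard one found in the literature (Marsden--Weinstein, Abraham--Marsden): Theorem~\ref{Th:HNT} for invariance of $J^{-1}(\mu)$, $G$-equivariance of $X_{h_t}$ plus uniqueness of flows for the commutation $\Phi_g\circ F_t=F_t\circ\Phi_g$ and the descent to $P_\mu$, and the relation $\pi_\mu^*\omega_\mu=\jmath_\mu^*\omega$ together with surjectivity of $\T\pi_\mu$ for the Hamiltonian identification on the quotient. Your final paragraph, singling out the tangency of $X_{h_t}$ to $J^{-1}(\mu)$ and deriving it from the weak-regularity identity $\T_uJ^{-1}(\mu)=\ker\T_uJ$ combined with the infinitesimal Noether statement $\T_uJ(X_{h_t}(u))=0$, is exactly the technical point that needs attention in the Banach setting and is the content behind the paper's one-line remark; spelling it out as you do is an improvement over what the paper offers.
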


It is remarkable that the proper description of the theorem in the infinite-dimensional case requires Theorem \ref{Iloc} so as to ensure that $J^{-1}(\mu)/G_\mu$ is a submanifold.
  
\section{Symplectic forms on separable Hilbert spaces}
\label{sec:5}

Standard quantum mechanical theories assume Hilbert spaces to be separable, i.e. to admit a countable dense subset of vectors \cite{Neu_18}. So, we will skip some non-standard approaches \cite{TW_01,Neu_18} not assuming this condition. Let us now endow a separable Hilbert space with a strong symplectic form. This will provide a rigorous mathematical extension to the infinite-dimensional setting of the standard approach to finite-dimensional Hilbert spaces \cite{CCM_07,CL_11,CMMGM_16,CDIM_17}. We will indeed provide certain details that are frequently absent in the literature \cite{AT_99}.

A complex Hilbert space $(\mathcal{H}, \langle \cdot\vert \cdot \rangle)$ is separable if and only if it admits a countable orthonormal basis $\{\psi^j\}_{j\in\mathbb{N}}$ (see \cite{Rud_87}). This is quite a ubiquitous condition in quantum problems, but it may fail to hold, for instance, in scattering processes \cite{SN_20}. When $\mathcal{H}$ is separable, it admits an orthonormal basis \cite{Rud_87} that enables us to endow $\mathcal{H}$ with a real differentiable structure modelled on a Hilbert space. 

Choose real-valued coordinates $\{q^j, p_j\}_{j\in \mathbb{N}}$ on $\mathcal{H}$ given by
\begin{equation*}
    q^j(\psi):= \mathfrak{Re} \langle \psi^j \vert \psi \rangle\,, \qquad p_j(\psi) := \mathfrak{Im}\langle \psi^j \vert \psi\rangle\,,\qquad \forall j\in\mathbb{N}\,,\quad \forall \psi\in \mathcal{H}\,.
\end{equation*}
Let $\ell_2$ be the Hilbert space of complex square-summable series considered in the natural way as a real vector space. One can then construct a global chart $\varphi: \psi\in \mathcal{H}\mapsto (q^j(\psi),p_j(\psi))_{j\in \mathbb{N}}\in \ell^2$. In fact, every $\psi\in\mathcal{H}$ admits a unique family of complex constants $c_j\in\mathbb{C}$ such that $\psi = \sum_{j\in\mathbb{N}}c_j \psi^j$. It is worth recalling that since $\mathcal{H}$ is a Hilbert space $\lim_{j \rightarrow \infty} \sum_{j=1}^n c_j \psi^j = \psi$ and this holds independently of the order of the elements of the basis. This is not true for a Schauder basis of a Banach space. Consequently, $\Vert \psi \Vert_\mathcal{H} :=\sqrt{\langle \psi|\psi\rangle}= \sum_{j\in\mathbb{N}} \vert c_j \vert^2 < \infty $ and
\begin{equation}\label{Eq:Sym}
    \sum\limits_{j\in \mathbb{N}}\abs{c_j}^2 = \sum\limits_{j\in \mathbb{N}}\big({\mathfrak{Re}}(c_j)^2 + {\mathfrak{Im}}(c_j)^2\big) = \sum\limits_{j\in\mathbb{N}}\big(q^j(\psi)^2 + p_j(\psi)^2\big)<\infty\,.
\end{equation}
Thus, $\varphi(\psi)=(q^j(\psi),p_j(\psi))_{j\in \mathbb{N}}\in \ell^2$. Conversely, every sequence $(x_0^j,p_{0j})_{j\in \mathbb{N}}\in \ell ^2$ is the image of a unique element $\sum_{j\in\mathbb{N}} (x^j_0 + \ii p_{0j})\psi^j \in\mathcal{H}$. In view of \eqref{Eq:Sym}, one has that $\|\varphi(\psi)\|_{\ell^2}=\|\psi\|_\mathcal{H}$ and $\varphi$ is an isometry. Since $\varphi$ is an isometry and it has an inverse, its inverse is also an isometry and it is therefore continuous. Consequently, $\mathcal{H}$ and $\ell^2$ are homeomorphic. This gives rise to a real differentiable structure on $\mathcal{H}$. 

Recall that the tangent space to $\mathcal{H}$ at $\phi$ is the space of equivalence classes of curves $\gamma_{\psi}:\mathbb{R}\rightarrow \mathcal{H}$ such that $\gamma_{\psi}(0)=\phi$ and $\dd\gamma_\psi/\dd t(0)=\psi$. We will denote by $\dot\psi_\phi$ the equivalence class of the curve $\gamma_\psi$. Moreover, $\T_\phi\mathcal{H}=\{\dot\psi_\phi:\psi\in \mathcal{H}\}$ and there exists an isomorphism
$\lambda_{\phi}:\psi\in \mathcal{H}\mapsto \dot\psi_\phi\in  \T_{\phi}\mathcal{H}.$ Moreover, we represent $\lambda_\phi(\psi^j)$ and $\lambda_\phi ({\rm i}\psi^j)$ by $(\partial_{q^j})_{\phi}$ and $(\partial_{p_j})_{\phi}$, respectively. Hence, $\{(\partial_{q^j})_{\phi},(\partial_{p_j})_{\phi}\}_{j\in\mathbb{N}}$ becomes a basis of $\T_{\phi}\mathcal{H}$ and $\lambda_\phi(\psi) = \sum_{j\in \mathbb{N}}\left(q^j(\psi)(\partial_{q^j})_\phi+p_j(\psi)(\partial_{p_j})_\phi\right)$. It is worth noting that each $\T_\phi\mathcal{H}$ is then isomorphic to $\ell^2$.

The cotangent space of $\mathcal{H}$ at $\phi$, i.e. $\cT_{\phi}\mathcal{H} := (\T_{\phi}\mathcal{H})^\ast$, admits a basis $\{(\dd q^j)_{\phi}, (\dd p_j)_{\phi}\}_{j\in\mathbb{N}}$ dual to $\{(\partial_{q^j})_{\phi},(\partial_{p_j})_{\phi}\}_{j\in\mathbb{N}}$. Then,
\begin{equation*}
(\dd q^j)_\phi(\lambda_\phi(\psi))=q^j(\psi)=\mathfrak{Re}\langle \psi^j|\psi\rangle\,,\qquad (\dd p_j)_\psi(\lambda_\phi(\psi))=p_j(\psi)=\mathfrak{Im}\langle \psi^j\vert\psi\rangle\,.
\end{equation*}

At each $\phi$, one has then that $\T_\phi^*\mathcal{H}\simeq (\ell^2)^*\simeq \ell^2$. In other words, the elements of $\T^*_\phi\mathcal{H}$ are those of the form $\lambda_j(\d q^j)_\phi+\mu^j(\d p_j)$ with $(\lambda_j,\mu^j)_{j\in \mathbb{N}}$ forming a sequence in $\ell^2$.

Observe that $B:(\phi,\psi)\in \mathcal{H}\times \mathcal{H}\mapsto \mathfrak{Im}\langle \phi\vert\psi\rangle\in \mathbb{R}$ is a non-degenerate bilinear (over $\mathbb{R}$) form. Indeed, $B$ is well-defined because if ${
\bf c}_1{\bf c}_2\in \ell_2$, then $\sum_{n
\in \mathbb{N}} c_{1n}c_{2n}<\infty$ , and the map $B$ is bilinear and continuous. Since 
\begin{equation*}
B(\psi_1,\psi_2)=\mathfrak{Im}\langle \psi_1\vert\psi_2\rangle=\mathfrak{Im}\overline{\langle \psi_2\vert\psi_1\rangle}=-\mathfrak{Im}\langle \psi_2 \vert\psi_1\rangle=-B(\psi_2,\psi_1)\,,\qquad \forall \psi_1,\psi_2\in \mathcal{H}\,,
\end{equation*}
we obtain that $B$ is skew-symmetric. Then, $B$ allows us to endow $\mathcal{H}$ with a unique constant differential two-form $\omega$ such that $B=\lambda_\phi^*\omega_\phi$ on every $\phi\in \mathcal{H}$. By using the isomorphism  $\lambda_\phi$, one has that $\omega_{\phi}: \T_{\phi}\mathcal{H}\times \T_{\phi}\mathcal{H}\rightarrow \mathbb{R}$ takes the local form
\begin{equation}\label{Eq:Ex}
\omega_{\phi}(\dot \psi_1,\dot \psi_2) :=\mathfrak{Im}\langle \psi_1 \vert \psi_2 \rangle =\sum_{j\in \mathbb{N}} (q^j(\psi_1)p_j(\psi_2)-q^j(\psi_2)p_j(\psi_1))=\sum_{j\in \mathbb{N}}\left( \d q^j\wedge \dd p_j\right)(\dot\psi_1,\dot \psi_2)\,.
\end{equation}
Then, $\dd\omega=0$ and $\omega$ is a non-degenerate form.

To see that $\omega$ is a strong symplectic form, note that $\mathcal{H}$ admits a submanifold $\mathcal{Q}$ given by the zeroes of the functions $\{p^j\}_{j\in \mathbb{N}}$. In fact, the $\{q^j\}_{j\in \mathbb{N}}$ form a coordinate system on $\mathcal{Q}$ taking values in the real vector space of square-summable sequences $(q_j)_{j\in \mathbb{N}}$, making $\mathcal{Q}$ into a manifold modelled on a Hilbert space. Then, the variables $\{q^j,p_j\}_{j\in \mathbb{N}}$ can be understood as the adapted variables to $\cT\mathcal{Q}=\mathcal{H}$. In view of \eqref{Eq:Ex}, we get that $\omega$ is the canonical two-form of $\cT\mathcal{Q}$. Since $\cT\mathcal{Q}$ is modelled on a Hilbert space, which is always reflexive \cite{Rud_87}, Theorem \ref{RefCan} shows that $\omega$ is a strong symplectic form.

\section{Vector fields related to  operators}\label{sec:6}

In geometric finite-dimensional quantum mechanics, operators can easily be described via the so-called {\it linear vector fields}, which are smooth and even analytic \cite{CL_11,Mar_68}. Let us extend this idea to bounded, first, and next to general (possibly unbounded) operators on infinite-dimensional Hilbert spaces. This will involve addressing several issues concerning the lack of differentiability of the linear vector fields related to unbounded operators.

We showed in Section \ref{sec:5} that $\mathcal{H}$ is a smooth manifold modelled on a real Hilbert space through a global chart given by the coordinates $\{q^j,p_j\}_{j\in\mathbb{N}}$. This gives rise to a global coordinate system $\{q^j,p_j,\dot q^j,\dot p_j\}_{j\in\mathbb{N}}$ adapted to $\T\mathcal{H}$. This coordinate system gives rise to a diffeomorphism between $\T\mathcal{H}$ and $\mathcal{H}\oplus\mathcal{H}$. Then, every bounded operator $B:\mathcal{H}\rightarrow\mathcal{H}$ gives rise to a vector field
\begin{equation*}
X_{B}:\phi\in \mathcal{H}\mapsto (\phi,B\phi)\in \T_\phi\mathcal{H}\subset \T\mathcal{H}\,.
\end{equation*}
Since $B$ is bounded, it is continuous and $B:\mathcal{H}\rightarrow \mathcal{H}$ is also a smooth mapping. It follows immediately that $X_{B}$, which amounts to a bounded operator $\Id_\mathcal{H}\times B:\phi\in \mathcal{H}\mapsto (\phi,B\phi)\in \mathcal{H}\times \mathcal{H}$, where $\mathcal{H}\times \mathcal{H}$ is endowed with its natural Hilbert structure induced by the one on $\mathcal{H}$, is a smooth vector field.

In infinite-dimensional quantum mechanics, operators are standardly unbounded, they have a domain, and it is non-trivial to provide a procedure to ensure that they can be described somehow through linear vector fields on a domain that will be smooth enough for their differential geometric treatment. In particular, one knows that an  unbounded operator $B:D(B)\subset \mathcal{H}\rightarrow \mathcal{H}$   on a Hilbert space  $\mathcal{H}$ is related to a vector field with a domain \eqref{eq:VecOpe}. But it is important to discover what else can be said. 
Let us  detail this process by means of the theory of analytic vectors (see \cite{FSSS_72,Hal_13,Nel_59,Sch_12} for details).

\begin{definition}
\label{def:analyticvector}
	An {\it analytic vector} for an operator $B: D(B)\subset \mathcal{H} \rightarrow \mathcal{H}$ is an element $\phi\in \mathcal{H}$ such that $\phi\in D^\infty(B):=\bigcap_{n\in \bar{\mathbb{N}}}D(B^n)$ and $\| B^n \phi \| \leq C^n_\phi n!$ for some $C_x>0$ and every $n\in \mathbb{N}$\footnote{Note that there is a typo in \cite[Definition 7.1]{Sch_12}, which is used in this paper after the correction regarding $n$  values.}. The elements $\phi\in D^\infty(B)$ are called \textit{smooth} vectors \cite{Sim_72}.

\end{definition}

Both smooth and analytic vectors for a given operator $B$ are vector spaces. The fact that $D^\infty(B)$ is a vector space is clear from the linearity of $B$. Moreover if $\phi, \psi\in D^a(B)$, then simple calculation
$$
\| B^n (\phi + \psi) \| \leq \| B^n \phi \| + \| B^n \psi \| \leq C^n_\phi n! + C^n_\psi n! \leq \left( C_\phi + C_\psi \right)^n n! = C^n_{\phi+\psi} n!\,,
$$
shows that $D^a(B)$ is a vector space.

 We write $D^a(B)$ for the space of analytic vectors of $B$. The term  `smooth' vector comes from the fact that the mapping $\gamma:t\in\mathbb{R}\mapsto \exp(tB)\psi \in \mathcal{H}$ is smooth if and only if $\psi\in D^\infty(B)$ (see \cite[p. 145]{Sch_12}). Obviously, $D^a(B)\subset D^\infty(B)$.

The existence of analytic vectors for operators can be illustrated by \textit{Nelson's analytic vectors theorem}, which states that a  symmetric operator $A$ on a Hilbert space $\mathcal{H}$ is essentially self-adjoint if and only if $A$ admits a dense domain $D^a(A)$ of analytic vectors \cite{Hua_17,Nel_59}. We hereafter assume $B$ to be skew-self adjoint on some domain, if not otherwise stated. Analytic vectors allow for many other interesting simplifications in the theory of skew-adjoint operators. They are also useful in the differential geometric study of quantum mechanical problems. Let us also consider the following interesting result.

\begin{proposition} {\bf (\cite[Proposition 7.8]{Sch_12})}
    Let $A$ be a skew-self-adjoint operator and consider an analytic vector $\psi\in D^a(A)$. Let $C_{\psi}\in\mathbb{C}$ be such that $\|A^n\psi\|\leq |C_{\psi}|^nn!$ for every $n\in \mathbb{\mathbb{N}}$. If $z\in \mathbb{C}$ and $|z|<C_{\psi}^{-1}$, then $\psi\in D(e^{z {A}})$ and $e^{zA}\psi=\lim_{n\rightarrow +\infty}\sum_{k=0}^n\frac{z^k}{k!}A^k\psi. $
\end{proposition}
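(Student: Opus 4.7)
Since $A$ is skew-self-adjoint, $H:=-\ii A$ is self-adjoint and carries a spectral resolution $H=\int_{\mathbb{R}}\lambda\,\dd E(\lambda)$. The Borel functional calculus then yields $e^{zA}=\int_{\mathbb{R}}e^{\ii z\lambda}\,\dd E(\lambda)$ with natural domain $D(e^{zA})=\{\phi\in\mathcal{H}:\int|e^{\ii z\lambda}|^2\,\dd\mu_\phi<\infty\}$, where $\mu_\phi:=\langle\phi\,|\,E(\cdot)\phi\rangle$ is the associated finite positive Borel measure on $\mathbb{R}$. The plan is: (i) prove convergence of the partial sums $S_n(z):=\sum_{k=0}^{n}\tfrac{z^k}{k!}A^k\psi$ in $\mathcal{H}$ by a geometric-series estimate; (ii) verify $\psi\in D(e^{zA})$ via moment bounds coming from the analytic-vector hypothesis; (iii) identify $\lim_n S_n(z)$ with $e^{zA}\psi$ through dominated convergence inside the spectral integral. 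Throughout, the inequality $|z|<C_\psi^{-1}$ is interpreted as $|z|\,|C_\psi|<1$.

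\textbf{Step (i).} For $m<n$, the triangle inequality combined with the analytic-vector bound $\|A^k\psi\|\leq|C_\psi|^k k!$ yields
\begin{equation*}
\|S_n(z)-S_m(z)\|\leq\sum_{k=m+1}^{n}\tfrac{|z|^k}{k!}\|A^k\psi\|\leq\sum_{k=m+1}^{n}(|z|\,|C_\psi|)^k,
\end{equation*}
which is a tail of a convergent geometric series. Hence $\{S_n(z)\}$ is Cauchy in the complete space $\mathcal{H}$ and converges to a vector $\Phi(z)$.

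\textbf{Step (ii).} From the spectral calculus, $\int_{\mathbb{R}}\lambda^{2n}\,\dd\mu_\psi=\|H^n\psi\|^2=\|A^n\psi\|^2\leq|C_\psi|^{2n}(n!)^2$. Writing $z=a+\ii b$ one has $|e^{\ii z\lambda}|^2=e^{-2b\lambda}\leq 2\cosh(2|z|\lambda)$ for all $\lambda\in\mathbb{R}$, since $|b|\leq|z|$ and $\cosh$ is even. Because only even powers of $\lambda$ appear in the expansion of $\cosh$, monotone convergence and the moment bound give
\begin{equation*}
\int e^{-2b\lambda}\,\dd\mu_\psi\leq 2\sum_{n\geq 0}\frac{(2|z|)^{2n}}{(2n)!}\int\lambda^{2n}\,\dd\mu_\psi\leq 2\sum_{n\geq 0}(2|z|\,|C_\psi|)^{2n}\frac{(n!)^2}{(2n)!}.
\end{equation*}
Using $(n!)^2/(2n)!=1/\binom{2n}{n}\leq(2n+1)/4^n$, the last series reduces to $2\sum_{n\geq 0}(|z|\,|C_\psi|)^{2n}(2n+1)$, which converges precisely when $|z|\,|C_\psi|<1$. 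Thus $\psi\in D(e^{zA})$.

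\textbf{Step (iii) and main obstacle.} The functional-calculus identity $S_n(z)=\int p_n(\lambda)\,\dd E(\lambda)\psi$ with $p_n(\lambda):=\sum_{k=0}^{n}(\ii z\lambda)^k/k!$ gives $\|S_n(z)-e^{zA}\psi\|^2=\int|p_n(\lambda)-e^{\ii z\lambda}|^2\,\dd\mu_\psi$. Since $p_n\to e^{\ii z\lambda}$ pointwise and $|p_n-e^{\ii z\lambda}|\leq 2e^{|z||\lambda|}$, while the dominator is square-integrable with respect to $\mu_\psi$ by the same estimate as in Step (ii), dominated convergence gives $S_n(z)\to e^{zA}\psi$; combining with Step (i) yields $\Phi(z)=e^{zA}\psi$, as required. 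The main technical subtlety lies in Step (ii): a naive Cauchy--Schwarz bound $\int|\lambda|^n\,\dd\mu_\psi\leq\|\psi\|\,\|A^n\psi\|$ would only guarantee $\psi\in D(e^{zA})$ for $|z|<(2|C_\psi|)^{-1}$; restoring the sharp threshold $|z|\,|C_\psi|<1$ requires exploiting that only even moments enter through the $\cosh$ bound together with the asymptotics $(n!)^2/(2n)!\sim\sqrt{\pi n}\,4^{-n}$.
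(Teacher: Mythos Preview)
Your proof is correct. The paper does not actually prove this proposition: it is quoted verbatim from Schm\"udgen's monograph (the citation \cite[Proposition 7.8]{Sch_12}) and used as a black box, so there is no in-paper argument to compare against. Your approach via the spectral resolution of $H=-\ii A$ and the Borel functional calculus is the standard one and is essentially what one finds in Schm\"udgen; the three steps (Cauchy convergence of the partial sums, membership in $D(e^{zA})$ via even-moment estimates, identification of the limit by dominated convergence) are all carried out cleanly. Your remark about the sharp threshold is apt: bounding $e^{-2b\lambda}$ by $2\cosh(2|z|\lambda)$ and using $(n!)^2/(2n)!\leq(2n+1)4^{-n}$ is exactly what recovers the full disk $|z|\,|C_\psi|<1$ rather than the half-radius one would get from a cruder estimate.
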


Since $e^{tA}$ is a continuous operator defined on $\mathcal{H}$, it stems from the definitions of smooth and analytic vectors for $A$ that $\exp(tA)D^\infty(A)\subset D^\infty(A)$ for every $t\in \mathbb{R}$ (cf. \cite[p. 149]{Sch_12}). In fact, if $\psi\in D^\infty(A)$, then $s\mapsto \exp(sA)\psi$ is smooth and $s\mapsto \exp((s-t)A)\exp(tA)\psi$ is again smooth. 



Assume that $V$ is a finite-dimensional Lie algebra of self-adjoint operators on $\mathcal{H}$. One may wonder when the exponential of the elements of $V$, which are continuous automorphisms on $\mathcal{H}$ by the Stone--von Neumann Theorem \cite{Hal_13,Sto_32}, can be understood  as  automorphisms on $\mathcal{H}$ related to a continuous Lie group representation on $\mathcal{H}$. The following definition, based on the one given in \cite{Fab_15} establishes the family of Lie algebras $V$ admitting such a property and defines rigorously what we meant.

\begin{definition} A finite-dimensional Lie algebra $V$ of operators with a common domain $\mathcal{D}$ within a Hilbert space $\mathcal{H}$ is {\it integrable} if there exists an injective Lie algebra representation $\rho:\mathfrak{g}\rightarrow V$ and a continuous Lie group action $\Phi:G\times \mathcal{H}\rightarrow \mathcal{H}$ (relative to the natural topologies on $G\times \mathcal{H}$ and $\mathcal{H}$), where $G$ is the connected and simply connected Lie group of $\mathfrak{g}$, such that
	\begin{equation*}
	\frac{\d}{\d t}\bigg|_{t=0}\Phi(\exp(tv),\psi) = \rho(v)\psi\,,\qquad \forall \psi\in \mathcal{D}\,,\qquad \forall v\in \mathfrak{g}\,.
	\end{equation*}
	We call $\rho$ the {\it infinitesimal Lie group action} associated with $\Phi$.
\end{definition}

Geometrically, the above definition tells us that the fundamental vector fields of the Lie group action $\Phi$ of $G$ on $\mathcal{D}$ coincide with the restriction of the elements of $V$ in $\mathcal{D}$. Moreover, the fundamental vector fields of this action satisfy that
\begin{equation}\label{eq:Comm}[X_{\rho(v_1)},X_{\rho(v_2)}]\psi=\frac{1}{2}\frac{\partial^2}{\partial t^2}\bigg|_{t=0}\exp(-t\rho(v_2))\exp(-t\rho(v_1))\exp(t\rho(v_2))\exp(t\rho(v_1))\psi\,, \qquad \forall \psi\in D\,.
\end{equation}
where $D$ is an invariant common domain of smooth functions where the operators exponentials of the elements of $V$ are well defined and smooth enough. In other words, despite the lack of differentiability of the vector fields related to operators, the geometric expression of the Lie derivative of vector fields still holds.

\begin{example} Consider the operators $\ii \hat p$, $\ii \hat x$ and $\ii \widehat{{\rm Id}}$ on $L^2(\mathbb{R})$ spanning a three-dimensional Lie algebra of skew-self-adjoint operators. Consider the subspace of analytic functions of the Schwarz space $\mathcal{S}(\mathbb{R})$. This Lie algebra of operators admit a dense subspace of analytic vectors given by the functions $H_n(x)e^{-x^2/2}$ and it can be integrated. Moreover, these functions belong to $\mathcal{S}(\mathbb{R})$, which is invariant relative to the action of $\exp(\lambda \ii \widehat p),\exp(\lambda \ii \widehat x)$ and $\exp(\lambda \ii \widehat {\rm Id})$ for $\lambda \in \mathbb{R}$. For instance,
$$
(\exp(\ii t\widehat{p})\psi)(x)=\psi(x-t),\qquad (\exp(\ii t\widehat{x})\psi)(x)=e^{\ii x}\psi(x),\qquad \forall x\in \mathbb{R}. 
$$
Hence, expression \eqref{eq:Comm} gives
$$
([X_{\ii\widehat{p}},X_{\ii\widehat{x}}]\psi)(x)=\frac 12\frac{\partial^2}{\partial t^2}\bigg|_{t=0}\exp(-\ii t \widehat{x})\exp(-\ii t  \widehat{p})e^{\ii t x}\psi(x-t)=\frac 12\frac{\partial^2}{\partial t^2}\bigg|_{t=0}\exp(\ii t^2)\psi(x)=\ii\psi(x)=X_{\ii\widehat{\rm Id}}.
$$
\end{example}

It is a consequence of an argument given by Nelson \cite{Nel_59} and Goodman \cite{Goo_69} that if a finite-dimensional real Lie algebra of skew-symmetric operators is integrable, then the next criterium holds (cf. \cite{FSSS_72}). 

\begin{theorem} {\bf (FS$^3$ criterium \cite{FSSS_72})} 
Every finite-dimensional Lie algebra of skew-symmetric operators on a Hilbert space $\mathcal{H}$ is integrable if it admits a basis with a common dense invariant subspace of analytic vectors.
\end{theorem}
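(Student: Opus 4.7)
The plan is to combine Nelson's analytic vectors theorem, applied element-wise to a basis of $V$, with a Baker--Campbell--Hausdorff argument carried out on the common dense invariant subspace $\mathcal{D}$ of analytic vectors, in order to produce the Lie group action required by the definition of integrability. First, I fix a basis $\{A_1,\dots,A_n\}$ of $V$ so that $\mathcal{D}\subset\bigcap_i D^a(A_i)$ is dense and $A_i$-invariant for every $i$. Each operator $\ii A_i$ is then symmetric with dense set $\mathcal{D}$ of analytic vectors, so Nelson's theorem yields essential self-adjointness of $\ii A_i$, equivalently essential skew-self-adjointness of $A_i$. Stone's theorem then furnishes a strongly continuous one-parameter unitary group $U_i(t):=\exp(t\overline{A_i})$ on the whole of $\mathcal{H}$.

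Next I assemble these one-parameter groups into a candidate Lie group action. Let $\mathfrak{g}$ be the abstract Lie algebra of the connected and simply connected group $G$ associated with $V$, fix an isomorphism $\rho:\mathfrak{g}\to V$, and set $\xi_i:=\rho^{-1}(A_i)$. Canonical coordinates of the second kind provide a neighbourhood $U$ of $e\in G$ on which every element admits a unique decomposition $g=\exp(t_1\xi_1)\cdots\exp(t_n\xi_n)$, and I define
\begin{equation*}
\Phi(g,\psi):=U_1(t_1)\cdots U_n(t_n)\psi\,,\qquad \forall\psi\in\mathcal{H}\,.
\end{equation*}
Continuity of $\Phi$ on $U\times\mathcal{H}$ follows from the strong continuity and unitarity of each $U_i(t)$.

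The main obstacle, and the heart of the argument, is to verify that this prescription respects the group law of $G$, namely $\Phi(g_1g_2,\psi)=\Phi(g_1,\Phi(g_2,\psi))$ for $g_1,g_2$ in a neighbourhood of $e$. The idea is to reduce this to an identity between two $\mathcal{H}$-valued analytic functions of the parameters. For $\psi\in\mathcal{D}$, the map $(t_1,\dots,t_n)\mapsto U_1(t_1)\cdots U_n(t_n)\psi$ extends to a holomorphic $\mathcal{H}$-valued function on a polydisc whose polyradius is controlled by the analytic-vector constants of $\psi$, and its Taylor coefficients are explicit polynomials in the $A_i$ applied to $\psi$. Expressing $g_1g_2$ in canonical coordinates via the Baker--Campbell--Hausdorff series of $\mathfrak{g}$, and using the commutator formula \eqref{eq:Comm} to identify iterated brackets of the $\overline{A_i}$ with the bracket of $V$ on $\mathcal{D}$, the compatibility identity reduces to an equality of formal power series that holds by construction; equality of the two holomorphic extensions on their common polydisc follows. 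Unitarity of each $U_i(t)$ and the density of $\mathcal{D}$ in $\mathcal{H}$ extend the identity to all of $\mathcal{H}$. Finally, simply connectedness of $G$ together with the standard monodromy argument promote $\Phi$ from $U$ to a globally defined continuous action $\Phi:G\times\mathcal{H}\to\mathcal{H}$; differentiating at $t=0$ on $\mathcal{D}$ then recovers $\rho$ as the required infinitesimal representation, establishing integrability.
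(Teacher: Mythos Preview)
The paper does not supply a proof of this theorem; it is quoted as a result from \cite{FSSS_72} (with the remark that it stems from arguments of Nelson and Goodman). So there is no ``paper's own proof'' to compare against, and I will evaluate your sketch on its own merits.

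Your outline captures the broad architecture of the FS$^3$ argument: Nelson's theorem gives essential skew-self-adjointness of each basis element, Stone's theorem produces one-parameter unitary groups, and one tries to glue these into a local $G$-action via second-kind canonical coordinates and a BCH comparison. That much is correct in spirit.

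The genuine gap is the sentence ``the map $(t_1,\dots,t_n)\mapsto U_1(t_1)\cdots U_n(t_n)\psi$ extends to a holomorphic $\mathcal{H}$-valued function on a polydisc whose polyradius is controlled by the analytic-vector constants of $\psi$''. This is the heart of the theorem, and it does not follow from what you have assumed. The hypothesis gives you that $\mathcal{D}$ is invariant under each $A_i$ and that every $\psi\in\mathcal{D}$ is separately analytic for each $A_i$; it does \emph{not} tell you that $U_n(t_n)\psi$ lies in $\mathcal{D}$, nor that it is an analytic vector for $A_{n-1}$ with any controlled constant. Without this, you have holomorphy in $t_n$ alone, not joint holomorphy, and the Taylor/BCH comparison cannot be carried out. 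Establishing the needed joint analytic estimate (bounding $\|A_{i_1}\cdots A_{i_k}\psi\|$ over all words of length $k$) from the separate analyticity plus invariance of $\mathcal{D}$ is precisely the technical work of the FS$^3$ paper, and it is not a formality. Your appeal to \eqref{eq:Comm} is also misplaced: that formula presupposes the group action already exists and is sufficiently differentiable, so invoking it to \emph{construct} the action is circular; on $\mathcal{D}$ the algebraic commutators $[A_i,A_j]$ are available directly and that is what one must use.
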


There are other criteria of integrability of finite-dimensional Lie algebras of skew-symmetric operators on Hilbert spaces \cite{Nel_59,Sim_72}, but they are, in general, more complicated to use. On the other hand, the FS$^3$ criterium can be relatively easily proved to hold in many physically relevant problems appearing in physics (cf. \cite{CL_11, CLR_09, CR_03,CR_05}). In particular, if $A_1,\ldots,A_n$ are observables admitting a common basis of eigenvectors $\{e_n\}_{n\in \mathbb{N}}$, then $\ii A_1,\ldots,\ii A_n$ admit a basis of eigenvectors $\{e_n\}_{n\in \mathbb{N}}$ and all their elements are analytic relative to $\ii A_1,\ldots, \ii A_n$. Moreover, the direct sum of subspaces $\bigoplus_{n\in \mathbb{N}}\langle e_n\rangle\simeq \mathbb{R}^\mathbb{N}$ is a normed not complete space (in general) of analytic vectors relative to $\ii A_1,\ldots,\ii A_n$ that is invariant relative to the operators $\exp(\ii \lambda_\alpha A_\alpha)$ with $\lambda_\alpha\in \mathbb{R}$ and $\alpha=1,\ldots,r$.

Assume then that the skew-symmetric operators of the Lie algebra $V$ have a common invariant domain $D_V$ of analytic vectors on $\mathcal{H}$. The related continuous unitary action $\Phi:G\times\mathcal{H}\rightarrow\mathcal{H}$ satisfies that $\Phi_\psi:g\in G\mapsto \Phi(g,\psi)\in \mathcal{H}$ is an analytical function on the space of analytic elements $D_V$. If $D_V^\infty$ is a common invariant domain of smooth vectors for the elements of $V$, then $\Phi_\psi$ is smooth on $D_V^\infty$ (cf. \cite{Har_53})

Let us now prove the following useful result. 

\begin{theorem}
    Every observable $B$ on a Hilbert space $\mathcal{H}$ gives rise to an integrable vector field  on a domain $X_{B}:\psi\in D^{\infty}(B)\subset  \mathcal{H}\mapsto (\psi,B\psi)\in \mathcal{H}\oplus\mathcal{H}\simeq \T\mathcal{H}$ and $X_{ B}(\psi)\in \T_{\psi}D^{\infty}(B)$.
\end{theorem}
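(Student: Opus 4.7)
The statement has two parts: that $X_B$ qualifies as a vector field on the domain $D^\infty(B)$ with values tangent to $D^\infty(B)$ at each point, and that it is integrable. I would treat these separately, with the first essentially formal and the second requiring more care.

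For the domain structure, the key point is that since $B$ is an observable, it is symmetric with an orthonormal basis of eigenvectors $\{e_n\}_{n\in\mathbb{N}}$ and corresponding real eigenvalues $\{\lambda_n\}_{n\in\mathbb{N}}$. Because $B^k e_n=\lambda_n^k e_n\in\mathcal{H}$ for every $k\in\bar{\mathbb{N}}$, each eigenvector and each finite linear combination of eigenvectors lies in $D^\infty(B)$, so $D^\infty(B)$ is dense in $\mathcal{H}$. I would endow $D^\infty(B)$ with the norm inherited from $\mathcal{H}$ and the restriction of the global chart of $\mathcal{H}$ constructed in Section~\ref{sec:5}; this turns $D^\infty(B)$ into a manifold modeled on a normed (in general not complete) vector space, exactly as was done for $D(\partial_x)$ in Example~\ref{ex:partialxfield}. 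The inclusion $\iota:D^\infty(B)\hookrightarrow\mathcal{H}$ then reads as the identity in the countable coordinates $\{q^j,p_j\}_{j\in\mathbb{N}}$, is norm-preserving, and is a continuous linear map between normed spaces, hence smooth.

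Since $\psi\in D^\infty(B)$ implies $B^k(B\psi)=B^{k+1}\psi\in\mathcal{H}$ for all $k\in\bar{\mathbb{N}}$, we have $B(D^\infty(B))\subset D^\infty(B)$, and under the identification $\T_\psi D^\infty(B)\simeq D^\infty(B)$ provided by the chart this yields $X_B(\psi)=(\psi,B\psi)\in \T_\psi D^\infty(B)$. Combined with $\tau\circ X_B=\Id_{D^\infty(B)}$, this confirms that $X_B$ is a vector field on the domain $D^\infty(B)$ in the sense of the definition preceding Example~\ref{ex:partialxfield}.

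The main obstacle is integrability. Every eigenvector $e_n$ is trivially analytic for $B$, since $\|B^k e_n\|=|\lambda_n|^k\le |\lambda_n|^k k!$, so $\bigoplus_{n\in\mathbb{N}}\langle e_n\rangle$ is a dense subspace of analytic vectors invariant under $B$. On finite linear combinations $\psi=\sum_n c_n e_n$ the formula $F_t(\psi)=\sum_n c_n e^{t\lambda_n} e_n$ defines a one-parameter family of maps for every $t\in\mathbb{R}$ satisfying $\frac{\d}{\d t}F_t(\psi)=B F_t(\psi)$ and leaving $\bigoplus_n\langle e_n\rangle$ invariant. Following the pattern of Example~\ref{ex:partialxfield}, where the flow of $X_{\partial_x}$ was defined as a unitary group on $\mathcal{H}$ but was smooth only on the relevant domain, I would extend $F_t$ using the spectral theorem applied to the self-adjoint extension of $B$ — equivalently, using the skew-self-adjoint operator $\ii B$ and the associated Stone one-parameter unitary group. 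The hard technical point is to specify the invariant subdomain on which $F_t$ is defined and at which $\frac{\d}{\d t}F_t(\psi)=X_B(F_t(\psi))$ holds in the normed-manifold structure of $D^\infty(B)$; the analyticity of the eigenvectors, together with density of their finite linear combinations in $D^\infty(B)$, should make this argument go through, yielding integrability of $X_B$.
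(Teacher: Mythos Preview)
Your proposal is correct and follows essentially the same route as the paper: use the eigenbasis to exhibit a dense invariant subspace $E_B$ of analytic vectors inside $D^\infty(B)$, restrict the Hilbert norm and global chart to make $D^\infty(B)$ a normed manifold with smooth inclusion into $\mathcal{H}$, verify $B(D^\infty(B))\subset D^\infty(B)$ directly, and invoke Nelson's analytic vectors theorem together with Stone's theorem for integrability. The paper's proof is in fact terser than yours on the last step, simply asserting that Nelson's theorem ``shows $B$ is integrable'' and that $F_\psi$ is smooth on $D^\infty(B)$; your only slip is that the explicit formula $F_t(\psi)=\sum_n c_n e^{t\lambda_n}e_n$ is the (generally non-unitary, non-extendable) flow of $X_B$ itself, whereas the Stone one-parameter group you correctly invoke immediately afterwards is that of $\ii B$, i.e.\ $\sum_n c_n e^{\ii t\lambda_n}e_n$.
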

\begin{proof}
    We know that $\T\mathcal{H}$ is diffeomorphic to $\mathcal{H}\oplus\mathcal{H}$. Hence, $B$ can be associated with the vector field on a domain given by $X_{B}$. Note that the direct sum of the subspaces generated by the eigenvalues of $B$ gives a dense subspace $E_{B}\simeq \mathbb{R}^{\mathbb{N}}$ of smooth vectors for $B$ in $\mathcal{H}$. Hence, $D^\infty(B)$ is not empty and becomes dense in $\mathcal{H}$. Moreover, $D^a(B)\supset E_{B}$ is also dense in $\mathcal{H}$ and the Nelson's analytic vectors theorem shows $B$ is integrable. The norm of $\mathcal{H}$ can be restricted to $D^{\infty}(B)$, which becomes a normed space. The space $E_{B}$ is dense in $D^{\infty}(B)$ and we obtain a basis for $D^{\infty}(B)$. The maps of the global atlas on $\mathcal{H}$ can be restricted to $D^{\infty}(B)$, which makes $D^\infty({B})$ into a manifold modelled on a normed space. The inclusion $\jmath:D^{\infty}(B)\hookrightarrow \mathcal{H}$ is smooth relative to the previous differential structures because every smooth function on $\mathcal{H}$ remains smooth when restricted to $D^{\infty}(B)$.  In fact, the variations of a smooth function on $\mathcal{H}$ remain smooth the directions tangent to $D^{\infty}(B)$ at points of $D^{\infty}(B)$.

    On an element $\psi\in D^\infty(B)$, one has that $F_\psi$ is smooth. In view of  this, one can see that $X_{B}$ is a vector field on the domain $D^{\infty}(B)$ and $B(D^\infty(B))\subset D^\infty(B)$ (see \cite{Har_53} for further details). Hence, $X_{B}$ is an integrable vector field on the domain $D^\infty(B)$. Since $B(D^\infty(B))\subset D^\infty(B)$, one has that $X_{B}$ can be considered also as a vector field on a domain $D^\infty(B)$ taking values in $\T D^\infty(B)$.
\end{proof}

 It is worth studying non-symmetric operators for which a vector field approach is possible. Let us give an example.

 \begin{example} Consider the space of square integrable functions on $[-l,l]\subset \mathbb{R}$ for $l>0$, which is a Hilbert space, and the operator $-\ii\widehat{p}^2:D(-\ii\widehat{p}^2)\subset \mathcal{H}\rightarrow \mathcal{H}$. In particular, consider $D(-\ii\widehat p^2)$ to be the space spanned by finite linear combinations of functions 
 $$
 \frac{1}{\sqrt{l}}\sin\left(\frac{\pi (k+1) x}l\right),\qquad  \frac{1}{\sqrt{l}}\cos\left(\frac{\pi k x}l\right),\qquad k=0,1,2,3,\ldots  
 $$
 Hence, $D(-\ii\widehat{p}^2)$ becomes a normed space relative to the norm induced by $L^2([-l,l])$  that is isomorphic to $\mathbb{R}^\mathbb{N}$  and dense in $L^2([-l,l])$. Then, one obtains an operator $X_{-\ii\widehat{p}^2}:\psi\in D(-\ii\widehat{p}^2)\subset \mathcal{H}\mapsto (\psi,-\ii\widehat{p}^2\psi)\in \T \mathcal{H}$. It is worth stressing that $X_{-\ii  \widehat{p}^2}$ is not smooth and $-\ii\widehat{p}^2$ is not symmetric. But one may define a flow
$$
\varphi:(t,\psi)\in \mathbb{R}\times D(\widehat{p}^2)\mapsto \exp(-\ii t \widehat{p}^2)\psi\in \mathcal{H}.
$$
Note that this map is well defined as
$$
\exp(-\ii t\widehat{p}^2)\sin(\pi k x/L)=\exp(\ii t\pi^2k^2/L^2)\sin(\pi k x/L),
$$
$$
\exp(-\ii t\widehat{p}^2)\cos(\pi k x/L)=\exp(\ii t\pi^2k^2/L^2)\cos(\pi k x/L).
$$
Moreover, $D(-\ii \widehat{p}^2)$ consists of analytic vectors for $-\ii\widehat{p}^2$ and $\exp(-\ii t\widehat{p}^2)$ is continuous on $D(-\ii t\widehat{p}^2)$.
 \end{example}

\section{Hamiltonian functions induced by self-adjoint operators}\label{sec:7}

Previous section showed that vector fields generated by skew-self adjoint operators may be well-defined on an appropriate domain. We will show that these vector fields can be considered as Hamiltonian vector fields in a domain by using functions on domains. For instance, we will have to deal with functions on Hilbert spaces of the type 
\begin{equation*}
f_H:\psi\in D(H)\subset \mathcal{H}\mapsto \langle \psi \vert H\psi\rangle\in\mathbb{R}\,,
\end{equation*}
where $H$ is a skew-symmetric operator. The function $f_H$ is called the {\it average value} of the operator $H$ \cite{Hal_13} and it is only defined on the domain of $H$, which is dense in $\mathcal{H}$. 

In general, the whole symplectic formalism can be modified to deal with quantum mechanical systems by considering that structures are defined on a dense subset of $\mathcal{H}$ having the structure of a normed space and using the fact that the operators of the theory are self- or skew-self-adjoint. Most structures to be used are not smooth, but they are differentiable enough to apply differential geometric techniques.

Let us start by defining how to write in terms of quantum operators the differentials of the real and imaginary parts of certain average values of operators. This is done in the following lemma.

\begin{lemma}\label{Le:AF} If $f_A$ is the average function of a symmetric operator $A:D(A)\subset \mathcal{H}\rightarrow \mathcal{H}$, then 
	\begin{equation*}
	\inn{\dot{\phi}}(\dd f_A)_\psi=2\mathfrak{Re} \langle \phi \vert A \psi\rangle\,,\qquad \forall \dot{\phi}\in \T_\psi D(A)\,.
	\end{equation*}
Therefore, $(\dd f_A)_{\psi}$ can be extended to a unique element of $\cT_\psi\mathcal{H}$.
\end{lemma}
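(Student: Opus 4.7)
The plan is to compute $(\dd f_A)_\psi(\dot\phi)$ directly from the definition of the G\^ateaux differential, exploiting the bilinearity of the inner product and the symmetry of $A$, and then to observe that the resulting expression extends by continuity to all of $\T_\psi\mathcal{H}$.

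First I would fix $\psi\in D(A)$ and a tangent vector $\dot\phi_\psi\in \T_\psi D(A)$, with $\phi\in D(A)$ a representative of the direction (recalling that $D(A)$ carries the normed-manifold structure inherited from $\mathcal{H}$, so tangent vectors at $\psi$ correspond to elements of $D(A)$ via the isomorphism $\lambda_\psi$). Since $\psi+t\phi\in D(A)$ for every $t\in\R$, the curve $t\mapsto \psi+t\phi$ lies in $D(A)$ and represents $\dot\phi_\psi$. Expanding,
\begin{equation*}
f_A(\psi+t\phi)=\langle\psi+t\phi\,\vert\,A(\psi+t\phi)\rangle=\langle\psi\vert A\psi\rangle+t\langle\phi\vert A\psi\rangle+t\langle\psi\vert A\phi\rangle+t^2\langle\phi\vert A\phi\rangle\,,
\end{equation*}
differentiating at $t=0$ gives $(\dd f_A)_\psi(\dot\phi)=\langle\phi\vert A\psi\rangle+\langle\psi\vert A\phi\rangle$. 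The symmetry of $A$ on $D(A)$ yields $\langle\psi\vert A\phi\rangle=\langle A\psi\vert\phi\rangle=\overline{\langle\phi\vert A\psi\rangle}$, so that the sum collapses to $2\,\mathfrak{Re}\langle\phi\vert A\psi\rangle$, which is the announced formula. (One should also check $\R$-linearity and continuity of $\dot\phi\mapsto (\dd f_A)_\psi(\dot\phi)$ in $\T_\psi D(A)$, which follow from the linearity of $\phi\mapsto\langle\phi\vert A\psi\rangle$ and Cauchy--Schwarz.)

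For the extension statement, note that since $\psi\in D(A)$ the vector $A\psi$ belongs to $\mathcal{H}$, and therefore the map $\widetilde{F}_\psi:\eta\in\mathcal{H}\mapsto 2\,\mathfrak{Re}\langle\eta\vert A\psi\rangle\in\R$ is an $\R$-linear functional on the whole $\mathcal{H}$ which, by Cauchy--Schwarz, satisfies $|\widetilde{F}_\psi(\eta)|\le 2\|A\psi\|\,\|\eta\|$ and is thus continuous. Under the identification $\T_\psi\mathcal{H}\simeq\mathcal{H}$ given by $\lambda_\psi$, the functional $\widetilde{F}_\psi$ defines an element of $\cT_\psi\mathcal{H}$ whose restriction to $\T_\psi D(A)$ coincides with $(\dd f_A)_\psi$. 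Uniqueness of the extension follows from the density of $D(A)$ in $\mathcal{H}$ together with the continuity of any such extension.

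No serious obstacle is expected: the only subtle point is the bookkeeping between the intrinsic differential on the normed manifold $D(A)$ and its extension to $\cT_\psi\mathcal{H}$, which is why it is essential to keep track of the identifications $\T_\psi D(A)\hookrightarrow\T_\psi\mathcal{H}\simeq\mathcal{H}$ provided by the ambient Hilbert structure and the global charts described in Section~\ref{sec:5}. Once these identifications are in place, the argument reduces to a one-line calculation plus the Cauchy--Schwarz estimate.
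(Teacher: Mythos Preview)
Your proof is correct and follows essentially the same route as the paper: compute the G\^ateaux differential directly, use symmetry of $A$ to obtain $2\,\mathfrak{Re}\langle\phi\vert A\psi\rangle$, bound it via Cauchy--Schwarz, and invoke density of $D(A)$ for uniqueness. The only minor difference is that the paper appeals to the Hahn--Banach theorem to obtain the continuous extension, whereas you write down the extension $\eta\mapsto 2\,\mathfrak{Re}\langle\eta\vert A\psi\rangle$ explicitly; your version is in fact slightly cleaner, since the extension formula is already available and no abstract existence result is needed.
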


\begin{proof}
By definition of the contraction of the differential of a function with a vector field
\begin{equation*}
\iota_{\dot\phi}(\dd f_A)_\psi= \lim_{t\rightarrow 0 } \frac{f_A(\psi +t\phi) - f_A(\psi)}{t} = \langle \psi\vert A \phi\rangle + \langle \phi\vert A\psi\rangle=2\mathfrak{Re} \langle \phi \vert A \psi\rangle\,,
\end{equation*}
for all $\phi,\psi\in D(A)$. Now, observe that $(\dd f_A)_{\psi}(\dot{\phi})=2\mathfrak{Re}\langle A\psi \vert \phi \rangle \leq 2 \Vert A\psi\Vert\Vert \phi \Vert$, so that $(\dd f_H)_{\psi}$ is bounded on $D(A)$. Since $D(A)$ is a dense subspace of  $\mathcal{H}$ and $(\dd f_A)_\psi(\dot{\phi})$ is bounded on $D(A)$, then, in view of the Hahn--Banach theorem, $(\dd f_A)_{\psi}$ may be continuously extended to a linear continuous form on $\T_{\psi}\mathcal{H}$. Since $D(A)$ is dense in $\mathcal{H}$, this extension is unique.
\end{proof}

The above lemma shows that $f_A$ is Fr\'{e}chet differentiable at every point of $D(A)$ and it allows us to introduce the following definition.

\begin{definition} Let $f_H$ be the average value of a symmetric operator $H:D(H)\subset \mathcal{H}\rightarrow \mathcal{H}$. Its {\it differential} is the mapping $\dd f_H:\psi\in D(H)\subset \mathcal{H}\mapsto (\dd f_H)_\psi\in \cT_\psi\mathcal{H}\subset \cT\mathcal{H}$, where $(\dd f_H)_\psi(\dot\phi)=2\mathfrak{Re}\langle\phi|H\psi\rangle$.
\end{definition}

Definitions from symplectic geometry can be adapted to the context of quantum mechanics by assuming that they must be restricted appropriately to domains. 

\begin{definition}
    We say that a vector field on a domain $X:D(H)\subset \mathcal{H}\rightarrow \T\mathcal{H}$ is {\it Hamiltonian} relative to a symplectic manifold $(\mathcal{H},\omega)$ if $\inn{X}\omega=\dd h$, where the right-hand side is understood as the G\^ateaux differential of a Fréchet differentiable function $h:D(H)\subset\mathcal{H}\rightarrow \mathbb{R}$, which is defined at points of $D(H)$.
\end{definition}
It is worth noting that the mapping $\dd h_\psi$ does not need to exists for $\psi\notin D(H)$, while the contraction is in points of the domain of $X$.
\begin{proposition}\label{Pro:VH}
    The vector field with a domain $X_{-\ii H}$ on $\mathcal{H}$ induced by a symmetric operator $H:D(H)\subset \mathcal{H}\rightarrow \mathcal{H}$, where $D(H)$ is a domain of $H$, is Hamiltonian relative to the Hamiltonian function $f_H:\psi\in D(H)\mapsto\frac12\langle\psi \vert H\psi \rangle\in \mathbb{R}$. 
\end{proposition}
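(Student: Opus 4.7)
The plan is to verify directly that $\inn{X_{-\ii H}}\omega=\dd f_H$ holds pointwise on $D(H)$, by computing both sides at a fixed $\psi\in D(H)$ using the explicit form of $\omega$ from \eqref{Eq:Ex} together with the differential formula furnished by Lemma \ref{Le:AF}.

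First, I would note that $f_H$ is indeed real-valued: since $H$ is symmetric, $\langle\psi\vert H\psi\rangle=\overline{\langle H\psi\vert\psi\rangle}=\overline{\langle\psi\vert H\psi\rangle}\in\mathbb{R}$. Using the isomorphism $\lambda_\psi:\mathcal{H}\to\T_\psi\mathcal{H}$ established in Section \ref{sec:5}, the vector field on a domain reads $X_{-\ii H}(\psi)=\lambda_\psi(-\ii H\psi)$, and the expression of $\omega$ in \eqref{Eq:Ex} gives, for every $\dot\phi=\lambda_\psi(\phi)\in\T_\psi\mathcal{H}$,
\begin{equation*}
(\inn{X_{-\ii H}}\omega)_\psi(\dot\phi)=\mathfrak{Im}\langle -\ii H\psi\vert\phi\rangle=\mathfrak{Im}\bigl(\ii\langle H\psi\vert\phi\rangle\bigr)=\mathfrak{Re}\langle H\psi\vert\phi\rangle=\mathfrak{Re}\langle\phi\vert H\psi\rangle,
\end{equation*}
where I used antilinearity of the Hermitian product in its first slot followed by conjugate symmetry.

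Next, I would apply Lemma \ref{Le:AF} to the symmetric operator $A=H$ and combine it with the identity $\langle\psi\vert H\psi\rangle=2f_H(\psi)$ and the linearity of the G\^ateaux differential to deduce
\begin{equation*}
(\dd f_H)_\psi(\dot\phi)=\tfrac{1}{2}\cdot 2\mathfrak{Re}\langle\phi\vert H\psi\rangle=\mathfrak{Re}\langle\phi\vert H\psi\rangle,
\end{equation*}
which coincides with the contraction computed above. Hence $\inn{X_{-\ii H}}\omega=\dd f_H$ at every $\psi\in D(H)$, so $X_{-\ii H}$ is Hamiltonian with Hamiltonian function $f_H$ in the sense of the definition preceding the proposition.

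I do not foresee any substantive obstacle here: the statement is essentially a bookkeeping combination of \eqref{Eq:Ex}, Lemma \ref{Le:AF}, and the factor $\tfrac{1}{2}$ baked into $f_H$. The only subtlety worth flagging is that both sides must genuinely define elements of $\cT_\psi\mathcal{H}$, not merely of the dual of $\T_\psi D(H)$. For $(\dd f_H)_\psi$ this continuous extension is precisely the content of the last assertion of Lemma \ref{Le:AF}, while for $(\inn{X_{-\ii H}}\omega)_\psi$ it is immediate from $-\ii H\psi\in\mathcal{H}$, which makes $\phi\mapsto\mathfrak{Im}\langle -\ii H\psi\vert\phi\rangle$ continuous on all of $\mathcal{H}$ by the Cauchy--Schwarz inequality. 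Since $D(H)$ is dense in $\mathcal{H}$, both extensions are unique, and the equality on $\T_\psi D(H)$ propagates to all of $\T_\psi\mathcal{H}$ by continuity.
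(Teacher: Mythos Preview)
Your proof is correct and follows essentially the same route as the paper's: both compute $(\inn{X_{-\ii H}}\omega)_\psi(\dot\phi)=\mathfrak{Im}\langle -\ii H\psi\vert\phi\rangle=\mathfrak{Re}\langle H\psi\vert\phi\rangle$ directly from \eqref{Eq:Ex} and then invoke Lemma \ref{Le:AF} to identify this with $(\dd f_H)_\psi(\dot\phi)$. Your version is simply more explicit about the real-valuedness of $f_H$ and the continuous extension to $\cT_\psi\mathcal{H}$, points the paper leaves implicit.
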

\begin{proof} Using the expression for the canonical symplectic form on $\mathcal{H}$, we obtain, in view of Lemma \ref{Le:AF}, that 
	$$
	(\inn{X_{-\ii H}}\omega)_\psi(\dot \phi)=\mathfrak{Im}\langle -\ii H  \psi\vert\phi\rangle=\mathfrak{Re}\langle H \psi\vert\phi\rangle=\frac12\inn{\dot\phi}\dd \langle \psi|H\psi\rangle\,,\qquad \forall \dot\phi\in \T_\psi H\,,\quad\forall \psi\in \mathcal{D}(\mathcal{H})\,.
	$$
\end{proof}
\section{On \texorpdfstring{$t$}--dependent Hamiltonian operators}\label{sec:8}

To make a differential geometric approach to $t$-dependent Schr\"odinger equations possible, this section presents a specific type of $t$-dependent Schr\"odinger equations to be studied. In particular,  we are concerned with  the ones of the form
\begin{equation}
\label{eq:Schr}
\frac{\partial \psi}{\partial t} = -\ii H(t)\psi\,,\qquad \psi\in \mathcal{H}\,,
\end{equation}
where the $t$-dependent $H(t)$ are determined by the following proposition. 

\begin{definition}\label{def:TDH}
    A {\it $t$-dependent Hamiltonian operator} is a time-parametrised family of self-adjoint operators $H(t)$ on a Hilbert space $\mathcal{H}$ such that all the $H(t)$ admit a common invariant domain $\mathcal{D}$ of analytic vectors in $\mathcal{H}$.
\end{definition}

Note that we do not impose each $H(t)$ to be self-adjoint on $\mathcal{D}$. Since $\mathcal{D}$ will be a subset of the domains $D(H(t))$, with $t\in \mathbb{R}$, where each $H(t)$, for a fixed $t$, is self-adjoint for a certain domain $D(H(t))$, the operators $H(t)$ are only symmetric on $\mathcal{D}$. Definition \ref{def:TDH} is satisfied by relevant physical systems. The following example shows one of them.

\begin{example}\label{Ex:MP}The quantum operators 
$$
	H_1:= \hat{p}^2\,,\qquad H_2:=\hat{x}^2\,,\qquad H_3:= (\hat{x}\hat{p}+\hat{p}\hat{x})\,,\qquad H_4:= \hat{p}\,,\qquad H_5:= \hat{x}\,,\qquad H_6:= \Id
$$
are self-adjoint operators on $L^2(\mathbb{R})$ with respect to appropriate domains (cf. \cite{Gos_11,Hal_13,Sch_12}). The operators $H_1,\ldots,H_6$ span a Lie algebra of skew-symmetric operators on a common domain $\mathcal{D}$ of functions of the form $e^{-x^2/2}P(x)$, where $P(x)$ is any polynomial (cf. \cite{Sch_12}). The space $\mathcal{D}$ admits an algebraic basis $e^{-x^2/2}H^p_n(x)$ with $n=0,1,2,\ldots$, where the $H^p_n(x)$ are the so-called {\it Hermite probabilistic polynomials.} The functions $e^{-x^2/2}H_n^p(x)$ constitute a basis of $L^2(\mathbb{R})$ and, then, $\mathcal{D}$ is dense in $L^2(\mathbb{R})$ (see \cite{Sze_39}). The operators $H_1,\ldots,H_6$ are symmetric on $\mathcal{D}$. 
Moreover, $\mathcal{D}$ is also invariant relative to the action of  $H_1,\ldots,H_6$. Hence, every $t$-dependent Hamiltonian operator on $L^2(\mathbb{R})$ of the form
	\begin{equation}\label{Eq:Osc}
	H(t):=\sum_{\alpha=1}^6b_\alpha(t)H_\alpha\,,
	\end{equation}
where the functions $b_\alpha(t)$ are any real $t$-dependent functions, is a $t$-dependent Hamiltonian operator with a domain $\mathcal{D}$. 
 
 The $t$-dependent Hamiltonian \eqref{Eq:Osc} embraces $t$-dependent harmonic oscillators and other related systems \cite{CL_11}. Note that $V=\langle  \ii H_1,\ldots,  \ii H_6\rangle$ is a Lie algebra of skew-symmetric operators on $\mathcal{D}$.  Moreover, if one restricts to the functions $\psi_m(x):=x^me^{-x^2/2}$, where $m=0,1,2,3,\ldots,$, one finds after a simple calculation that these elements are analytic for  $H_4,H_5,H_6$. The constant bound for a vector $\psi_m$ according to the Definition \ref{def:analyticvector} of analytic vector could be, for instance $C_{\psi_m}= 2^{m+1}m!$, for all the operators $H_4,H_5,H_6$. Then, $\mathcal{D}$ consists of analytic vectors for $H_4,H_5,H_6$. Meanwhile, it can be proved that $H^p_m(x)e^{-x^2/2}$ are also analytic for $H_1,H_2,H_3$. Hence, the elements of $\mathcal{D}$ are analytic for $H_1,\ldots,H_6$. Therefore, $H_1,\ldots,H_6$ can be integrated giving rise to the so-called {\it metaplectic representation} of the {\it metaplectic group}.

 It is worth noting that $\mathcal{D}$ is not invariant relative to the operators of the form $\exp(i\lambda H_\alpha)$ for any real $\lambda$. For instance, $\exp(\lambda H_4)(x^me^{-x^2/2})=(x+\lambda)^me^{-(x+\lambda)^2/2}$ and the latter does not belong to $\mathcal{D}$. 	
 \demo
\end{example}
\begin{example}  Consider the space 
$$
\mathcal{D}=\{P(x,y,z)e^{-x^2/2-y^2/2-z^2/2}\in L^2(\mathbb{R}^3)\mid P(x,y,z)\ \text{is a polynomial}\}\,,
$$
and the operators
$$
\widehat{L}_x=yp_z-zp_y\,,\qquad
\widehat{L}_y=zp_x-xp_z\,,\qquad
\widehat{L}_z=xp_y-yp_x\,.
$$
Using an appropriate basis of polynomials in $x,y,z$, one can prove that $\mathcal{D}$ is a space of analytic vectors for the above operators.
\end{example}
$t$-dependent Schr\"odinger equations on $\mathcal{H}$ describe the dynamics of quantum systems and the elements of $\mathcal{H}$ represent its quantum states. Under the conditions given in Definition \ref{def:TDH} on the associated $H(t)$, we can prove that the $t$-dependent Schr\"odinger equation can be considered as a Hamiltonian system.

Using the diffeomorphism $\T\mathcal{H}=\mathcal{H}\oplus\mathcal{H}$, we can understand the solutions to \eqref{eq:Schr} as curves in $\mathcal{H}$ whose tangent vectors at $\psi(t)$ are given by $X(t,\psi(t))=(\psi(t),-\ii H(t)\psi(t))$. Note that this equality only makes sense in the space $\mathcal{D}$ which is included in the space of smooth vectors for all the operators $H_1,\ldots,H_r$. Recall also that if $\psi\in \mathcal{H}$ is smooth, then $\exp(tH_i)\psi$ is also smooth. This gives rise to a $t$-dependent vector field with a domain $X: (t,\psi)\in \mathbb{R}\times D\mapsto X(t,\psi)\in \T\mathcal{H}$. The $t$-dependent vector field on a domain, $X$, is well-defined since it is a linear combination with $t$-dependent functions of the vector fields related to the skew-symmetric operators related to $-\ii H_\alpha$, with $\alpha=1,\ldots,r$. 

Finally, Proposition \ref{Pro:VH} ensures that the $t$-dependent vector field $X$ is, for every $t\in \mathbb{R}$, a Hamiltonian vector field with Hamiltonian function $\frac12\langle\psi|H(t)\psi\rangle$. As a consequence, we obtain the following theorem.

\begin{theorem}\label{Pro:ST}
    A $t$-dependent Schr\"odinger equation related to a $t$-dependent Hamiltonian $H(t)$ is the differential equation for the integral curves of the $t$-dependent vector field on a domain associated with the $t$-dependent Hamiltonian system $(\mathcal{H},\omega,\frac12\langle\psi|H(t)\psi\rangle)$.
\end{theorem}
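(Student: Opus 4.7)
The statement essentially packages together what has already been prepared in Sections~\ref{sec:6} and \ref{sec:7}, so the proof I have in mind is a verification that each required ingredient matches, rather than a new analytic argument. The plan is to: (i) check that the right-hand side of \eqref{eq:Schr} defines a $t$-dependent vector field on the domain $\mathcal{D}$, (ii) identify it with the Hamiltonian vector field of $\tfrac12\langle\psi|H(t)\psi\rangle$ at each fixed $t$, and (iii) rewrite the Schr\"odinger equation as the equation for its integral curves.

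First, I would fix $t\in\mathbb{R}$ and invoke Definition~\ref{def:TDH}: $H(t)$ is symmetric on the common invariant domain $\mathcal{D}$ of analytic vectors, so $-\ii H(t)$ is skew-symmetric there and $-\ii H(t)(\mathcal{D})\subset\mathcal{D}$. Using the diffeomorphism $\T\mathcal{H}\simeq\mathcal{H}\oplus\mathcal{H}$ introduced in Example~\ref{ex:partialxfield} and Section~\ref{sec:6}, the assignment
\begin{equation*}
X_t:\psi\in\mathcal{D}\subset\mathcal{H}\longmapsto (\psi,-\ii H(t)\psi)\in\T_\psi\mathcal{H}
\end{equation*}
is a vector field on the domain $\mathcal{D}$, exactly as in the construction just preceding the statement. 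Letting $t$ vary, the $t$-dependence of the coefficient $H(t)$ (which, in the examples like \eqref{Eq:Osc}, is simply a linear combination with scalar $t$-dependent functions of a fixed finite-dimensional Lie algebra of skew-symmetric operators) yields a well-defined $t$-dependent vector field $X:\mathbb{R}\times\mathcal{D}\rightarrow\T\mathcal{H}$ on a domain.

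Second, for each fixed $t$, Proposition~\ref{Pro:VH} applied to the symmetric operator $H(t)$ with domain $\mathcal{D}$ gives
\begin{equation*}
\inn{X_t}\omega=\dd f_{H(t)}\,,\qquad\text{where}\qquad f_{H(t)}(\psi)=\tfrac12\langle\psi|H(t)\psi\rangle\,,
\end{equation*}
with the right-hand side understood in the sense of Lemma~\ref{Le:AF}, i.e.\ as the Fr\'echet differential defined on $\mathcal{D}$ and continuously extended to $\cT_\psi\mathcal{H}$. Hence $f_{H(t)}$ is an admissible Hamiltonian function for every $t$ and the triple $(\mathcal{H},\omega,\tfrac12\langle\psi|H(t)\psi\rangle)$ is a Hamiltonian system in the sense adapted to domains of Section~\ref{sec:7}, with Hamiltonian vector field $X_t$.

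Third, I would read off the integral curve equation. A curve $\psi:\mathbb{R}\rightarrow\mathcal{H}$ is an integral curve of $X$ exactly when $\d\psi/\d t=X(t,\psi(t))$; under $\T\mathcal{H}\simeq\mathcal{H}\oplus\mathcal{H}$ this becomes $\d\psi/\d t=-\ii H(t)\psi(t)$, which is \eqref{eq:Schr}. This closes the equivalence.

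The only delicate point, and the one I expect to be the genuine obstacle, is making sure that $X$ really is a $t$-dependent vector field on the domain $\mathcal{D}$ in the strict sense: that $\mathcal{D}$ is dense in $\mathcal{H}$, invariant under every $H(t)$, carries a normed-manifold structure inherited from $\mathcal{H}$, and admits the required smoothness of the inclusion $\mathcal{D}\hookrightarrow\mathcal{H}$ together with well-definedness of integral curves. All of these are precisely what is guaranteed by the assumption that $H(t)$ is a $t$-dependent Hamiltonian operator in the sense of Definition~\ref{def:TDH}: the invariance of $\mathcal{D}$ provides the codomain condition $X_t(\mathcal{D})\subset\T\mathcal{D}$, and the fact that $\mathcal{D}$ is a common domain of analytic vectors lets one invoke the integrability results of Section~\ref{sec:6} (FS$^3$ criterium and Nelson's theorem) so that genuine smooth integral curves on $\mathcal{D}$ exist and evolve via unitary operators on $\mathcal{H}$. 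Once these are in place, the proof is simply the three-step assembly above.
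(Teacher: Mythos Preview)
Your proposal is correct and follows essentially the same approach as the paper: the paper's argument, given in the paragraph immediately preceding the theorem, also identifies the Schr\"odinger equation with the integral-curve equation via the diffeomorphism $\T\mathcal{H}\simeq\mathcal{H}\oplus\mathcal{H}$, checks that the resulting $X$ is a well-defined $t$-dependent vector field on the common domain $\mathcal{D}$, and then invokes Proposition~\ref{Pro:VH} to conclude that each $X_t$ is Hamiltonian for $\tfrac12\langle\psi|H(t)\psi\rangle$. Your write-up is slightly more explicit about the domain conditions and the use of Lemma~\ref{Le:AF}, but the logical structure is the same.
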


The domain $\mathcal{D}$ of analytic vectors is not invariant relative to the exponential of the operators $H_1,\ldots,H_r$. This happens, for instance, in Example \ref{Ex:MP}. There are other many case where this can be assumed.
\begin{definition}
	A {\it quantum $t$-dependent Hamiltonian system} is a Hamiltonian system of the form $(\mathcal{H},\omega,\frac12\langle \psi| H(t)\psi\rangle)$, where $H(t)$ is a $t$-dependent Hamiltonian operator on $\mathcal{H}$.
\end{definition}
Hence, the above states that the $t$-dependent Schr\"odinger equations defined by a $t$-dependent Hamiltonian are the Hamilton equations of the $t$-dependent Hamiltonian.

\section{Reduction process for {\it t}-dependent Schr\"odinger equations}\label{sec:9}

This section describes the projection of a $t$-dependent Schr\"odinger equation determined by a $t$-dependent Hamiltonian operator, which satisfies appropriate technical conditions, onto a projective space via a Marsden--Weinstein reduction. 

Recall that the projective space, ${\mathcal{PH}}$, of a Hilbert space $\mathcal{H}$ is the space of equivalence classes of proportional non-zero elements of $\mathcal{H}$. Every equivalence class in ${\mathcal{PH}}$ is called a {\it ray} of $\mathcal{H}/\{0\}$. Physically, the projective space ${\mathcal{PH}}$ is what is really relevant, as different elements of $\mathcal{H}$ belonging to the same ray have the same physical meaning \cite{SN_20}. 
 
More particularly, let us apply the Marsden--Weinstein reduction to a $t$-dependent Schr\"odinger equation determined by a $t$-dependent Hamiltonian operator $H(t)$ on a Hilbert space $\mathcal{H}$ to describe its projection onto the projective space $\mathcal{PH}$ as a $t$-dependent Hamiltonian system relative to a reduced symplectic form. To accomplish the process for $t$-dependent Schr\"odinger equations related to unbounded operators, we  first define what an invariant function is. As in previous sections, to define it, one just considers an appropriate restriction of the infinite-dimensional Hilbert space to an appropriate normed domain, as it happens in quantum mechanics. With this respect, it is convenient to recall what we mean by a symmetry of a function relative to a Lie group action.

\begin{definition}\label{Def:GInv}
    Let $\Phi: G \times \mathcal{H}\rightarrow \mathcal{H}$ be a Lie group action and let $D$ be a dense complex subspace of $\mathcal{H}$. A $t$-dependent function $f: (t,\psi)\in \mathbb{R}\times D\mapsto f_t(\psi):=f(t,\psi)\in \mathbb{R}$ is said to be \textit{invariant} under $\Phi$ (or $G$-\textit{invariant} when $G$ is known from context) if $f_t\circ \Phi_g = f_t$ for all  $g\in G$ and every $t\in\mathbb{R}$.
\end{definition}

Implicitly, Definition \ref{Def:GInv}  implies that $\Phi_g(D)\subset D$ for every $g\in G$. Otherwise, if $\Phi_g(\psi)\notin D$ for certain $g\in G$ and $\psi\in D$, it would follow that $f(\Phi_g(\psi)) \neq f(\psi)$ since the left-hand side of the equality would not be defined. Naturally, one says that a $t$-independent function $f:D\rightarrow \mathbb{R}$ is $G$-invariant relative to a Lie group action $G\times \mathcal{H}\rightarrow \mathcal{H}$ if $\jmath^*f$, where  $\jmath:(t,\psi) \in \mathbb{R}\times D\mapsto \psi\in D$, is $G$-invariant in the sense given in Definition \ref{Def:GInv}.

\begin{example}
    Consider the multiplicative group  $U(1)=\{e^{\ii\theta}\! \enskip\! \vert\!\! \enskip \theta\in[0,2\pi)  \}$ and the Lie group action $\Phi: (e^{\ii \theta},\psi) \in U(1) \times \mathcal{H} \mapsto e^{\ii \theta}\psi\in \mathcal{H}$. Let us prove that if $H$ is a symmetric operator with domain $D(H)$, then its average function $f_H$ is $U(1)$-invariant. In fact, the domain $D(H)$ is a $\mathbb{C}$-linear subspace of $\mathcal{H}$. Moreover,
    \begin{equation*}
    	f_H \circ \Phi_{e^{{\rm i}\theta}} (\psi) = \langle e^{\ii\theta}\psi \vert H  e^{\ii\theta}\psi \rangle = e^{-\ii\theta}e^{\ii\theta} \langle \psi \vert H \psi \rangle = f_H(\psi)\,,\qquad \forall \psi\in D(H)\,.
	\end{equation*}
	Therefore, $f_H$ is invariant relative to  $\Phi_{e^{{\rm i}\theta}}$ for every $e^{{\rm i}\theta} \in U(1)$. Recall that since $H$ is symmetric, $f_H$ is a real function. Consequently, $f_H$ is a $U(1)$-invariant function. It is worth noting that, if $\psi\in D(H)$, then $e^{\ii \theta}\psi\in D(H)$ for every $\theta\in \mathbb{R}$ and $\Phi_g(\psi)\in D(H)$ for every $g = e^{{\rm i}\theta}\in U(1)$. Hence, $D(H)$ is invariant under $\Phi$. \demo
\end{example}

Let us now reduce a $t$-dependent Schr\"odinger equation \eqref{eq:Schr} associated with a $t$-dependent Hamiltonian operator $H(t)$ on a separable Hilbert space $\mathcal{H}$ onto its projective Hilbert space, $\mathcal{P}\mathcal{H}$, via a Marsden--Weinstein symplectic reduction. 

Consider global real-valued coordinates $\{q^j,p_j\}_{j\in\mathbb{N}}$ on $\mathcal{H}$ as mentioned in Section \ref{sec:5}. The initial data for the reduction is given by  
\begin{equation*}
    \omega:= \sum\limits_{j\in \mathbb{N}}\dd q^j \wedge \dd p_j\,, \qquad H(t):=\sum\limits_{\alpha=1}^r b_{\alpha}(t)H_{\alpha}\,,
\end{equation*}
where the $b_1(t),\ldots,b_r(t)$ are arbitrary $t$-dependent real-valued functions and $H_{\alpha}$, with $\alpha = 1,\dotsc,r$, are self-adjoint operators such that all elements of $\langle H_1, \ldots ,H_r \rangle$ share a common domain of analytic vectors $D\subset\mathcal{H}$. We stress that we do not require  $H_1,\ldots,H_r$ to be self-adjoint on $D$. Moreover, note that $\{\d q^j,\d p_j\}_{\mathbb{N}}$ is a basis of $\T^*\mathcal{H}$ and every element of each $\T^*_{(q,p)}\mathcal{H}$ is the limit of a sum of finite linear combinations of elements of $\{\d q^j,\d p_j\}_{\mathbb{N}}$. 

The $t$-dependent Schr\"{o}dinger equation \eqref{eq:Schr} induced by $H(t)$ defines a $t$-dependent Hamiltonian vector field on $\mathcal{H}$ with a domain $D\subset \mathcal{H}$ of the form 

\begin{equation*}
X(t,\psi):=\sum_{\alpha=1}^r b_{\alpha}(t) X_{\alpha}(\psi)\,,\qquad \forall \psi \in D\,,\qquad \forall t\in \mathbb{R},
\end{equation*}
such that $X_{\alpha}(\psi):=-\ii H_{\alpha}\psi$ for all $\alpha=1,\dotsc,r$ and $\psi\in D$. We will hereafter consider the reduction of $\omega$ to the projective space, and then the reduction of the $t$-dependent Schr\"{o}dinger equation associated with $H(t)$.

\subsection{The Marsden--Weinstein reduction of the symplectic form on \texorpdfstring{$\mathcal{H}$}{}}

Consider the Lie group action $\Phi: U(1) \times \mathcal{H}\ni (e^{\ii\theta \Id_{\mathcal{H}}},\psi) \mapsto e^{\ii \theta}\psi\in\mathcal{H}$. This Lie group action is symplectic relative to the symplectic form \eqref{Eq:Ex}, as proved in Example \ref{Ex:expaction}.

To obtain the momentum map associated with $\Phi$, we must relate every fundamental vector field of $\Phi$ to a Hamiltonian function.
The Lie algebra $\mathfrak{u}(1)$ of $U(1)$ consists of pure imaginary numbers, that is $\mathfrak{u}(1)=\{a\ii \mid a\in\mathbb{R}\}$. The fundamental vector field of $\Phi$ generated by $\ii\in\mathfrak{u}(1)$ reads
\begin{equation*}
    X_{\ii}(\psi) = \frac{\dd}{\dd t}\bigg\vert_{t=0}\Phi(e^{\ii t},\psi) = \ii \psi\,, \qquad \forall\psi\in\mathcal{H}\,.
\end{equation*}
This is the vector field associated with the skew-self-adjoint and bounded operator $A=\ii \Id$. In view of these results and Proposition \ref{Pro:VH}, an appropriate momentum map $J$  corresponding to the action $\Phi$ is given by
\begin{equation*}
J:\mathcal{H}\ni\psi \longmapsto - \frac{1}{2} \langle \psi \vert\psi\rangle\in \mathfrak{u}^*(1)\simeq \mathbb{R}^*\,,
\end{equation*}
so that the values of $J$ at every $\psi\in\mathcal{H}$ are proportional to the square of the norm of $\psi$. Let us prove that $J$ is smooth. The directional derivative of $J$ in the direction ${\psi}$ at $\psi_0$, namely $\dot\psi_{\psi_0}J$, becomes 
\begin{equation*}
\dot\psi_{\psi_0}J=\lim_{t\rightarrow 0} \frac{J(\psi_0+t\psi)-J(\psi_0)}t=-\mathfrak{Re}\langle \psi_0\vert \psi\rangle\,, \qquad \forall\psi_0\in \mathcal{H}\,, \quad \forall\dot{\psi}_{\psi_0}\in \T_{\psi_0}\mathcal{H}\,.
\end{equation*}
This gives rise to a linear and bounded Gateaux differential of $J$  on $\mathcal{H}$ for each $\psi_0\in\mathcal{H}$. Hence, $J$ is differentiable (see \cite{Col_12,Mar_68,Sch_22} for details). Let us write $\nabla_{\dot\psi}J$ for the partial derivative of $J$ in the direction $\dot\psi$. The second-order directional derivative of $J$ along $\phi$ and $\psi$ is constant: 
\begin{align*}
    [\nabla_{\dot\phi}\nabla_{\dot\psi}J]_{\psi_0}&= \lim_{t\rightarrow 0} \frac{1}{t}\big((\nabla_{\dot{\psi}} J)(\psi_0+t\phi)-(\nabla_{\dot{\psi}}J)(\psi_0)\big)\\
    &= -\lim_{t\rightarrow 0} \frac{1}{t}\big(\mathfrak{Re} \langle \psi_0 + t\phi \vert\psi\rangle -\mathfrak{Re}\langle \psi_0 \vert \psi \rangle\big) = -\mathfrak{Re}\langle \phi \vert \psi\rangle\,, \qquad \forall\psi_0\in\mathcal{H}\,,\quad \forall \dot\psi, \dot\phi \in \T_{\psi_0}\mathcal{H}\,.
\end{align*}
Hence, all second-order directional derivatives are constant and continuous relative to $\dot\psi,\dot\phi$. Higher-order derivatives are zero. Thus, $J$ is smooth. 

Let us show that $\Phi$ is $\Ad^*$-equivariant. Since $U(1)$ is an Abelian Lie group, its adjoint action $\Ad$ is trivial, that is $\Ad_g=\Id_{\mathfrak{u}(1)}$ for every $g\in U(1)$. Then, the coadjoint action is also trivial. Moreover, $J$ is invariant with respect to the action of $U(1)$ on $\mathcal{H}$, since unitary transformations do not change the norm of elements of the Hilbert space $\mathcal{H}$. Therefore,
\begin{equation*}
    J(\psi)=J\circ \Phi_{g}(\psi)=  \Id_{\mathfrak{u}^*(1)}\circ J(\psi)=(\Ad^*)_{g} \circ J(\psi)\,,\qquad\forall \psi\in \mathcal{H}\,,\quad \forall g\in U(1)\,.
\end{equation*}
Thus, the momentum map $J$ is $\Ad^*$-equivariant.

Let us prove that every $\mu<0$ is a regular value of $J$. Since $J$ is smooth, it is enough to prove that $\T_{\psi}J: \T_{\psi}\mathcal{H} \rightarrow \T_{J(\psi)}\mathbb{R}^*\simeq \mathbb{R}^*$ is surjective with a split kernel for all non-zero elements $\psi\in\mathcal{H}$ (see \cite{AMR_88}). The tangent map of $J$ at $\psi\in\mathcal{H}$ is $\T_{\psi}J(\dot\phi)={\dot\phi}_\psi J=-\mathfrak{Re}\langle \psi\vert \phi\rangle$.
Note that $\T_{\psi}J\neq 0$ if and only if $\psi\neq 0$. Therefore, $\Ima \T_{\psi}J=\T_{J(\psi)}\mathbb{R}^*$ for $\psi\neq 0$ and thus $\T_{\psi}J$ is surjective if and only if $\psi\in \mathcal{H}\backslash\{0\}$. Since $J$ is smooth, $\T_{\psi}J$ is a continuous map, $\ker \T_{\psi}J$ is a closed subspace of $\T_{\psi}\mathcal{H}$, which becomes a Hilbert space relative to $\mathfrak{Re}\langle \cdot\vert\cdot\rangle$. Moreover, $\T_{\psi}J$ has a split kernel, since $\ker\T_{\psi}J\oplus (\ker\T_{\psi}J)^{\perp} = \T_{\psi}\mathcal{H}$, where $(\ker\T_{\psi}J)^{\perp}$ is a closed subspace. Hence, all $\mu<0$ are regular values of $J$, and, for these values, we have the submanifolds
\begin{equation*}
J^{-1}(\mu)=\bigg\{\psi\in\mathcal{H}\mid -\frac{1}{2}\langle \psi\vert \psi \rangle = -\frac{1}{2}\sum\limits_{j\in\mathbb{N}}\left[q_j(\psi)^2+p^j(\psi)^2\right]=\mu\bigg\}\,,\qquad \forall \mu < 0\,.
\end{equation*}

As mentioned before, $(\Ad^*)_g = \Id_{\mathfrak{u}^*(1)}$ for every $g\in U(1)$. Therefore, the isotropy group at $\mu$ is $U(1)_{\mu}:=\{ g \in U(1) \mid  (\Ad^*)_{g}(\mu) = \mu\} = U(1)$. Let us prove that $U(1)$ acts freely and properly on $J^{-1}(\mu)$. The condition $\Phi_{e^{\ii \theta}}(\psi)=\psi$, for some $\theta\in [0,2\pi)$, implies that $\theta=0$, so that $e^{\ii\theta}=1$ is the identity element of $U(1)$ and, as a result, $\Phi$ is a free action. Moreover, $U(1)$ is homeomorphic to the group $S^1=\{z\in \mathbb{C} \mid  \vert z\vert =1 \}\subset \mathbb{C}\simeq \mathbb{R}^2$. Therefore, $U(1)$ is a compact Lie group. In view of Proposition \ref{prop:compactLiegroup}, the Lie group action $\Phi$ is also proper. Consequently, Theorem \ref{Iloc} ensures that
\begin{equation*}
P_\mu:=J^{-1}(\mu)/U(1)
\end{equation*}
is a smooth manifold for all $\mu<0$. Note that if $\pi_\mu:J^{-1}(\mu)\rightarrow P_\mu$ is such that the images of two elements of $J^{-1}(\mu)$ under $\pi_\mu$ are the same, both elements are the same up to a proportional complex constant with module one.  This amounts to saying that the images of two elements of $J^{-1}(\mu)$ are the same if their images under $\pi:\mathcal{H}_0=\mathcal{H}\setminus \{0\}\rightarrow \mathcal{P}\mathcal{H}$ are the same. As $\pi_\mu(J^{-1}(\mu))=\pi(\mathcal{H}_0)$, one can indeed identify $P_\mu=\mathcal{P}\mathcal{H}$ and see that  $\pi_{\mu}=\pi|_{J^{-1}(\mu)}$. Since the differentiable structure on $P_{\mu}$ is one making $\pi_{\mu}:J^{-1}(\mu)\rightarrow P_\mu$ into a differentiable submersion and $\pi_{\mu}=\pi|_{J^{-1}(\mu)}$, the smooth structure on $P_\mu$ is the one of $\pi(J^{-1}(\mu))=\mathcal{P}\mathcal{H}$. Hence, $P_\mu=\mathcal{PH}$  
is a smooth manifold for all $\mu<0$.   

Note that all assumptions of the Marsden--Weinstein reduction theorem are satisfied for $\mu<0$. We may now implement the reduction procedure for the symplectic form. Let us rewrite $\pi_{\mu}$ as
$$ \pi_{\mu}: J^{-1}(\mu)\ni\psi \longmapsto [\psi]\in \mathcal{PH}\,,$$
and consider the natural embedding $\jmath_{\mu}: J^{-1}(\mu)\hookrightarrow \mathcal{H}$. The symplectic form on the projective space $\mathcal{PH}$ is the only symplectic form on this latter manifold, let us say $\omega_\mu$, such that $\pi_{\mu}^*\omega_{\mu}=\jmath_{\mu}^*\omega$. In other words,
\begin{equation*}
    \omega_{\mu_{[\psi_0]}}\big({[\dot\psi]},{[\dot\phi]}\big) = \mathfrak{Im}\langle\psi\vert\phi\rangle\,, \qquad \forall\psi_0\in J^{-1}(\mu)\,,\quad \forall \dot\psi, \dot \phi \in \T_{\psi_0}(J^{-1}(\mu))\,.
\end{equation*}

\subsection{Marsden--Weinstein reduction of Schr\"odinger equations}

The vector fields $X_\alpha=X_{-\ii H_{\alpha}}$, with $\alpha=1,\ldots,r$, are tangent to $J^{-1}(\mu)$ at the points where they are defined. In fact, the self-adjointness of $H_\alpha$ implies that, on points of $D$, we have
\begin{equation*}
X_\alpha J(\psi)=-\frac{1}{2} X_{\alpha}\langle \psi \vert\psi\rangle = -\frac{\dd}{\dd t}\bigg\vert_{t=0} \langle \exp\left(-t\ii H_{\alpha}\right)\psi \vert \exp\left(-t\ii H_{\alpha}\right)\psi \rangle = 0\,.
\end{equation*}
Then, the restriction of $X_\alpha$ to $J^{-1}(\mu)$ is a well-defined vector field: it is not only tangent where defined, but it is defined on a subset that is dense in $J^{-1}(\mu)$. Let us prove this.

Since $D$ is dense in $\mathcal{H}$, for every $\psi\in J^{-1}(\mu)$ with $\mu\neq 0$, there exists a sequence $(\psi^D_n)_{n\in\mathbb{N}}$ of non-zero elements of $\mathcal{H}\backslash\{0\}$ that belong to $D$, such that $\psi^D_n \rightarrow \psi$. Note that 
$\|\psi\|^2=\langle \psi \vert \psi \rangle = -2 \mu=:k>0$. If $\psi^D_n \rightarrow \psi$, then $(\sqrt{k}\psi_n^D / \Vert \psi^D_n \Vert)_{n\in\mathbb{N}}$ is a sequence in $J^{-1}(\mu)\cap D$, such that $\sqrt{k}\psi_n^D / \Vert \psi^D_n \Vert \rightarrow \sqrt{k} \psi/\Vert\psi\Vert=\psi$. Consequently, $D_{\mu}:=J^{-1}(\mu)\cap D$ is dense in $J^{-1}(\mu)$ and $\overline{D_{\mu}}=J^{-1}(\mu)$.

The flow of $X_\alpha$, namely $F:(t,\psi)\in \mathbb{R}\times \mathcal{H}\mapsto F_t\psi:=\exp(-\ii tH_\alpha)\psi \in \mathcal{H}$, leaves invariant $J^{-1}(\mu)$. Additionally, it commutes with every $\Phi_g$, since each $\Phi_g$ is a multiplication by a phase and the $\exp(-t\ii H_\alpha)$ are linear mappings for every $t\in \mathbb{R}$. This gives rise to an induced flow on $P_\mu$ of the form $K:(t,[\psi])\in \mathbb{R}\times P_\mu\mapsto [F_t\psi]\in P_\mu$. Let us write $X_
\alpha^\mu$ for the restriction of $X_\alpha$ to $J^{-1}(\mu)$, which is tangent to this submanifold. Since the projection $\pi_{\mu}$ amounts to the restriction of $\pi$ to $J^{-1}(\mu)$, and $X^\mu_\alpha$ is the restriction of $X_\alpha$ to $J^{-1}(\mu)$, one obtains that the projection of $X^\mu_\alpha$ onto $J^{-1}(\mu)/U(1)_\mu$ is the projection of $X_\alpha$ on $\mathcal{H}_0$ onto $\mathcal{PH}$.  

To prove that the projection of $X^\mu_\alpha$ on $\mathcal{H}_0$ onto $P_\mu$ is a vector field $K_\alpha^\mu$, one has to prove that such a projection is well defined on a dense subset of $P_\mu$. In fact, Theorem \ref{Iloc} shows that $\pi_{\mu}: J^{-1}(\mu) \rightarrow J^{-1}(\mu)/U(1)_\mu\simeq \mathcal{PH}$ is a surjective submersion. Therefore,
\begin{equation*}
    \overline{\pi_{\mu}(D_{\mu})}=\pi_{\mu}(\overline{D_{\mu}})=\pi_{\mu}(J^{-1}(\mu))=J^{-1}(\mu)/U(1)\simeq \mathcal{PH}\,,
\end{equation*}
and $\pi_{\mu}(D_{\mu})$ is dense in $\mathcal{PH}$. We conclude that $K^\mu_{\alpha}$ is well-defined with a domain being a dense subset of $\mathcal{PH}$.

Let us prove that $K^\mu_\alpha$ is a Hamiltonian vector field on $P_\mu$ with a Hamiltonian function given by 
\begin{equation*}
    f_{{\alpha}}^{\mu}([\psi]) = \frac{1}{2} \langle \psi \vert H_{\alpha} \psi \rangle\,, \qquad  \qquad \forall \psi\in D(H_\alpha)\backslash\{0\}\,,\quad J(\psi)=\mu\,,
\end{equation*}
for $\alpha=1,\dotsc,r.$ 
In fact,
\begin{align*}
    \omega_{\mu}(K^\mu_\alpha,K^\mu_\beta)([\psi]) &= (\omega_{\mu})(\pi_{\mu*}X^\mu_\alpha, \pi_{\mu*}X^\mu_\beta)([\psi])=\pi^*_\mu\omega_\mu(X^\mu_\alpha,X^\mu_\beta)(\psi)\\
    &= \jmath_\mu^*\omega(X^\mu_\alpha,X^\mu_\beta)(\psi) = (\d\jmath_\mu^*f_\alpha)(X^\mu_\beta)(\psi) = (\d f^\mu_\alpha)(K^\mu_\beta)([\psi])\,.
\end{align*}
Then, $f_{{\alpha}}^{\mu}$ is the Hamiltonian function of $K^\mu_\alpha$. 

The above shows how the $t$-dependent vector field $X$ on the domain $D$ gives rise to the $t$-dependent vector field $X^{\mu}=\sum_{\alpha=1}^rb_\alpha(t)X^\mu_\alpha$ on $J^{-1}(\mu)\cap D$ projecting onto a $t$-dependent vector field $K^\mu=\sum_{\alpha=1}^rb_\alpha(t)K^\mu_\alpha$ on $\pi_\mu(J^{-1}(\mu)\cap D)\subset P^\mu$, whose integral curves are determined by the $t$-dependent projective Schr\"{o}dinger equation. 

Since $K^{\mu}$ is defined on a manifold $D_{\mu}$ modelled on a normed space, we cannot ensure straightforwardly the existence of its integral curves. Nevertheless, their existence is inferred from the existence of the integral curves of the $t$-dependent vector field $X$.

\section{Conclusions and Outlook}

This paper presents a careful symplectic geometric approach to the Schr\"odinger equations on separable Hilbert spaces determined through $t$-dependent unbounded self-adjoint Hamiltonian operators satisfying a quite general condition. The projection of such equations onto the projective space has been described as a Marsden--Weinstein reduction. The present paper describes carefully the necessary technical geometric and functional analysis details to accomplish the description of the results given. This has been accomplished in a  mainly differential geometric way and avoiding complicated functional analysis techniques.

There is much room for additional developments. In the future, we aim to study how to extend to the infinite-dimensional context several geometric techniques to deal with Schr\"odinger equations (see \cite{CCM_07,Jov_17}). Moreover, other types of Marsden--Weinstein reductions will be contemplated. We expect to analyse the Hamilton-Jacobi theory in infinite-dimensional symplectic manifolds.

\addcontentsline{toc}{section}{Acknowledgements}
\section*{Acknowledgements}
J. de Lucas acknowledges financial support by the project number 1082 of the University of Warsaw. J. Lange was partially financed by the project MTM2015-69124-REDT entitled `Geometry, Mechanics and Control' financed by the Ministerio de Ciencia, Innovaci\'on y Universidades (Spain). J. Lange and X. Rivas were partially financed by project IDUB with number PSP: 501-D111-20-2004310. J. de Lucas acknowledges a Simons-CRM professorship. X. Rivas acknowledges funding from J. de Lucas's CRM-Simons professorship, the Spanish Ministry of Science and Innovation, grants  PID2021-125515NB-C21, and RED2022-134301-T of AEI, and Ministry of Research and Universities of the Catalan Government, project 2021 SGR 00603, to accomplish a stay at the CRM which allowed him to complete some of the main parts of the present work.

\bibliographystyle{abbrv}
{\small
\bibliography{references.bib}

\end{document}